\renewcommand{\cite}{\citep}
\newcommand{\LINECOMMENT}[1]{\STATE \#\ #1}
\theoremstyle{plain}
\newtheorem{theorem}{Theorem}[section]
\newtheorem{lemma}[theorem]{Lemma}
\newtheorem{proposition}[theorem]{Proposition}
\newtheorem{remark}{Remark}
\newtheorem{open}[theorem]{Open Problem}
\theoremstyle{definition}
\newtheorem{definition}[theorem]{Definition}
\DeclareMathOperator{\argmax}{argmax}
\DeclareSymbolFont{AMSb}{U}{msb}{m}{n}
\DeclareMathSymbol{\N}{\mathord}{AMSb}{"4E}
\DeclareMathSymbol{\Z}{\mathord}{AMSb}{"5A}
\DeclareMathSymbol{\R}{\mathord}{AMSb}{"52}
\providecommand{\e}{\ensuremath{\mathrm{e}}\xspace}
\providecommand{\SetCard}[1]{\ensuremath{| #1 |}\xspace}
\providecommand{\SET}[1]{\ensuremath{\{ #1 \}}\xspace}
\providecommand{\Abs}[1]{\ensuremath{| #1 |}\xspace}
\providecommand{\PROB}{\ensuremath{{\rm Prob}}\xspace}
\providecommand{\Prob}[2][]{\ensuremath{%
\ifthenelse{\equal{#1}{}}{\PROB[#2]}{\PROB_{#1}[#2]}}\xspace}
\providecommand{\ProbC}[3][]{\Prob[#1]{#2\;|\;#3}}
\providecommand{\Expect}[2][]{\ensuremath{%
\ifthenelse{\equal{#1}{}}{\mathbb{E}}{\mathbb{E}_{#1}}%
\left[#2\right]}\xspace}
\providecommand{\Event}[2][]{\ensuremath{\ifthenelse{\equal{#1}{}}{\mathcal{#2}}{\mathcal{#2}_{{#1}}}}\xspace}
\providecommand{\SetCardB}[1]{\ensuremath{\bigl| #1 \bigr|}\xspace}
\providecommand{\SETB}[1]{\ensuremath{\bigl\{ #1 \bigr\}}\xspace}
\providecommand{\SetB}[2]{\ensuremath{\SETB{#1 ~\big|~ #2}}\xspace}
\providecommand{\ProbB}[2][]{\ensuremath{%
\ifthenelse{\equal{#1}{}}{\PROB\bigl[#2\bigr]}{\PROB_{#1}\bigl[#2\bigr]}}\xspace}
\providecommand{\ProbCB}[3][]{\ProbB[#1]{#2\;\big|\;#3}}
\providecommand{\AbsB}[1]{\ensuremath{\bigl| #1 \bigr|}\xspace}
\newcommand{\Indexed}[2]{\ensuremath{%
\ifthenelse{\equal{#1}{}}{#2}{%
\ifthenelse{\equal{#1}{'}}{#2'}{%
\ifthenelse{\equal{#1}{''}}{#2''}{%
\ifthenelse{\equal{#1}{'''}}{#2'''}{%
\ifthenelse{\equal{#1}{^}}{\hat{#2}}{%
\ifthenelse{\equal{#1}{-}}{\bar{#2}}{%
{#2}_{#1}}}}}}}}\xspace}
\newcommand{\WOMEN}{\ensuremath{W}\xspace}
\newcommand{\MEN}{\ensuremath{M}\xspace}
\newcommand{\WORKERS}{\ensuremath{W}\xspace}
\newcommand{\FIRMS}{\ensuremath{F}\xspace}
\newcommand{\AGENTS}{\ensuremath{A}\xspace}
\newcommand{\WOMAN}[1][]{\ensuremath{\Indexed{#1}{w}}\xspace}
\newcommand{\MAN}[1][]{\ensuremath{\Indexed{#1}{m}}\xspace}
\newcommand{\WORKER}[1][]{\ensuremath{\Indexed{#1}{w}}\xspace}
\newcommand{\FIRM}[1][]{\ensuremath{\Indexed{#1}{f}}\xspace}
\newcommand{\AG}[1][]{\ensuremath{\Indexed{#1}{a}}\xspace}
\newcommand{\UNMATCHED}{\ensuremath{\emptyset}\xspace}
\newcommand{\RWOMAN}[1][]{\ensuremath{\Indexed{#1}{X}}\xspace}
\newcommand{\RSET}[1]{\ensuremath{\mathcal{S}_{#1}}\xspace}
\newcommand{\CHAINS}[2][]{\ensuremath{%
\ifthenelse{\equal{#1}{}}{b_{#2}}{b_{#2}^{(#1)}}}\xspace}
\newcommand{\ALG}{\ensuremath{\mathcal{A}}\xspace}
\newcommand{\PREF}[2][]{\ensuremath{%
\ifthenelse{\equal{#1}{}}{\succ_{#2}}{\succ^{(#1)}_{#2}}}\xspace}
\newcommand{\Pref}[4][]{\ensuremath{#3 \PREF[#1]{#2} #4}\xspace}
\newcommand{\PREFH}{\ensuremath{\hat{\succ}}\xspace}
\newcommand{\PrefH}[2]{\ensuremath{#1 \PREFH #2}\xspace}
\newcommand{\PREFE}{\ensuremath{\succ}\xspace}
\newcommand{\PrefE}[2]{\ensuremath{#1 \PREFE #2}\xspace}
\newcommand{\PREFT}[1]{\ensuremath{\succ^*_{#1}}\xspace}
\newcommand{\PrefT}[3]{\ensuremath{#2 \PREFT{#1} #3}\xspace}
\newcommand{\RANKINGS}{\ensuremath{\mathcal{P}}\xspace}
\newcommand{\Rankings}[2][]{\ensuremath{%
\ifthenelse{\equal{#1}{}}{\mathcal{P}_{#2}}{\mathcal{P}^{(#1)}_{#2}}}\xspace}
\newcommand{\NumRankings}[2][]{\ensuremath{%
\ifthenelse{\equal{#1}{}}{r_{#2}}{\rho^{(#1)}_{#2}}}\xspace}
\newcommand{\NUMRANKINGS}[1][]{\ensuremath{%
\ifthenelse{\equal{#1}{}}{\rho}{\rho^{(#1)}}}\xspace}
\newcommand{\Capacity}[1]{\ensuremath{q_{#1}}\xspace}
\newcommand{\MaxCapacity}{\ensuremath{q}\xspace}
\newcommand{\GMATCH}[1][]{\ensuremath{%
\ifthenelse{\equal{#1}{}}{\mu}{\mu^{(#1)}}}\xspace}
\newcommand{\MATCH}[1][]{\ensuremath{%
\ifthenelse{\equal{#1}{}}{\mu}{\mu^{(#1)}}}\xspace}
\newcommand{\Partners}[2][]{\ensuremath{%
\ifthenelse{\equal{#1}{}}{\mu(#2)}{\mu^{(#1)}(#2)}}\xspace}
\newcommand{\Partner}[2][]{\ensuremath{%
\ifthenelse{\equal{#1}{}}{\mu(#2)}{\mu^{(#1)}(#2)}}\xspace}
\newcommand{\BLOCKING}[1][]{\ensuremath{%
\ifthenelse{\equal{#1}{}}{B}{B^{(#1)}}}\xspace}
\newcommand{\BWORKER}[1]{\ensuremath{w^{(#1)}}\xspace}
\newcommand{\BFIRM}[1]{\ensuremath{f^{(#1)}}\xspace}
\newcommand{\BWOMAN}[1]{\ensuremath{w^{(#1)}}\xspace}
\newcommand{\BMAN}[1]{\ensuremath{m^{(#1)}}\xspace}
\newcommand{\Constraint}[2]{\ensuremath{\langle #1, #2 \rangle}\xspace}
\newcommand{\CONSTRAINTS}{\ensuremath{\mathcal{C}}\xspace}
\newcommand{\Constraints}[2][]{\ensuremath{%
\ifthenelse{\equal{#1}{}}{\CONSTRAINTS_{#2}}{\CONSTRAINTS^{(#1)}_{#2}}}\xspace}
\newcommand{\SamplePO}{\ensuremath{\mathcal{S}}\xspace}
\newcommand{\GaleShapley}[1]{\ensuremath{\text{\textsc{Gale-Shapley}}\bigl(#1\bigr)}\xspace}
\newcommand{\SampleRanking}{\ensuremath{\text{\textsc{Sample-Representative-Preference}}}\xspace}
\newcommand{\SampleRankingMany}{\ensuremath{\text{\textsc{Representative-Preference-Large-Constraints}}}\xspace}
\newcommand{\SampleSampleRankingMany}{\ensuremath{\text{\textsc{Sample-Representative-Preference-Large-Constraints}}}\xspace}
\newcommand{\PrefFrac}[3][]{\ensuremath{%
\ifthenelse{\equal{#1}{}}{p_{#2,#3}}{p_{#2,#3}(#1)}}\xspace}
\newcommand{\PrefFracE}[2]{\ensuremath{\hat{p}_{#1,#2}}\xspace}
\newcommand{\RemElements}[1][]{\ensuremath{%
\ifthenelse{\equal{#1}{}}{U}{U^{(#1)}}}\xspace}
\newcommand{\LastElem}[1][]{\ensuremath{%
\ifthenelse{\equal{#1}{}}{\bar{z}}{\bar{z}^{(#1)}}}\xspace}
\newcommand{\LastFrac}[3][]{\ensuremath{%
\ifthenelse{\equal{#1}{}}{\ell_{#2,#3}}{\ell_{#2,#3}(#1)}}\xspace}
\newcommand{\LastFracE}[3][]{\ensuremath{%
\ifthenelse{\equal{#1}{}}{\hat{\ell}_{#2,#3}}{\hat{\ell}_{#2,#3}(#1)}}\xspace}
\newcommand{\Closure}[2]{\ensuremath{\overline{#2}^{#1}}\xspace} 
\begin{document}

\title{The Complexity of\texorpdfstring{\\}{ }Interactively Learning a Stable Matching by Trial and Error\thanks{A one-page abstract of this paper appeared in the Proceedings of the 21st ACM Conference on Economics and Computation (EC 2020). A one-minute flash video for this paper is available at: \mbox{\url{https://www.youtube.com/watch?v=9GjbPa24DwE}} . The authors thank Yu Cheng, Scott Duke Kominers, Noam Nisan, Assaf Romm, and Ran Shorrer for useful discussions, and anonymous reviewers for EC 2020 for helpful feedback. EE and DK were supported in part by NSF Grant IIS-1619458 and ARO MURI grant W911NF1810208.}~\thanks{In proof: after this paper appeared at EC'20, the authors of \citep{bei:chen:zhang:complexity} contacted us (personal communication, August 20, 2020) and made us aware of their paper, of which we had been unaware until that point. As they pointed out, some of the results that we have proven in the current paper were either already proven or similar to results already proven in their paper. See the newly added \cref{bei-chen-zhang} following our introduction for more details.}}
\date{September 16, 2020}

\author{Ehsan Emamjomeh-Zadeh\thanks{University of Southern California; \emph{e-mail}: \href{mailto:ehsan7069@gmail.com}{ehsan7069@gmail.com}.} \and Yannai A.\ Gonczarowski\thanks{Microsoft Research; \emph{e-mail}: \href{mailto:yannai@gonch.name}{yannai@gonch.name}.} \and David Kempe\thanks{University of Southern California; \emph{e-mail}: \href{mailto:david.m.kempe@gmail.com}{david.m.kempe@gmail.com}.}}

\maketitle

\begin{abstract}
In a stable matching setting, we consider a query model that allows for an interactive learning algorithm to make precisely one type of query: proposing a matching, the response to which is either that the proposed matching is stable, or a blocking pair (chosen adversarially) indicating that this matching is unstable. For one-to-one matching markets, our main result is an essentially tight upper bound of $O(n^2\log n)$ on the deterministic query complexity of interactively learning a stable matching in this coarse query model, along with an efficient randomized algorithm that achieves this query complexity with high probability. For many-to-many matching markets in which participants have responsive preferences, we first give an interactive learning algorithm whose query complexity and running time are polynomial in the size of the market if the maximum quota of each agent is bounded; our main result for many-to-many markets is that the deterministic query complexity can be made polynomial (more specifically, $O(n^3 \log n)$) in the size of the market even for arbitrary (e.g., linear in the market size) quotas.

\end{abstract}

\section{Introduction} \label{sec:introduction}
\paragraph{Problem and Background}

In the classic Stable Matching Problem \citep{gale:shapley}, there are $n$ \emph{women} and $n$ \emph{men}; each woman has a subset of men she deems acceptable and a strict preference order over them; similarly, each man has a subset of women he finds acceptable and a strict preference order over them. The goal is to find a \emph{stable matching}: a one-to-one mapping between mutually acceptable women and men that contains no \emph{blocking pair}, i.e., no pair of a woman and a man who mutually prefer each other over their current partner in the matching (or over being unmatched, if one or both are unmatched). In their seminal paper, \citet{gale:shapley} constructively proved that such a stable matching always exists.

The study of stable matchings within computer science was originated in \citeauthor{knuth:stable}'s famous series-of-lectures-turned-book \citep{knuth:stable}, in which Knuth focused on the stable matching problem as --- according to his foreword to the book --- an ideal introductory problem for the analysis of algorithms.
In this book, \citeauthor{knuth:stable}'s first order of business was that the most natural ``first attempt'' at an algorithm for finding a stable matching fails: if one repeatedly tries to resolve blocking pairs by matching the two participants of the pair with each other and matching their previous partners with each other, this may lead to a cycle.

Some decade and a half following \citeauthor{knuth:stable}'s book, \citet{roth:vandevate:paths} showed a positive result for a close variant of the algorithm, in which a blocking pair is resolved by matching the blocking pair with each other and leaving their previous partners unmatched; we refer to this algorithm as the \emph{myopic} algorithm.\footnote{Recall that Knuth's result is phrased for a slightly different version of this algorithm, which matches the two previous partners of the members of the blocking pair. Knuth's result, nonetheless, immediately implies the same result for the myopic algorithm of \citeauthor{roth:vandevate:paths}. It should be pointed out that without this change to the algorithm, the result of \citeauthor{roth:vandevate:paths} does not hold: there exists a matching from which there is no such ``path'' to a stable matching \citep{tamura:stable}.} \citeauthor{roth:vandevate:paths} showed that if the blocking pair to be resolved is chosen \emph{uniformly at random} among all blocking pairs (rather than adversarially),
then with probability one, the algorithm terminates in finite time, and thus yields a stable matching. \citeauthor{roth:vandevate:paths}'s result was later generalized by \citet{kojima:unver:paths} to the canonical ``many to many'' extension of the stable matching problem, in which each participant on each side of the market (the two sides of the market in this setting are customarily called \emph{firms} and \emph{workers}) can be matched to more than one participant on the other side, (e.g., up to a given \emph{quota}).\footnote{Generalizations to other models have been studied as well; see, e.g., \citet{chung:roommate,diamantoudi:miyagawa:xue:paths,klaus:klijn:paths}. See also \citet{ma:stable} for a related negative result.}

Another decade and a half later,
\citet{ackermann:goldberg:uncoordinated} focused on the number of iterations it takes the myopic algorithm to reach a stable matching, proving the following negative result:
for some preference structure, while convergence with probability one is assured,
it may take exponentially many iterations to reach a stable matching.
Their result applies already in the one-to-one setting, and thus of course extends to the many-to-many case as well.
  
Through the lens of learning theory, we can view this type of interaction as corresponding naturally to \citeauthor{angluin:queries-concept}'s \emph{Equivalence Query model} \citep{angluin:queries-concept} for learning a binary classifier, and its recent generalization due to \citet{OnlineLearning}.
In this interaction model, the learner proposes a candidate structure (here: a matching), and --- unless the proposal was correct (here: stable) --- learns of one local mistake (here: a blocking pair).
Because the queries are very coarse-grained compared to standard preference queries --- after all, a query comprises an entire matching and cannot target a specific comparison or even a specific participant --- we refer to the query model as the \emph{coarse query model}.
Viewed through this lens, the preceding results state that in the coarse query model, the myopic algorithm could take exponentially long to converge given randomly drawn (legal) responses, and might never converge given adversarial (legal) responses.
Taking this learning-theoretic perspective as our point of departure, in this paper, we ask whether more sophisticated \emph{interactive learning} algorithms in the coarse query model can be made to perform much better than the myopic algorithm, and if so, precisely to what extent.

Beyond a learning-theoretic interest,
our main question also has implications from a ``purely EconCS'' perspective. A famous quote by Kamal Jain \citep[see, e.g.,][]{papadimitriou:nash} questions the validity of market equilibria as a practical solution concept on the grounds that ``if your laptop cannot find it, neither can the market.'' This quote refers to a line of work that eventually culminated in showing that even finding approximate equilibria is PPAD-complete (and thus likely to require an exorbitant number of queries), not only in a decentralized manner, but even by a capable central planner. Most famously, this was shown for Nash equilibria \citep{daskalakis:nash,chen:deng:settling,rubinstein:approximate:nash}, and for Arrow-Debreu market equilibria \citep{chen:deng:settling,chen:dai:du:teng:settling,chen:paparas:yannakakis:markets,rubinstein:nash-inapproximability}, even with severely restricted utility functions.
A negative answer to our main question would therefore cast doubt on the possibility that a market designer with access only to coarse queries such as those of ``trial and error''\footnote{Let alone a decentralized market in which participants have limited information about their own preferences due to the costliness of learning them \cite[see, e.g.,][]{grenet:he:kuebler:decentralizing,narita:mismatch,hassidim:marciano:rommm:shorrer:truthful}.} could find a stable matching without the ability to ask each participant detailed questions about their preferences (like those that the Gale-Shapley algorithm requires). On the other hand, thinking of a blocking pair as ``minimal informative evidence'' of the instability of a matching, which is arguably what is naturally observed as a first sign of a matching being about to unravel\footnote{Whether due to the matching not being stable to begin with, due to preferences of participants changing, or due to participants learning more about their preferences following the announcement of the matching \cite{narita:mismatch}.}, a positive answer to our question would mitigate to some extent the negative message of \citet{ackermann:goldberg:uncoordinated}, providing evidence that stability can in fact be achieved even by a market designer looking at the market through a far coarser and less informative lens than usually considered.

\paragraph{One-to-one markets}
We start our analysis in the one-to-one setting of men and women, and consider adversarial responses to the queries of our algorithm. (The model is defined precisely in Section~\ref{sec:preliminaries}.) A trivial observation is that an exponential-query brute-force search over all matchings can find a stable matching. Some further consideration gives rise to the following algorithm: at every step, fix \emph{speculative} preferences for all participants; the only requirement is that these speculative preferences must be consistent with all of the information gathered so far about the participants' preferences. (In the first step, any preferences are consistent, as no information has been gathered yet.) The algorithm then proposes a matching that is stable with respect to these speculative preferences. If this matching is unstable, then the response to the query tells the algorithm that some pair $(w,m)$ is blocking; that is, that $w$ prefers $m$ to her assigned partner and that $m$ prefers~$w$ to his assigned partner. At least one of these two pairwise comparisons must contradict the speculative preferences and so constitutes new information upon which to base the new speculative preferences in the next step. Noticing that throughout the run of the algorithm, the algorithm can collect at most $\binom{n}{2}=O(n^2)$ pairwise comparisons for each participant's preferences, we conclude that this algorithm makes at most $O(n^3)$ unsuccessful queries, and therefore finds a stable matching in $O(n^3)$ queries (and in polynomial running time).

The above cubic algorithm essentially attempts to ``sort'' the preferences of each participant (unless it is ``interrupted'' by finding a stable matching). In the worst case, it uses quadratically many of its queries to ``sort'' each participant's preferences. While finding a stable matching in polynomially many queries is already appealing, encapsulating a quadratic-query sort within this algorithm is dissatisfying. The challenge with replacing this ``inner sort'' with an $O(n\log n)$-query sort is that unlike standard sorting algorithms, in the coarse query model, the next comparison is chosen not by the algorithm, but adversarially. It is therefore a priori unclear whether the query complexity of each such ``inner sort'' can be improved. Our first main result, proved in Section~\ref{sec:one-to-one}, is a positive answer to this question: it is possible, at the beginning of each step of the algorithm, to choose the speculative preferences in a way such that no more than $O(n\log n)$ comparisons are ever made for any one participant, implying the following theorem:

\begin{theorem}[Informal, see \cref{thm:learning-matching-iterations}\footnote{In proof: after this paper appeared at EC'20, the authors of \citep{bei:chen:zhang:complexity} contacted us (personal communication, August 20, 2020) and made us aware of their paper, of which we had been unaware until that point. As they pointed out, their paper contains a proof of \cref{thm:learning-matching-iterations}. While of the same order, the number of queries that our algorithm uses is smaller by a constant factor, and our proof is somewhat shorter, since we used \citeauthor{yu:proportional-transitivity}'s Proportional Transitivity Theorem, whereas they proved the result from first principles; however, neither this difference nor the difference in proofs is substantial. See the newly added \cref{bei-chen-zhang} following our introduction for more details.}]
\label{thm:informal-query-complexity-one-to-one}
In the coarse query model, the deterministic \emph{query} complexity of finding a stable matching is $O(n^2\log n)$.
\end{theorem}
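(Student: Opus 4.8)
The plan is to analyze the ``speculative preferences'' algorithm sketched above, equipped with a carefully chosen rule for selecting those speculative preferences at each step. For every participant $v$, maintain the poset $P_v$ on $v$'s acceptable partners given by the transitive closure of all of $v$'s pairwise preference comparisons learned so far (so initially $P_v$ is an antichain and its number $e(P_v)$ of linear extensions is $n!$). At each step, pick for each $v$ a linear extension $L_v$ of $P_v$, run the Gale--Shapley algorithm on the profile $(L_v)_v$ to obtain a matching $\mu$ stable with respect to $(L_v)_v$, and query $\mu$. If the response is a blocking pair $(w,m)$, then since $\mu$ is stable under $(L_v)_v$ but not under the true preferences, at least one of $w$ and $m$ --- say $w$ --- is such that $L_w$ ranks $\mu(w)$ above $m$ while $w$ truly prefers $m$ to $\mu(w)$. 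Since both ``$w$ prefers $\mu(w)$ to $m$'' and ``$w$ prefers $m$ to $\mu(w)$'' are consistent with $P_w$, the partners $\mu(w)$ and $m$ are incomparable in $P_w$, so this is a genuinely new comparison for $w$; adding it and re-closing removes from the count of linear extensions at least all those (including $L_w$ itself) that ranked $\mu(w)$ above $m$. (Incomplete preference lists can be accommodated by additionally recording acceptability information in $P_v$; nothing essential changes.)

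The engine of the proof is to pick each $L_v$ to be \emph{balanced}: a linear extension of $P_v$ such that for every pair $\{a,b\}$ left incomparable by $P_v$ and ordered by $L_v$ as $a$ before $b$, at most a $c$-fraction of the linear extensions of $P_v$ order $b$ before $a$, where $c<1$ is an absolute constant. As observed above, the comparison the adversary returns is always of this ``inverted relative to $L_v$'' form, so whenever a participant's poset is updated, its number of linear extensions is multiplied by at most $c$. Since $e(P_v)\le n!$ always and $e(P_v)\ge 1$, each of the $2n$ participants can be updated at most $\log_{1/c}(n!)=O(n\log n)$ times. Every unsuccessful query updates at least one participant, so the algorithm makes at most $O(n^2\log n)$ unsuccessful queries before a query succeeds, at which point it has found a stable matching; this is the claimed bound.

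The one substantial ingredient is that an arbitrary finite poset $P$ always admits a balanced linear extension, and this is where I would invoke \citeauthor{yu:proportional-transitivity}'s Proportional Transitivity Theorem. The naive idea --- for each pair $\{a,b\}$ incomparable in $P$, place $a$ before $b$ whenever $a$ precedes $b$ in a majority of the linear extensions of $P$, then take any linear extension respecting all of these choices together with $P$ --- fails, because these majority orders can be cyclic already on three elements, so no such linear extension need exist. The remedy is to be conservative: orient $\{a,b\}$ as ``$a$ before $b$'' only when $a$ precedes $b$ in strictly more than a $c$-fraction of the linear extensions of $P$, and let $R$ be the relation consisting of these orientations together with $P$. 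Any linear extension $L$ of $R$ refines $P$, and for every forward pair $\{a,b\}$ of $L$ we cannot have $b\mathrel{R}a$, so $b$ precedes $a$ in at most a $c$-fraction of the linear extensions of $P$ --- exactly balancedness. What must be checked is that $R$ is acyclic: elementary counting (in any single linear extension, the cyclically arranged events ``$x_i$ precedes $x_{i+1}$'' cannot all hold, so their probabilities sum to less than their number) disposes of the shortest cycles once $c$ is large enough, but not of longer ones, and it is exactly here that the Proportional Transitivity Theorem --- showing, for the appropriate constant $c$, that this ``$c$-majority'' relation is transitive --- does the real work. I expect this combinatorial step, pinning down the right constant $c$ and extracting transitivity (hence acyclicity of $R$) from the Proportional Transitivity Theorem, to be the main obstacle; the remainder is routine bookkeeping around Gale--Shapley and the blocking-pair responses.
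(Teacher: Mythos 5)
Your proposal is correct and takes essentially the same approach as the paper: your ``balanced'' linear extensions are precisely the paper's ``$\alpha$-representative preference orders'' (with $\alpha=c$), and the key step --- that the ``$c$-majority'' relation is acyclic and hence admits a linear extension --- is exactly what the paper extracts from Yu's Proportional Transitivity Theorem in its existence lemma for representative orders. Your per-participant counting of at most $O(n\log n)$ updates each is an equivalent reformulation of the paper's global scenario count and yields the same $O(n^2\log n)$ bound.
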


The main combinatorial tool that we leverage in our proof of this theorem is \citeauthor{yu:proportional-transitivity}'s remarkable Proportional Transitivity Theorem \citep{yu:proportional-transitivity}.
Implementing the resulting strategy for finding the ``right'' speculative preferences to fix is computationally non-trivial;
indeed, note that \cref{thm:informal-query-complexity-one-to-one} only expresses guarantees about the number of rounds of interaction, but not about the computational complexity.
Fortunately, a Markov chain-based algorithm due to \citet{huber:linear-extensions} for sampling linear extensions of a partial order can be leveraged to obtain ``good'' speculative preferences efficiently (i.e., in running time polynomial in $n$), up to negligible probability of failure:

\begin{theorem}[Informal, see \cref{thm:learning-matching-time}\footnote{In proof: after this paper appeared at EC'20, the authors of \citep{bei:chen:zhang:complexity} contacted us (personal communication, August 20, 2020) and made us aware of their paper, of which we had been unaware until that point. As they pointed out, their paper contains a proof of \cref{thm:learning-matching-time}. While of the same order, the number of queries that our algorithm uses is smaller by a constant factor, and our proof is somewhat shorter, since we used \citeauthor{yu:proportional-transitivity}'s Proportional Transitivity Theorem, whereas they proved the result from first principles; however, neither this difference nor the difference in proofs is substantial. See the newly added \cref{bei-chen-zhang} following our introduction for more details.}]
In the coarse query model, a stable matching can be \emph{efficiently} found with high probability using $O(n^2\log n)$ queries.
\end{theorem}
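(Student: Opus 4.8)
The plan is to upgrade the deterministic algorithm behind \cref{thm:learning-matching-iterations} to an efficient randomized one by replacing its single computationally expensive step --- the choice of speculative preferences --- with a sampling-based approximation, and then to verify that this approximation is accurate enough to preserve the $O(n^2\log n)$ query bound with high probability. Recall the structure of that algorithm: it maintains, for each participant, the partial order $P$ of all pairwise comparisons revealed so far (each of which is true); in each round it commits to a speculative preference profile (one linear extension of $P$ per participant), proposes a matching stable with respect to it (computable by the Gale--Shapley algorithm in polynomial time), and on receiving a blocking pair records the newly revealed true comparisons (at least one per blocking pair, since the proposed matching is stable for the speculative profile). The only non-algorithmic ingredient is the choice, for each participant, of a linear extension $L$ of $P$ such that \emph{every} pair with $i <_L j$ that is incomparable in $P$ satisfies $p_{ij}\ge\tfrac12$, where $p_{ij}$ is the fraction of linear extensions of $P$ in which $i$ precedes $j$; such an $L$ (one that agrees with the majority direction on every incomparable pair) exists by \citeauthor{yu:proportional-transitivity}'s Proportional Transitivity Theorem, which yields that ``$p_{ij}\ge\tfrac12$'' is a transitive relation and hence extends to a total preorder consistent with $P$. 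Choosing $L$ this way forces every blocking pair to cut the number of linear extensions of some participant's partial order in half (the freshly revealed comparison is the minority direction on a previously incomparable pair), so each participant accounts for at most $\log_2(n!)=O(n\log n)$ unsuccessful queries and the total is $O(n^2\log n)$.

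Since evaluating $p_{ij}$ exactly is \#P-hard (it amounts to counting linear extensions), I would instead invoke \citeauthor{huber:linear-extensions}'s polynomial-time almost-uniform sampler for linear extensions of a poset: drawing $\mathrm{poly}(n)$ samples and taking empirical frequencies gives estimates $\hat p_{ij}$ with $|\hat p_{ij}-p_{ij}|\le\delta$ for all pairs at once, with failure probability at most $n^{-c}$ for any constant $c$ of our choosing, by a Chernoff bound and a union bound together with setting the sampler's total-variation accuracy to $\delta/2$. Given the estimates, form $P' := P\cup\{(j,i):\hat p_{ij}<\tfrac13\}$, adding the ``confident'' comparisons. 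If the estimates are accurate then $P'$ is acyclic --- every added pair has $p_{ji}>\tfrac12$, and by Yu's transitivity these pairs together with $P$ lie in a common total preorder --- and then any linear extension $L$ of $P'$ is a linear extension of $P$ in which every $P$-incomparable backward pair has $\hat p_{ij}\ge\tfrac13$, hence $p_{ij}\ge\tfrac13-\delta$. (Should inaccurate estimates make $P'$ cyclic, just drop the offending added pairs and fall back to an arbitrary linear extension of $P$; this affects only the analysis, not correctness.) Consequently, on the event that all estimates throughout the run are accurate, every unsuccessful query shrinks some participant's count of linear extensions by a factor of at most $\tfrac23+\delta<1$, and the same potential argument as in the deterministic case bounds the number of unsuccessful queries by $O(n\log n)$ per participant, i.e.\ $O(n^2\log n)$ in total.

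For efficiency, note that each round costs, per participant, $\mathrm{poly}(n)$ runs of the sampler (each $\mathrm{poly}(n)$ time), a topological sort to build $P'$ and extract $L$, and one run of Gale--Shapley --- all polynomial; and since each unsuccessful query records a genuinely new true comparison (the speculative profile is always a valid linear extension of the current partial orders), the algorithm always terminates, within $O(n^3)$ rounds unconditionally and within $O(n^2\log n)$ rounds with high probability. The union bound is arranged so that it need only cover the first $O(n^2\log n)$ rounds: conditioning on all estimates there being accurate already forces termination within that many rounds. Over those rounds there are $O(n)$ participants and $O(n^2)$ pairs each, i.e.\ $O(n^5\log n)$ estimates; making each fail with probability at most $n^{-7}$ (an extra $O(\log n)$ factor in the sample count) renders the overall failure probability $o(1)$. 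I expect the crux to be the middle step --- arguing that the \emph{approximate} mutual-rank probabilities still let us extract a linear extension against which every adversarial blocking pair yields constant-factor shrinkage --- together with the bookkeeping that lets one union bound cover an entire run whose very length is what is being bounded; once both are handled, the query bound of \cref{thm:learning-matching-iterations} transfers essentially verbatim.
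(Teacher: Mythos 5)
Your high-level plan is the same as the paper's: replace the expensive step of finding a representative linear extension by drawing Huber samples, form the empirical relation of ``confident'' comparisons, topologically sort it, and then account separately for the (rare) rounds where estimates are bad. However, there is a genuine gap at the exact point you yourself flag as the crux: the thresholds you chose are too loose for Yu's Proportional Transitivity Theorem to apply.

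You assert that ``$p_{ij}\ge\tfrac12$'' is a transitive relation, citing Yu's theorem, and later that $P'=P\cup\{(j,i):\hat p_{ij}<\tfrac13\}$ is acyclic because ``every added pair has $p_{ji}>\tfrac12$, and by Yu's transitivity these pairs together with $P$ lie in a common total preorder.'' But Yu's theorem (the paper's \cref{thm:proportional-transitivity}) only guarantees transitivity of the relation $p_{ij}\ge\phi$ for $\phi\ge\phi_{\min}\approx 0.78$. There is no such guarantee at threshold $\tfrac12$, and indeed it is known that the strict linear-extension majority relation of a poset can contain cycles (Fishburn's ``linear extension majority cycles''). So a $\tfrac12$-representative linear extension of $P$ need not exist, and your $P'$ can be cyclic even when all estimates are perfectly accurate, in which case your fallback (``drop the offending added pairs'') just yields an arbitrary linear extension of $P$ and you lose the constant-factor shrinkage guarantee --- not occasionally, but systematically, for any poset exhibiting a majority cycle. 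Your threshold of $\hat p_{ij}<\tfrac13$ only certifies $p_{ji}>\tfrac23-\delta$, which is still below $\phi_{\min}$. The fix is simply to move the threshold above $\phi_{\min}$: the paper uses $\hat p\ge 0.85$ with additive accuracy $0.05$, so that any added pair has $p\ge 0.8>\phi_{\min}$, Lemma~\ref{lem:representative-exists} guarantees acyclicity under the good event, and the resulting extension is $0.9$-representative, giving a shrinkage factor of $0.9$ per unsuccessful query. With this change the rest of your argument (Hoeffding $+$ union bound over pairs, pessimistically charging zero progress to bad rounds, Chernoff to show enough good rounds occur within $\Theta(n^2\log n)$, union-bounding the sampler running time) matches the paper's proof of \cref{thm:learning-matching-time}.

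One smaller point: you invoke Huber's sampler as ``almost-uniform'' and compensate via a total-variation budget of $\delta/2$. The paper instead uses Huber's \emph{exact} uniform sampler (\cref{thm:huber-theorem}), which keeps the Hoeffding step clean; the paper notes in a footnote that an approximately-uniform chain with small enough error would also work, so your variant is fine, just slightly more bookkeeping.
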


Can this quadratic (up to the $\log$ factor) query complexity be improved? The existing literature on the query and communication complexity of stable matchings \citep{ng:hirschberg:stable,chou:lu:stable,segal:communication,gonczarowski:nisan:ostrovsky:rosenbaum} provides some very robust lower bounds on the complexity of finding a stable matching. However, these are all obtained in a model in which each query is to the preference list of one participant, or more generally involves only preference lists of participants on one of the sides of the market.\footnote{Those papers at their core are interested in multi-party computation of a stable matching from distributed input, rather than learning a stable matching by an outside agent who can query the entire market (albeit in a very limited way).} The most general of these lower bounds \citep{gonczarowski:nisan:ostrovsky:rosenbaum} states that $\Omega(n^2)$ queries are required to find a stable matching. While queries in that model are very flexible, the theorem of \citet{gonczarowski:nisan:ostrovsky:rosenbaum} that proves this bound also proves that in that model, even verifying the stability of a proposed matching requires quadratically many queries. This is in contrast with our coarse query model, in which verifying the stability of a proposed matching takes one query. In general, though, queries in any of the models of those papers are more powerful and flexible than in our coarse query model, in the sense that in those models, a query can specify, for example, a precise pairwise comparison to be revealed. By contrast, queries in our coarse query model are answered adversarially.\footnote{Thus, the relation between the models is that one query in our coarse query model can be simulated by $n^2$ queries in the models of those papers, but no number of queries in our coarse query model suffices to simulate a single query in any of their models. Therefore, an upper bound in any of the models of those papers cannot be translated to any upper bound in our coarse query model, and an $\Omega(n^2)$ lower bound in any of the models of those papers translates only to a trivial $\Omega(1)$ lower bound in our coarse query model.} Given these differences between the models, it is quite curious that in both models, the best known upper bound (up to $\log$ factors) is quadratic. Completing the picture, we present a quadratic lower bound in our coarse query model as well:

\begin{proposition}[Informal, see \cref{prop:lower-bound}\footnote{In proof: after this paper appeared at EC'20, the authors of \citep{bei:chen:zhang:complexity} contacted us (personal communication, August 20, 2020) and made us aware of their paper, of which we had been unaware until that point. As they pointed out, their paper contains a theorem similar to \cref{prop:lower-bound}. Our result is somewhat stronger by allowing preferences on one side to be known and identical, and preferences on the other side to be i.i.d.\ random, whereas theirs is a worst-case result; however, the qualitative message is similar. See the newly added \cref{bei-chen-zhang} following our introduction for more details.}]
  In the coarse query model, the query complexity of finding a stable matching is $\Omega(n^2)$. This holds even for the number of expected queries by a randomized algorithm for surely finding a stable matching when the preferences of participants on one side of the market are identical and known, and the preferences of the participants on the other side are i.i.d.\ uniform.
\end{proposition}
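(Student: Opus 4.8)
The plan is to exhibit a hard instance and show that it forces a counting bottleneck which cannot be bypassed by guessing. Let all $n$ men share a single, publicly known strict preference order over the women, say $w_1 \succ w_2 \succ \cdots \succ w_n$, and let each woman independently draw a uniformly random complete strict preference order over the men. Two structural facts are immediate. First, this market has a \emph{unique} stable matching $\mu^*$, produced by the serial dictatorship that processes the women in priority order and gives each her most preferred still-available man (if $w_i$ failed to get her favorite available man, she and that man would block). Second, a perfect matching $\mu$ is stable if and only if $w_i$ prefers $\mu(w_i)$ to $\mu(w_j)$ for every $i<j$: with a common men's order, a pair $(w_i,m)$ blocks $\mu$ exactly when $w_i$ prefers $m$ to $\mu(w_i)$ and $m$'s partner has lower priority than $w_i$.

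I would then record two ``query economy'' observations. First, a single non-terminal query yields at most one new pairwise comparison, and that comparison lies in one woman's preference list: a response saying that $(w_i,m)$ blocks the proposed matching conveys only ``$w_i$ prefers $m$ to her proposed partner'', the men's half of the blocking condition being already known; a ``stable'' response ends the interaction. Second, when the algorithm halts it must be \emph{certain} of $\mu^*$: either it has obtained a ``stable'' confirmation for a proposed matching (which must then be $\mu^*$), or the set $\mathcal{C}$ of comparisons it has collected must single out $\mu^*$ among all preference profiles consistent with the transcript --- for otherwise one could build a consistent alternative profile together with an adversary replaying the very same responses (each reported pair is still blocking there, each proposed matching still unstable there) whose unique stable matching differs from the algorithm's output, contradicting the requirement of surely producing a stable matching.

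The core is a dichotomy evaluated at the halting time. If the algorithm halts without ever receiving a ``stable'' confirmation, then $\mathcal{C}$ alone must pin $\mu^*$, which by the serial-dictatorship structure forces $\Omega(n^2)$ comparisons: $\mathcal{C}$ restricted to $w_1$ must determine $w_1$'s favorite among all $n$ men, needing $\ge n-1$ comparisons (one man must dominate the other $n-1$ in the transitive closure of the collected comparisons); then, with $\mu^*(w_1)$ fixed, $\mathcal{C}$ restricted to $w_2$ must determine $w_2$'s favorite among the remaining $n-1$ men, needing $\ge n-2$ more; and so on, for a total of $\ge \sum_{i=1}^{n}(n-i)=\binom{n}{2}$ comparisons, hence $\ge\binom{n}{2}$ queries. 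If instead the algorithm obtained a ``stable'' confirmation, it must have \emph{proposed} $\mu^*$ at some query; but, over the random instance, this is very unlikely to occur early. Because the women's preferences are independent and the transcript reveals only per-woman comparisons, the conditional distribution of $\mu^*$ factors over the women, and with $c$ comparisons revealed in total its largest atom is at most $\prod_i (c_i+1)/(n-i+1) \le (c/n+1)^n/n!$ (where $c_i$ is the number of comparisons about $w_i$ and $\sum_i c_i = c$, the bound being optimized by the equal allocation); for $c$ up to a small constant times $n^2$ this is $2^{-\Omega(n)}$, so any particular proposed matching equals $\mu^*$ with probability $2^{-\Omega(n)}$, and a union bound over the first $\Theta(n^2)$ queries shows the algorithm halts that early with probability $o(1)$. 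Combining the two cases, any deterministic algorithm has expected number of queries $\Omega(n^2)$ on the random instance; averaging over the internal randomness of a randomized algorithm (a mixture of deterministic ones) preserves this, which establishes the proposition.

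The step I expect to be the main obstacle is making the second case fully rigorous, since it rests on the premise that the adversary's responses leak nothing beyond the single forced comparison per query. A careless adversary ruins this: one that always reports the pair $(w_i,\mu^*(w_i))$ at the first disagreement index $i$ effectively discloses one coordinate of $\mu^*$ per query, collapsing the bound to $O(n)$. The fix is to design the adversary to exploit its freedom in choosing among the (many) legal blocking pairs of any proposed $\mu_t\neq\mu^*$ --- randomizing this choice if necessary --- so that no response reveals any woman's favorite, or any coordinate of $\mu^*$, and the conditional law of each woman's preference stays uniform subject only to the comparisons explicitly revealed about her; one then must verify that such a ``safe'' response is always available, for which the serial structure of $\mu^*$ and the abundance of blocking pairs should suffice.
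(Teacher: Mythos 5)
Your high-level plan is the same as the paper's: fix a hard instance where one side has identical, known preferences (yours assigns these to the men, the paper to the women; the two are symmetric), let the other side be i.i.d.\ uniformly random, note that the unique stable matching is given by serial dictatorship, and argue that adaptive queries cannot home in on it much faster than coordinate-by-coordinate exhaustive search. You also correctly identify, in your final paragraph, the crux of the argument: a naive or careless adversary can leak an entire coordinate of $\mu^*$ per query, collapsing the bound to $O(n)$; the adversary must therefore be chosen carefully so each response reveals essentially only that the current guess for the current dictator's partner is wrong.

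However, this is precisely where your proof has a genuine gap, not a loose end: the statement that a ``safe'' adversary exists and that it keeps the conditional law of each dictator's preference suitably uniform is asserted but not proved. The bound $\prod_i (c_i+1)/(n-i+1)$ on the largest conditional atom of $\mu^*$ given the transcript is not derived --- it is exactly the kind of claim that depends delicately on what the adversary reveals, and you give no construction that guarantees it. The paper's proof consists almost entirely of closing this gap: it specifies a concrete adversary that, whenever the algorithm errs for the current dictator $m_i$, reveals the \emph{immediate predecessor} (in the already-revealed chain structure on the remaining candidates) of the currently proposed partner, so that the revealed constraints for each man always form a union of disjoint chains; then, using the Principle of Deferred Decisions, it shows that the coin flips the adversary makes exactly realize i.i.d.\ uniform preferences, and that in each round the probability of guessing the current dictator's correct partner is $1/b_i^{(t)}$, the reciprocal of the number of chains. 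For $i \le n/3$ and fewer than $n/3$ revealed constraints about $m_i$, this probability is at most $3/n$, yielding $\Omega(n)$ expected rounds per dictator and $\Omega(n^2)$ total. Your proposal names the obstacle but does not supply this construction or its verification.

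A secondary but real issue: your Case~1 (the algorithm halts without a ``stable'' confirmation and must therefore have pinned down $\mu^*$ from comparisons alone) is vacuous in the paper's model. The interaction terminates only when a proposed matching is confirmed stable, so the final proposal \emph{is} a query and Case~2 is the only case. Your Case~1 counting (summing $n-i$ over the women) would in any event require showing that the adversary can force all those comparisons to be non-redundant, which circles back to the same unresolved adversary-design problem. In short: right instance, right intuition, and you correctly flag the hard step, but that step is the proof, and it is missing.
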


The proof of this lower bound is less elaborate than those of the quadratic lower bounds of the above-mentioned papers. This provides further evidence regarding the weakness/coarseness of the queries that are allowed in our model, in some sense making the $O(n^2 \log n)$ upper bound more surprising. The remaining gap between our upper and lower bounds, while in a very different model than that in which \citet{gonczarowski:nisan:ostrovsky:rosenbaum} presented their gap, is between the exact same bounds of $\Omega(n^2)$ and $O(n^2 \log n)$. Both gaps, in very different technical senses, may be thought of as leaving open the interesting question of whether the number of ``atomic'' queries required to find a stable matching can be significantly less than the $\Theta(n^2\log n)$ bits of information that are needed to encode the preference lists of all participants.

\paragraph{Many-to-many markets}
Finally, in Section~\ref{sec:many-to-many}, we generalize our results not only to many-to-one markets, but in fact ``all the way'' to many-to-many matching markets \citep{roth:job:matching}. We consider a market with firms on one side and workers on the other, where each participant has \emph{responsive preferences} \citep{roth:responsive}. That is, each participant has a strict preference order over the participants on the other side of the market, as well as a \emph{quota}: the maximum number of participants from the other side of the market to which this participant can be matched. The coarse query model that we consider in this setting is similar to the coarse query model from the one-to-one setting: an interactive learning algorithm may propose a matching, and the response is either that this matching is (pairwise) stable, or a blocking pair. However, the revelation of a blocking pair only reveals that the firm prefers the worker to \emph{some} worker it is currently matched with (but does not identify this worker), and that the worker prefers the firm to \emph{some} firm she is currently matched with (but does not identify this firm). Thus, the information given to the algorithm is even coarser than in the one-to-one matching case. It would of course also be natural --- and the appropriate choice for some applications --- to consider a model that reveals the discarded firm/worker; as we discus in \cref{many-many-with-discard-information}, our analysis from the one-to-one case applies to such a model without any need for adjustment. We focus on the even coarser model that does not reveal the discarded partners in order to keep in line with our motivation above: focusing on the weakest/coarsest model in which the algorithm still naturally obtains information about the underlying preferences.\footnote{Had the response specified the identity of the firm to which the blocking worker prefers the blocking firm (and vice versa), an approach identical to the one from the one-to-one case would have sufficed. Keeping this firm/worker unnamed breaks the well-known isomorphism \citep{roth:sotomayor:book} between one-to-one and many-to-one markets: even many-to-one markets already exhibit the full difficulty of the learning problem in many-to-many markets.}
In this model, a single response does not definitively imply even a single pairwise comparison in the preferences of any participant. It therefore becomes much less clear whether or how exponentially many queries (and exponential running time) can be avoided. Indeed, this problem bears a considerable resemblance to a problem that we show to be NP-hard (see \cref{prop:NP-hardness-topological}). Some further thought, though, shows that the above $O(n^3)$ algorithm for the one-to-one setting can be conceptually generalized to obtain an $n^{O(\MaxCapacity)}$ algorithm, where $\MaxCapacity \leq n$ is the maximum quota of any participant:

\begin{theorem}[Informal, see \cref{thm:iterations-simple-choice}]
In the coarse query model, the deterministic query complexity of finding a stable matching is $n^{O(\MaxCapacity)}$. Furthermore, a stable matching can be found in time $n^{O(\MaxCapacity)}$ using such a number of queries.
\end{theorem}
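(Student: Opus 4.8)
The plan is to generalize the cubic one-to-one algorithm sketched above. For each agent $a$, maintain a set $\mathcal{C}_a$ of \emph{disjunctive constraints}, each of the form $\langle x, S \rangle$ where $x$ is an agent on the opposite side of the market, $S$ is a subset of that side with $x \notin S$ and $|S| \le q$, read as ``$a$ prefers $x$ to at least one member of $S$''. Every constraint the algorithm will ever record is a true statement about $a$'s actual preferences, so the recorded constraints will always be jointly satisfiable, witnessed by the true preferences themselves. At each step the algorithm will: (i) compute, for every $a$, a \emph{speculative} strict order $\succ^{\mathrm{spec}}_a$ consistent with $\mathcal{C}_a$; (ii) run Gale-Shapley deferred acceptance on the speculative market to obtain a matching $\mu$ that is pairwise stable with respect to $\succ^{\mathrm{spec}}$ (which exists and is found in polynomial time by the standard theory of many-to-many markets with responsive preferences); and (iii) query $\mu$. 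If the answer is ``stable'' we halt; otherwise the answer is a blocking pair $(f,w)$ of the true market, and we record $\langle w, \mu(f)\rangle$ into $\mathcal{C}_f$ and $\langle f, \mu(w)\rangle$ into $\mathcal{C}_w$ (with the obvious adjustments when a quota is unfilled or a set is empty; for this sketch one may assume complete acceptability lists, the general case being a minor extension that additionally tracks acceptability).

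The first thing to prove is that every unsuccessful query records information that is genuinely new, i.e., not already implied by $\mathcal{C}$. Since $\mu$ is stable under $\succ^{\mathrm{spec}}$ but $(f,w)$ blocks $\mu$ in the true market, responsiveness forces at least one of the two agents --- say $w$ --- to be willing to take $f$ under her true preferences (she prefers $f$ to some $g \in \mu(w)$, or has a vacancy) but unwilling to take $f$ under $\succ^{\mathrm{spec}}_w$ (so $f \prec^{\mathrm{spec}}_w g$ for every $g \in \mu(w)$); hence the recorded constraint $\langle f, \mu(w)\rangle$ is violated by $\succ^{\mathrm{spec}}_w$. Because $\succ^{\mathrm{spec}}_w$ satisfies all of the previously recorded constraints in $\mathcal{C}_w$, the new constraint cannot have been implied by them. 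Consequently the number of unsuccessful queries is at most the number of distinct constraints that can ever be recorded, which per agent is at most $n \cdot \sum_{k \le q}\binom{n}{k} = n^{O(q)}$, and so $n^{O(q)}$ overall; this gives the claimed query bound (and deterministically, since the algorithm makes no random choices).

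For running time, the only delicate step is computing a speculative order consistent with $\mathcal{C}_a$ --- a disjunctive constraint-satisfaction problem superficially resembling the NP-hard problem of \cref{prop:NP-hardness-topological}. The key point is that the algorithm never needs to \emph{decide} satisfiability (it is guaranteed a priori), only to \emph{exhibit} a satisfying order, and this succumbs to a ``topological sort from the bottom'': repeatedly place, at the currently least-preferred free slot, any not-yet-placed element $z$ such that no constraint $\langle z, S\rangle \in \mathcal{C}_a$ has $S$ entirely contained among the not-yet-placed elements. Taking $z$ to be the least-preferred remaining element under $a$'s true preferences shows such a $z$ always exists (each of its true-satisfied constraints $\langle z, S\rangle$ must contain an already-placed element of $S$), and every constraint $\langle z, S\rangle$ with $z$ on the left is then already satisfied, while placing $z$ at the bottom satisfies every remaining constraint with $z$ on the right; so the greedy never jams and is correct, running in time polynomial in $n$ and $|\mathcal{C}_a|$. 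Each outer iteration therefore costs time polynomial in $n$ plus the number of constraints recorded so far; with $n^{O(q)}$ iterations and $n^{O(q)}$ constraints, the total running time is $n^{O(q)}$.

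I expect the main obstacle to be the case analysis underlying the second paragraph: spelling out precisely what a blocking-pair response reveals under responsive preferences with heterogeneous quotas and possible vacancies, and verifying in each case that $\succ^{\mathrm{spec}}$ must violate one of the two recorded disjunctive constraints. Because the model is so coarse --- a single response need not determine even one pairwise comparison of any agent --- this argument is less immediate than its one-to-one counterpart, and getting the accounting of ``new information'' exactly right, so that the $n^{O(q)}$ bound on recorded constraints really does bound the number of queries, is the crux of the proof.
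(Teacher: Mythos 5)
Your proposal is correct and takes essentially the same approach as the paper: record disjunctive constraints $\langle x, S\rangle$ from revealed blocking pairs, observe that stability of $\mu$ under the speculative order forces the newly recorded constraint to be genuinely new, bound the number of such constraints by $n^{O(q)}$, and find a consistent speculative order by a bottom-up generalized topological sort whose termination is witnessed by the true preferences. This matches the paper's Algorithm~\ref{algo:learning-many-to-many-matching}, \cref{prop:find-any-ranking}, and the proof of \cref{thm:iterations-simple-choice} in both structure and the key observations.
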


For a market with small (constant) quotas, this theorem is satisfactory. For markets in which at least one quota is large (e.g., superpolylogarithmic in the size of the market; for instance, where at least one employer employs a sizable chunk of the market), the query complexity of this theorem is too high, a result of the significantly coarser information the algorithm receives. Our main result for the many-to-many setting is nonetheless a positive one: it is possible to once again carefully set the speculative preferences in each step of the algorithm to obtain a polynomial number of queries:

\begin{theorem}[Informal, see \cref{thm:learning-matching-many-iterations}]
In the coarse query model, the deterministic query complexity of finding a stable matching is $O(n^3\log n)$ (regardless of the maximum quota $\MaxCapacity$).
\end{theorem}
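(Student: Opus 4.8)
The plan is to run the same high-level scheme as in the proof of \cref{thm:learning-matching-iterations}: maintain, for every agent $a$, a partial order $P_a$ consisting of the pairwise comparisons about $\PREF{a}$ that have been \emph{confirmed} so far (together with the agent's quota, which is revealed along the way), always kept consistent with the true preferences; at each step choose, for every agent, a linear extension of $P_a$ that is ``balanced'' in the sense dictated by \citeauthor{yu:proportional-transitivity}'s Proportional Transitivity Theorem, and call the resulting responsive profile the \emph{speculative profile}; run the deferred-acceptance algorithm on the speculative profile to produce a matching $\MATCH$ and propose it. Since only query complexity is claimed here (and not running time), we are free to compute these balanced extensions by brute force. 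If $\MATCH$ is stable we are done, so assume the response is a blocking pair $\Constraint{\FIRM}{\WORKER}$.

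The first step is to read off what a blocking pair tells us. By responsiveness, $\Constraint{\FIRM}{\WORKER}$ blocking $\MATCH$ means that $\WORKER$ is among $\FIRM$'s $\Capacity{\FIRM}$ most preferred members of $\Partners{\FIRM}\cup\SET{\WORKER}$, and symmetrically for $\WORKER$; hence the algorithm learns that $\FIRM$ prefers $\WORKER$ to the $\PREF{\FIRM}$-\emph{least} member of $\Partners{\FIRM}$ and that $\WORKER$ prefers $\FIRM$ to the $\PREF{\WORKER}$-least member of $\Partners{\WORKER}$ (when the relevant quota is full; otherwise only acceptability is learned). Moreover, since $\MATCH$ was stable for the speculative profile, on at least one of the two sides this fact contradicts the speculative profile. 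This is exactly where the one-to-one argument breaks down: in the one-to-one case $\Partners{\FIRM}$ is a single worker, so the blocking pair pins down a single new confirmed comparison and the partial order $P_{\FIRM}$ (or $P_{\WORKER}$) strictly grows; here the blocking pair only yields the \emph{disjunctive} statement that $\WORKER$ beats \emph{some} element of $\Partners{\FIRM}$, without identifying which one.

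The key step, and the main obstacle, is to convert such disjunctive information into genuinely confirmed comparisons at a controlled rate --- the plan is to show that $O(\MaxCapacity)\le O(n)$ additional queries always suffice to confirm at least one new comparison for the agent at fault. Having learned that $\WORKER$ beats some member of a known set $S=\Partners{\FIRM}$ on $\FIRM$'s side, one re-proposes a short sequence of matchings that keep $\WORKER$ available to $\FIRM$ while progressively shrinking the ``candidate'' set $S$: each re-proposal is either answered ``stable'' (done), or answered with a blocking pair that either exposes a fresh contradiction or lets us replace $S$ with a proper subset; amortizing over the shrinking sets shows that within $|S|$ re-proposals the algorithm confirms a comparison of the form $\Pref{\FIRM}{\WORKER}{z}$ or $\Pref{\FIRM}{z}{z'}$ for specific workers $z,z'\in S$. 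The delicate points are (i) that each re-proposed matching can be made stable for \emph{some} profile consistent with all currently confirmed information --- so that the adversary cannot simply stall by repeating an old blocking pair --- which one arranges by editing $\FIRM$'s speculative order only minimally, around $\WORKER$ and the elements of $S$; and (ii) that the confirmed comparison finally obtained is ``useful'' for the Proportional-Transitivity accounting, i.e.\ it held in at least an $\Omega(1)$ (or at worst $\Omega(1/n)$) fraction of the linear extensions of the current $P_{\FIRM}$, which is where the balanced choice of speculative order is used. I expect point (i) --- engineering re-proposals that genuinely force the disjunction to shrink --- to be the crux.

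Finally, the accounting: exactly as in \cref{thm:learning-matching-iterations}, the balanced choice of speculative preferences together with the Proportional Transitivity Theorem bounds the number of confirmed comparisons for each agent by $O(n\log n)$, hence $O(n^2\log n)$ confirmed comparisons across all $n$ agents; each is obtained after $O(n)$ queries by the argument above (and the intermediate re-proposals are precisely those $O(n)$ queries, so they are not double-counted); and the algorithm halts exactly when a speculative-stable matching goes uncontradicted, at which point that matching is truly stable. This yields the claimed $O(n^3\log n)$ bound. Note that only the number of queries is bounded, not the computation, so the NP-hardness obstruction of \cref{prop:NP-hardness-topological} is sidestepped.
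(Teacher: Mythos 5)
The proposal takes a genuinely different route from the paper, but it has a serious gap exactly at the step you yourself flag as ``the crux.'' Your plan is to convert the disjunctive information from a blocking pair (``$\FIRM$ prefers $\WORKER$ to \emph{some} member of $S = \Partners{\FIRM}$'') into a \emph{confirmed} pairwise comparison by spending $O(|S|)$ auxiliary queries, and then run the one-to-one accounting via Yu's Proportional Transitivity Theorem on the partial orders of confirmed comparisons. But the auxiliary-query subroutine is never actually constructed, and the sketch has real problems: the adversary is under no obligation to keep answering with blocking pairs involving $\FIRM$ or $\WORKER$, so a re-proposed matching can be answered with a blocking pair for completely unrelated agents, giving you no way to ``replace $S$ with a proper subset.'' You'd also need each re-proposal to be stable for \emph{some} scenario consistent with the confirmed partial orders \emph{and} to keep $\WORKER$ positioned so that a response is forced to disambiguate $S$---it is not clear such matchings exist, let alone that you can reach them within $O(|S|)$ steps. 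As a result, the claimed $O(n)$-queries-per-confirmed-comparison rate, on which the entire $O(n^3\log n)$ accounting rests, is unsupported.

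The paper avoids this obstacle entirely by \emph{not} converting disjunctive constraints to pairwise ones and \emph{not} using Proportional Transitivity in the many-to-many case. Instead, it keeps the constraint sets $\Constraints[t]{\AG}$ in the disjunctive form $\Constraint{\AG[']}{S}$, and replaces the notion of an $\alpha$-representative order with a different, iteratively constructed one (\cref{algo:sample-ranking-many}): build the speculative order from the last position upward, at each step placing the element $\AG$ maximizing the fraction $\LastFrac[\CONSTRAINTS]{\RemElements[t]}{\AG}$ of consistent rankings that have $\AG$ last among the remaining elements $\RemElements[t]$. Since these fractions sum to $1$ over $\RemElements[t]$, the argmax is always at least $\nicefrac{1}{n}$, and an elementary argument (\cref{lem:many-to-many-progress}) shows that any constraint holding in more than a $1-\nicefrac{1}{n}$ fraction of consistent rankings is satisfied by the constructed order. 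Consequently, whenever a blocking pair is revealed, at least a $\nicefrac{1}{n}$ fraction of the remaining rankings for the at-fault agent is eliminated, giving $\log_{\nicefrac{n}{n-1}}(n!)^n = O(n^3\log n)$ iterations with a single query per iteration. In particular, a Proportional-Transitivity-style statement for set-constraints is \emph{not} available in general (the paper explicitly raises it as an open question in \cref{sec:conclusions}), so your plan to apply Yu's theorem directly to the confirmed-comparisons partial order, even if the conversion step could be repaired, would not by itself give the improved per-agent bound you would want from it.
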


This theorem makes no guarantees regarding the running time of an algorithm that makes this number of queries. Indeed, the main problem that we leave open is designing an interactive algorithm for many-to-many markets that runs in time polynomial in $n$, regardless of $\MaxCapacity$. As discussed in Section~\ref{sec:conclusions}, we identify intimate connections between this open problem and open problems regarding rapid mixing of certain stochastic processes under constraints that cease to be convex when $\MaxCapacity$ is greater than one.

\begin{open}
For the coarse query model, construct an interactive learning algorithm that finds a stable matching in many-to-many, or even many-to-one, markets with an expected polynomial number of queries and an expected polynomial run time.
\end{open}

As mentioned above, the coarse learning model we study can be viewed as a case of the generalized equivalence query model of \citet{OnlineLearning}. Note that the structures to be learned in our setting are the agents' \emph{preferences}, not the matching itself; the proposed matchings serve as conduits for receiving additional information about the preferences.
\citet{OnlineLearning} provided a general approach, based on a technique of \citet{BinarySearch}, for learning such structures in a number of rounds that is logarithmic in the number of candidate structures. Had this technique been applicable, their technique would have implied our $O(n^2 \log n)$ bound from \cref{thm:learning-matching-iterations}. However, their framework is \emph{provably} not applicable to our setting, necessitating the self-contained algorithm and analysis we provide here. We prove this fact in Appendix~\ref{sec:binary-search}.

\subsection{In Proof: Comparison with \citet{bei:chen:zhang:complexity}}\label{bei-chen-zhang}
After this paper appeared at EC'20, the authors of \citep{bei:chen:zhang:complexity} contacted us (personal communication, August 20, 2020) and made us aware of their paper, of which we had been unaware until that point. As they pointed out, in their paper, they asked a similar question to ours, for a number of domains including one-to-one matching markets, and proved theorems equivalent to our \cref{thm:learning-matching-iterations,thm:learning-matching-time} and a theorem similar to our \cref{prop:lower-bound}.
For \cref{thm:learning-matching-iterations,thm:learning-matching-time}, while of the same order, the number of queries that our algorithms use is smaller by a constant factor, and our proofs are somewhat shorter, since we used \citeauthor{yu:proportional-transitivity}'s Proportional Transitivity Theorem, whereas they proved the result from first principles; however, neither this difference nor the difference in proofs is substantial. For \cref{prop:lower-bound}, our result is somewhat stronger, by allowing preferences on one side to be known and identical, and preferences on the other side to be i.i.d.\ random, whereas theirs is a worst-case result; however, the qualitative message is similar. \citet{bei:chen:zhang:complexity} did not study many-to-one or many-to-many matching markets in their paper, and so our results for these markets, including \cref{thm:iterations-simple-choice,thm:learning-matching-many-iterations}, have no counterparts in their paper.

\section{Preliminaries} \label{sec:preliminaries}
\subsection{The Stable Matching Problem}

In a general many-to-many matching market, there is a set \WORKERS of $n_{\WORKERS}$ workers and a set \FIRMS of $n_{\FIRMS}$ firms.
We will use \WORKER and its variants exclusively for workers and \FIRM and its variants exclusively for firms. We write $\AGENTS = \WORKERS \cup \FIRMS$ for the set of all agents, and \AG and its variants when we define or reason about notation that applies equally to both sides of agents.
Each worker $\WORKER \in \WORKERS$ has \emph{quota} $\Capacity{\WORKER} \in \SET{1, \ldots, n_{\FIRMS}}$ and each firm $\FIRM \in \FIRMS$ has quota
$\Capacity{\FIRM} \in \SET{1, \ldots, n_{\WORKERS}}$.
Each worker $\WORKER \in \WORKERS$ has a \emph{preference order} (linear order) \PREFT{\WORKER} over a subset of \FIRMS, which we interpret as the subset of firms to which \WORKER prefers being matched over being unmatched. We write \PrefT{\WORKER}{\FIRM}{\FIRM[']} to denote that \WORKER (strictly) prefers \FIRM over \FIRM['], and write \PrefT{\WORKER}{\FIRM}{\UNMATCHED} to denote that \WORKER prefers \FIRM over being unmatched. Similarly, each firm $\FIRM \in \FIRMS$ has a preference order \PREFT{\FIRM} over a subset of \WORKERS.

A many-to-many matching $\GMATCH\subseteq\FIRMS\times\WORKERS$ is a set of worker-firm pairs such that the set of firms to which worker \WORKER is matched, $\Partners{\WORKER}=\SetB{\FIRM\in\FIRMS}{(\WORKER,\FIRM)\in\GMATCH}$, is of size at most $\Capacity{\WORKER}$, and the set of workers to whom firm \FIRM is matched, $\Partners{\FIRM}=\SetB{\WORKER\in\WORKERS}{(\WORKER,\FIRM)\in\GMATCH}$, is of size at most $\Capacity{\FIRM}$.

In the special case in which $\Capacity{\WORKER} = 1$ for all \WORKER and $\Capacity{\FIRM} = 1$ for all \FIRM, this is exactly the definition of a standard (not necessarily perfect) one-to-one matching. In this case, keeping with standard nomenclature, we refer to the two sides as \emph{women} \WOMEN and \emph{men} \MEN, using the notation \WOMAN and \MAN and their variants for individuals. In this case, we also write \Partner{\AG} for the (unique) partner of $\AG \in \AGENTS$. If \AG is unmatched, then we define $\Partner{\AG}=\UNMATCHED$.

Given a matching \GMATCH, an agent \AG is \emph{individually blocking} if she is matched to an agent that she does not prefer over being unmatched. A matching is \emph{individually rational} if no agent is individually blocking.
Given a matching \GMATCH, a pair $(\WORKER,\FIRM)\notin\GMATCH$ is called a \emph{blocking pair} (w.r.t.~\GMATCH)
if all of the following hold:
\begin{itemize}
\item
\PrefT{\WORKER}{\FIRM}{\UNMATCHED} and \PrefT{\FIRM}{\WORKER}{\UNMATCHED}.
\item
Either $\SetCardB{\Partners{\FIRM}}<\Capacity{\FIRM}$ or there exists $\WORKER['] \in \Partners{\FIRM}$ such that \PrefT{\FIRM}{\WORKER}{\WORKER[']}.
\item
Either $\SetCardB{\Partners{\WORKER}}<\Capacity{\WORKER}$ or there exists $\FIRM['] \in \Partners{\WORKER}$ such that \PrefT{\WORKER}{\FIRM}{\FIRM[']}.
\end{itemize}
A blocking pair would prefer to add each other to their respective sets of matches (while removing \FIRM['] and \WORKER['] from these respective sets if these sets are full).
A matching $\GMATCH$ is (pairwise) \emph{stable} (with respect to the preferences \PREFT{\WORKER} and \PREFT{\FIRM} and quotas \Capacity{\WORKER} and \Capacity{\FIRM}) if it is individually rational and no pair is blocking it.
That is, if \GMATCH is stable, then it is individually rational, and furthermore, for each pair $(\WORKER,\FIRM) \notin\GMATCH$ with \PrefT{\WORKER}{\FIRM}{\UNMATCHED} and \PrefT{\FIRM}{\WORKER}{\UNMATCHED}, we have either $\SetCardB{\Partners{\WORKER}}=\Capacity{\WORKER}$ and \PrefT{\WORKER}{\FIRM[']}{\FIRM} for all $\FIRM['] \in \Partners{\WORKER}$, or $\SetCardB{\Partners{\FIRM}}=\Capacity{\FIRM}$ and \PrefT{\FIRM}{\WORKER[']}{\WORKER} for all $\WORKER['] \in \Partners{\FIRM}$;
in words, a matching is stable iff each agent is only matched to agents that it prefers over being unmatched, and furthermore for each unmatched pair $(\WORKER,\FIRM)$, either \WORKER is matched to her quota of partners and prefers all her current partners to \FIRM, or \FIRM it matched to its quota of partners and prefers all its current partners to \WORKER.

\subsection{The Interactive Learning Model}
Our goal is to design an algorithm to produce/learn a stable matching. The challenge is that \PREFT{\WORKER} and \PREFT{\FIRM} are unknown to the algorithm; nonetheless, the algorithm seeks to find a stable matching with respect to them, by gradually learning about the individuals' preferences.
The learning proceeds in rounds.
In each round $t$, the algorithm proposes a matching $\GMATCH[t]$.
If \GMATCH[t] is stable with respect to the (initially unknown) \PREFT{\WORKER} and \PREFT{\FIRM}, the process terminates.
Otherwise, the algorithm learns about \emph{one} blocking pair $\BLOCKING[t] = (\BWORKER{t},\BFIRM{t})$ or one individually blocking agent.
If there are multiple possible blocking pairs (and/or individually blocking agents), the one that is revealed to the algorithm is chosen \emph{adversarially}.
Based on this new information, the algorithm can then compute a new proposed matching \GMATCH[t+1], and so on.

\subsection{Reduction to Full Preference Lists}

We will say that a matching market has \emph{full preference lists} if a) All preference lists are full: each agent prefers every agent on the other side of the market to being unmatched, and b) Quotas are balanced: $\sum_{\WORKER \in \WORKERS} \Capacity{\WORKER} = \sum_{\FIRM \in \FIRMS} \Capacity{\FIRM}$. (In a one-to-one market this translates to $n_{\WOMEN}=n_{\MEN}$.) We say that a matching in such a market is \emph{perfect} if every participant is matched to her full quota of partners. In such markets, a matching is individually rational if and only if it is perfect. Restricting attention to perfect matchings, a matching is stable if and only if it is not blocked by any pair. When stating our learning algorithms and analyzing them, it will be convenient and reduce clutter to restrict attention to markets with full preference lists. Since we will design our algorithms to only output perfect matchings, we will only have to deal with blocking pairs (rather than individually blocking agents) as responses. Luckily, by a well-known reduction in stable matching theory, this restriction is without loss of generality. See \cref{sec:partial-to-full} for the details of the embedding of the general problem in the special case of full preference lists, as well as for a discussion and a word of caution on applying other well-known reductions within the context of our coarse query model.

\subsection{Additional Notation}

When $\Capacity{\WORKER} = \Capacity{\FIRM} = 1$ for the blocking pair $(\WORKER,\FIRM) = (\BWORKER{t},\BFIRM{t})$, the algorithm can infer that \PrefT{\WORKER}{\FIRM}{\Partner{\WORKER}} and \PrefT{\FIRM}{\WORKER}{\Partner{\FIRM}}.
In the more general case, the algorithm only learns that there exist $\FIRM['] \in \Partners{\WORKER}$ and $\WORKER['] \in \Partners{\FIRM}$ such that \PrefT{\WORKER}{\FIRM}{\FIRM[']} and \PrefT{\FIRM}{\WORKER}{\WORKER[']}.
In Section~\ref{sec:one-to-one}, we will briefly discuss a model in which the algorithm also learns the exact identity of \WORKER['] and \FIRM[']; unsurprisingly, in this case, the problem directly reduces to the case in which $\Capacity{\WOMAN} = \Capacity{\MAN} = 1$ for all \WOMAN and \MAN.\footnote{For many-to-many settings, this will be clear after we present our learning algorithm. For many-to-one settings, this can also be seen to follow from a well-known reduction from many-to-one to one-to-one matching markets that ``splits'' each firm with quota $>1$ into unit-demand firms. As hinted in \cref{sec:introduction}, this reduction can only be applied to our setting if the identity of \WORKER['] is revealed (the identity of \FIRM['] is clear in many-to-one markets), as otherwise the algorithm does not know which ``copy'' of \FIRM is blocking with \WORKER. This echoes again the message from our preceding discussion that one should be careful with applying even known reductions within the context of our coarse query model.}

In either case, the algorithm obtains information about the agents' preferences in the form of ``\AG must precede at least one element of $S$ in a particular preference order.'' We write such constraints as \Constraint{\AG}{S} (the agent to whose preference order the constraint applies will be clear from context), and also as \Constraint{\AG}{\AG[']} when $S=\SET{\AG[']}$ is a singleton.
We will consider sets \CONSTRAINTS of such constraints. We say that a ranking \PREFE is \emph{consistent} with \CONSTRAINTS if it satisfies all constraints in \CONSTRAINTS. Throughout, all of our sets \CONSTRAINTS will be such that at least one ranking (namely, the unknown true ranking) is consistent with~\CONSTRAINTS.
When \CONSTRAINTS is a set consisting only of constraints of the form \Constraint{\AG}{\AG[']}, i.e., involving only the relative order of two elements, then \CONSTRAINTS precisely defines a \emph{strict partial order}, and the rankings consistent with \CONSTRAINTS are exactly the \emph{linear extensions} of \CONSTRAINTS.

\section{One-to-One Matchings} \label{sec:one-to-one}
In this section, we focus on the case in which all agents have a quota of $1$, i.e., the case of women and men. We will write $n = n_{\WOMEN}=n_{\MEN}$.
Recall that in this special case, the revelation of a blocking pair $\BLOCKING[t] = (\BWOMAN{t}, \BMAN{t}) = (\WOMAN,\MAN)$ implies that \PrefT{\WOMAN}{\MAN}{\Partner{\WOMAN}} and \PrefT{\MAN}{\WOMAN}{\Partner{\MAN}}; this information reduces the number of (remaining) possible preferences, by reducing either the number of possible preference orders \PREFT{\WOMAN} or the number of possible preference orders \PREFT{\MAN}.

\subsection{High-Level Overview}

The high-level idea of our approach is to choose \MATCH[t] in such a way that whichever blocking pair \BLOCKING[t] the adversary reveals, the number of possible preference orders for at least one of $\WOMAN,\MAN$ not only decreases, but in fact decreases by a
\emph{constant} factor.

For each agent $\AG \in \AGENTS$, let \Constraints[t]{\AG} be the set of all preference constraints that can be inferred from the blocking pairs revealed before (but not including) time $t$.
That is, for a woman \WOMAN, we have $\Constraints[t]{\WOMAN} = \SetB{\Constraint{\MAN}{\Partner[t']{\WOMAN}}}{t' < t \text{ and } \BLOCKING[t'] = \SET{\WOMAN,\MAN}}$; similarly for men.
Let \Rankings[t]{\AG} denote the set of all linear extensions of \Constraints[t]{\AG}.

We call a \emph{scenario} (at time $t$) a vector giving for each woman and each man a preference order consistent with all observations so far; that is, a vector $(\PREF{\AG})_{\AG \in \AGENTS} \in \bigtimes_{\WOMAN \in \WOMEN} \Rankings[t]{\WOMAN} \times \bigtimes_{\MAN \in \MEN} \Rankings[t]{\MAN}$.
Let $\NumRankings[t]{\AG} = \SetCard{\Rankings[t]{\AG}}$ denote the number of remaining possible preference orders for \AG, and $\NUMRANKINGS[t] = \prod_{\AG \in \AGENTS} \NumRankings[t]{\AG}$ the total number of scenarios.
Initially, for every agent \AG, there are $\NumRankings[1]{\AG} = n!$ possible preference orders, so $\NUMRANKINGS[1] = (n!)^{2n}$.

Our algorithm will ensure that the number of possible preference orders for at least one agent, and thus the number of possible scenarios, decreases by a constant factor in each round $t$.
Thus, after at most $t^* = \log \bigl((n!)^{2n}\bigr) = O(n^2 \log n)$ rounds, there is at most one scenario remaining, at which point the algorithm will have identified all of the true preference orders \PREFT{\AG}.
By running the Gale-Shapley Algorithm using those preference orders, the algorithm then ensures that it has found a stable matching.\footnote{It may seem wasteful to pin down the precise preferences/scenario, since many different scenarios may give rise to the same stable matching. Instead, it is tempting to try and find a matching that is stable with respect to all of the remaining possible scenarios (if such a matching exists), and propose that matching rather than further rule out scenarios. The results of \citet{rastegari:condon:immorlica:irving:leyton-brown} imply that checking whether all remaining possible scenarios share the same men-optimal stable matching (i.e., whether the Gale-Shapley Algorithm would return the same matching for all remaining possible scenarios) can be done in polynomial time. However, in such a case, our algorithm would propose this matching as well (as it runs the Gale-Shapley Algorithm on one of the remaining scenarios and proposes the result). It is technically possible to (inefficiently) go over all possible (not necessarily men-optimal) matchings and check whether one of them is stable for all remaining possible scenarios (and then propose it). However, it is not clear how to argue that this would \emph{guarantee} a significantly improved query complexity.}

\subsection{Representative Rankings and Proportional Transitivity}
The basic idea of our algorithm is the same as in the $O(n^3)$ algorithm of Section~\ref{sec:introduction}. In each round, the algorithm chooses a speculative preference order for each agent, and then computes a stable matching with respect to those speculative preference orders. The important improvement is to choose the speculative preference orders for all agents carefully. Specifically, we want all preference orders \PREF[t]{\AG} to be representative of the corresponding \Rankings[t]{\AG}, in the following sense.

\begin{definition}[Representative preference order] \label{def:representative-preference}
Let \CONSTRAINTS be a set of constraints (forming a partial order), and \RANKINGS the set of all linear extensions of \CONSTRAINTS.
Let $\PrefFrac{\AG}{\AG[']} = \PrefFrac[\CONSTRAINTS]{\AG}{\AG[']}$ be the fraction of linear orders \PREFE in \RANKINGS that have \PrefE{\AG}{\AG[']}.
We say that \PREFE is \emph{$\alpha$-representative} of \RANKINGS iff for all pairs $(\AG,\AG['])$, whenever $\PrefFrac{\AG}{\AG[']} \geq \alpha$, we have \PrefE{\AG}{\AG[']}. 
\end{definition}

It is not a priori clear that an $\alpha$-representative preference order even exists for a constant $\alpha < 1$. This is established by the following lemma.

\begin{lemma} \label{lem:representative-exists}
  Let \CONSTRAINTS and $\alpha \geq 0.8$ be arbitrary, and let $\mathcal{S}_{\alpha}(\CONSTRAINTS) = \SetB{(\AG,\AG['])}{\PrefFrac{\AG}{\AG[']} \geq \alpha}$ be the set of all ordered pairs $(\AG,\AG['])$ such that at least an $\alpha$ fraction of the linear extensions of \CONSTRAINTS rank \AG ahead of \AG['].
  Then, $\mathcal{S}_{\alpha}(\CONSTRAINTS)$ is acyclic, and thus defines a partial order. In particular, every linear extension of $\mathcal{S}_{\alpha}(\CONSTRAINTS)$ is an $\alpha$-representative preference order.
\end{lemma}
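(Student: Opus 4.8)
The plan is to establish the slightly stronger statement that $\mathcal{S}_{\alpha}(\CONSTRAINTS)$ is \emph{transitive}. Since it is trivially irreflexive --- no linear order ranks an element ahead of itself, so it contains no pair $(\AG,\AG)$ --- a transitive irreflexive relation is automatically antisymmetric, hence a strict partial order, and in particular acyclic; this yields the first sentence of the lemma in one stroke (and, being literally a partial order, it ``defines'' one with nothing further to do). For the ``in particular'' clause, I would first record the trivial inclusion $\CONSTRAINTS\subseteq\mathcal{S}_{\alpha}(\CONSTRAINTS)$: whenever $\AG\PREFE\AG[']$ is forced by $\CONSTRAINTS$, every linear extension ranks $\AG$ ahead of $\AG[']$, so $\PrefFrac{\AG}{\AG[']}=1\ge\alpha$. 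Consequently every linear extension of $\mathcal{S}_{\alpha}(\CONSTRAINTS)$ is in particular a linear extension of $\CONSTRAINTS$ (hence lies in $\RANKINGS$), and by construction it ranks $\AG$ ahead of $\AG[']$ whenever $(\AG,\AG[']) \in\mathcal{S}_{\alpha}(\CONSTRAINTS)$, i.e.\ whenever $\PrefFrac{\AG}{\AG[']}\ge\alpha$ --- which is precisely the defining property of $\alpha$-representativeness in \cref{def:representative-preference}. So everything reduces to transitivity.

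For transitivity, suppose $(\AG,\AG['])$ and $(\AG['],\AG[''])$ both lie in $\mathcal{S}_{\alpha}(\CONSTRAINTS)$, so $\PrefFrac{\AG}{\AG[']}\ge\alpha$ and $\PrefFrac{\AG[']}{\AG['']}\ge\alpha$; the goal is $\PrefFrac{\AG}{\AG['']}\ge\alpha$. The only elementary fact available is that in a linear order ``$\AG$ ahead of $\AG[']$'' together with ``$\AG[']$ ahead of $\AG['']$'' forces ``$\AG$ ahead of $\AG['']$'', which by a union bound gives no more than $\PrefFrac{\AG}{\AG['']}\ge\PrefFrac{\AG}{\AG[']}+\PrefFrac{\AG[']}{\AG['']}-1\ge 2\alpha-1$; at $\alpha=0.8$ that is merely $0.6$, short of what is needed. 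Bridging the gap from $2\alpha-1$ up to $\alpha$ is exactly the role of \citeauthor{yu:proportional-transitivity}'s Proportional Transitivity Theorem \citep{yu:proportional-transitivity}, which provides a quantitatively much stronger lower bound on $\PrefFrac{\AG}{\AG['']}$ as a function of $\PrefFrac{\AG}{\AG[']}$ and $\PrefFrac{\AG[']}{\AG['']}$ that is valid for every finite poset; I would substitute the two hypotheses $\ge\alpha$ and check by a short monotone computation that for $\alpha\ge 0.8$ the resulting bound is at least $\alpha$, so $(\AG,\AG[''])\in\mathcal{S}_{\alpha}(\CONSTRAINTS)$. (Should one prefer to work with cycles directly, the same input can be fed around an arbitrary cycle $\AG[1]\to\cdots\to\AG[k]\to\AG[1]$ of $\mathcal{S}_{\alpha}(\CONSTRAINTS)$ and combined with $\PrefFrac{\AG[k]}{\AG[1]}+\PrefFrac{\AG[1]}{\AG[k]}=1$ to reach a contradiction; cycles of length at most $4$ are ruled out even more cheaply, since such a cyclic chain of $\ge\alpha$-comparisons would force $k\alpha\le k-1$.)

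It then remains only to assemble the pieces: $\mathcal{S}_{\alpha}(\CONSTRAINTS)$ is irreflexive and transitive, hence a strict partial order (antisymmetry can also be read off directly from $\PrefFrac{\AG}{\AG[']}+\PrefFrac{\AG[']}{\AG}=1<2\alpha$ for distinct $\AG,\AG[']$), hence acyclic; it therefore admits linear extensions, and by the first paragraph every one of them is an $\alpha$-representative preference order. The one genuine obstacle is the transitivity step: its whole content is importing the quantitative strength of Yu's theorem to replace the insufficient $2\alpha-1$ bound by one that clears $\alpha$, and the threshold $\alpha\ge 0.8$ is precisely what makes that inequality --- and, downstream, the constant-factor shrinkage of the number of remaining scenarios in each round --- go through.
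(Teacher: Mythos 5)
Your proof is correct and rests on exactly the same key ingredient as the paper's: Yu's Proportional Transitivity Theorem with $\phi=\alpha\ge0.8>\phi_{\min}$. The only difference is packaging: you establish that $\mathcal{S}_{\alpha}(\CONSTRAINTS)$ is transitive (one application of Yu on a length-2 chain), combine with irreflexivity to conclude it is a strict partial order hence acyclic, whereas the paper assumes a cycle and iterates Yu around it to derive the contradiction $\PrefFrac{\AG[1]}{\AG[\ell]}+\PrefFrac{\AG[\ell]}{\AG[1]}>1$ --- your parenthetical cycle remark is in fact the paper's proof verbatim. One small imprecision: there is no ``short monotone computation'' to do after invoking Yu; Theorem~\ref{thm:proportional-transitivity} already states that both hypotheses $\ge\phi$ yield the conclusion $\ge\phi$ for any $\phi\ge\phi_{\min}$, so $\alpha\ge0.8$ plugs in directly with nothing left to check.
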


The proof of Lemma~\ref{lem:representative-exists} relies heavily on Yu's Proportional Transitivity Theorem \citep{yu:proportional-transitivity}.

\begin{theorem}[Theorem~3 of \citealp{yu:proportional-transitivity}]
  \label{thm:proportional-transitivity}
  Let $\phi_{\min} = \frac{1+(\sqrt{2}-1) \cdot \sqrt{2 \sqrt{2}-1}}{2} \approx 0.78$, and $\phi \geq \phi_{\min}$ any constant.
  Let $P$ be any partial order.
  For every $a, a', a''$, if $\PrefFrac[P]{\AG}{\AG[']} \geq \phi$ and $\PrefFrac[P]{\AG[']}{\AG['']} \geq \phi$,
  then $\PrefFrac[P]{\AG}{\AG['']} \geq \phi$.
\end{theorem}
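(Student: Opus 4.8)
The plan is to reduce everything to the three elements $a, a', a''$ and study the distribution that a uniformly random linear extension of $P$ induces on their relative order. Writing $O_1 = (a\prec a'\prec a'')$, $O_2 = (a\prec a''\prec a')$, $O_3 = (a'\prec a\prec a'')$, $O_4 = (a'\prec a''\prec a)$, $O_5 = (a''\prec a\prec a')$, $O_6 = (a''\prec a'\prec a)$ for the six orders, and $q_i$ for the fraction of linear extensions of $P$ whose restriction to $\{a,a',a''\}$ is $O_i$, we have $q_i\ge 0$, $\sum_i q_i = 1$, and $A := \PrefFrac[P]{a}{a'} = q_1+q_2+q_5$, $B := \PrefFrac[P]{a'}{a''} = q_1+q_3+q_4$, $C := \PrefFrac[P]{a}{a''} = q_1+q_2+q_3$; the goal is that $A,B\ge\phi$ forces $C\ge\phi$. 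Two immediate observations dispose of the easy cases. First, $q_1>0$, since $q_1=0$ would force $A+B = q_2+q_3+q_4+q_5\le 1<2\phi$. Second, $C\ge\phi$ holds whenever $q_4=0$ (then $C\ge q_1+q_3 = B$) or $q_5=0$ (then $C\ge q_1+q_2 = A$), so we may assume $q_1,q_4,q_5>0$. Note also that the purely ``probabilistic'' constraints alone only give $C\ge A+B-1\ge 2\phi-1$, which is strictly below $\phi$ for $\phi<1$, so the proof must genuinely use that the $q_i$ arise from the linear extensions of an actual poset.

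The heart of the argument is to pin down enough of the combinatorial structure of three-element marginals of posets. The natural ingredients are (i) Shepp's XYZ correlation inequality together with its order-reversal dual: in a random linear extension the events $\{x\prec y\}$ and $\{x\prec z\}$ are positively correlated, and so are $\{x\prec z\}$ and $\{y\prec z\}$; written out for the six choices of a ``smallest'' or ``largest'' element among $\{a,a',a''\}$, these yield six quadratic inequalities among the $q_i$ (for instance $q_1+q_2\ge A\cdot C$ and $q_1+q_3\ge B\cdot C$); and (ii) the connectivity of the adjacent-transposition graph on linear extensions of a poset, which forces the support $\{i:q_i>0\}$ to be a connected arc in the natural $6$-cycle $O_1\!-\!O_3\!-\!O_4\!-\!O_6\!-\!O_5\!-\!O_2\!-\!O_1$ on orders (hence in particular it cannot be the ``purely cyclic'' set $\{O_1,O_4,O_5\}$). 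However, these two ingredients alone are demonstrably not sufficient — configurations arbitrarily close to the forward-rotation distribution $q_1 = 2\phi-1$, $q_4 = q_5 = 1-\phi$ satisfy both yet have $C\approx 2\phi-1<\phi$ — so the proof must also use a finer description of which three-element marginals are realizable. One clean way to obtain such a description is to reduce a general poset, without changing the three-element marginal, to a canonical ``small'' poset consisting of $a,a',a''$ together with a bounded collection of auxiliary elements and pendant chains, for which the $q_i$ can be written explicitly in terms of a few real parameters; the extra constraints then become explicit (semialgebraic) conditions on $(q_1,\dots,q_6)$.

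With the correct constraints in hand, the endgame is a finite-dimensional optimization: reduce to the tight case $A=B=\phi$ by a monotonicity/perturbation argument (a minor subtlety in its own right), and then minimize $C$ subject to the constraints. The minimum is $\ge\phi$ precisely when $\phi\ge\phi_{\min}$; the boundary condition is algebraic of degree four — writing $u = 2\phi-1$ it is $u^4+22u^2-7=0$, whose relevant root $u = (\sqrt2-1)\sqrt{2\sqrt2-1}$ gives exactly $\phi_{\min} = \tfrac{1+(\sqrt2-1)\sqrt{2\sqrt2-1}}{2}$, so that for every $\phi\ge\phi_{\min}$ the minimum of $C$ is at least $\phi$, as claimed. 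I expect the main obstacle to be the preceding paragraph: isolating and proving the precise structural constraints obeyed by three-element marginals of posets (the bare correlation and connectivity facts being too weak to finish), with the non-convex optimization that yields the exact threshold $\phi_{\min}$ as a secondary but still delicate burden.
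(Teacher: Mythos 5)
The statement you attempted is not proved in the paper at all: it is imported verbatim as Theorem~3 of Yu's paper and used as a black box (the paper's own contribution at this point is Lemma~\ref{lem:representative-exists}, which merely applies it). The relevant comparison is therefore with Yu's actual argument, which the paper's conclusions describe: Shepp's XYZ inequality combined with Ball's dimension-reduction technique for logarithmically concave measures, applied to the order polytope and resting on the Brunn--Minkowski inequality --- a continuous, geometric route, not a reduction to small posets followed by a finite optimization.

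As a proof, your proposal has a genuine gap exactly where you predict one. The setup is sound: the six marginal probabilities $q_1,\dots,q_6$, the easy cases, and the observation that the XYZ-type inequalities restricted to the triple together with connectedness of the support in the $6$-cycle of orderings are insufficient (indeed all six such inequalities hold at the near-rotation configuration $q_1=2\phi-1$, $q_4=q_5=1-\phi$, which has $C\approx 2\phi-1<\phi$). But the step meant to close the argument --- a characterization of which vectors $(q_1,\dots,q_6)$ are realizable as three-element marginals of posets, obtained by reducing an arbitrary poset ``without changing the marginal'' to a canonical poset with boundedly many auxiliary elements and pendant chains --- is neither made precise nor proved, and there is no evident reason such a bounded-parameter family exhausts all realizable marginals; this characterization is essentially the entire difficulty of the theorem. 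The concluding quartic $u^4+22u^2-7=0$ with $u=2\phi-1$ does have $(\sqrt2-1)\sqrt{2\sqrt2-1}$ as a root, so it is numerically consistent with the stated $\phi_{\min}$, but it is asserted rather than derived from any established constraint set, so it provides no independent evidence that the proposed route works. In short: correct framing and a correct diagnosis that the elementary tools do not suffice, but the core of the proof is missing, and the known proof proceeds very differently, hinging on convexity of the order polytope --- the very feature whose failure for larger constraints the paper's conclusion highlights as the obstacle to generalizing the theorem.
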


Theorem~\ref{thm:proportional-transitivity} is remarkable: it states that if a large enough fraction of linear extensions of a partial order have \AG ahead of \AG['], and (at least) the same fraction have \AG['] ahead of \AG[''], then (at least) the same fraction have \AG ahead of \AG['']. For $p=1$, this statement is of course trivial (because it exactly captures transitivity of partial orders), but that it should hold for a constant $p < 1$ is not at all obvious.

\begin{proof}[Proof of~\cref{lem:representative-exists}]
  Assume (for contradiction) that $\mathcal{S}_{\alpha}(\CONSTRAINTS)$ contains a cycle $\AG[1] \succ \AG[2] \succ \AG[3] \succ \cdots \succ \AG[\ell] \succ \AG[1]$ of length $\ell$.
  Then, $\PrefFrac[\CONSTRAINTS]{\AG[i]}{\AG[i+1]} \geq \alpha$ for all $i = 1, \ldots, \ell-1$.
  Applying Theorem~\ref{thm:proportional-transitivity} with $\phi = \alpha \geq 0.8 > \phi_{\min}$ repeatedly, we obtain that $\PrefFrac[\CONSTRAINTS]{\AG[1]}{\AG[\ell]} \geq \alpha$.
  But this gives a contradiction with the fact that $\PrefFrac[\CONSTRAINTS]{\AG[\ell]}{\AG[1]} \geq \alpha$, because $\PrefFrac[\CONSTRAINTS]{\AG[1]}{\AG[\ell]} + \PrefFrac[\CONSTRAINTS]{\AG[\ell]}{\AG[1]} = 1$.
\end{proof}

\subsection{The Algorithm and Analysis}

The high-level algorithm is given in \cref{algo:learning-matching}.
In each iteration, it uses $\alpha$-representative preference orders (for some absolute constant $\alpha \in [0.8,1)$) as speculative preference orders for all agents~\AG;
by Lemma~\ref{lem:representative-exists}, such preference orders actually exist.
It then uses the Gale-Shapley Algorithm (or really, any algorithm) to compute a matching that is stable with respect to those speculative preferences.
If this matching is not stable, then it updates the constraints based on the blocking pair that was revealed.

\begin{algorithm}[htb]
\begin{algorithmic}[1]
\STATE Let $t=1$.
\FORALL{$\AG\in\AGENTS$}
  \STATE Let $\Constraints[1]{\AG} = \emptyset$.
\ENDFOR
\LOOP
  \FORALL{$\AG\in\AGENTS$}
     \STATE Let \PREF[t]{\AG} be an $\alpha$-representative preference order for \Constraints[t]{\AG}.\label[line]{choose-representative-one-to-one}
  \ENDFOR
  \STATE Let $\MATCH[t] = \GaleShapley{(\PREF[t]{\AG})_{\AG \in \AGENTS}}$ be a stable matching with respect to the \PREF[t]{\AG}.
  \IF{\MATCH[t] is stable}
    \RETURN \MATCH[t].
  \ELSE
    \STATE Let $\SET{\WOMAN,\MAN}=\BLOCKING[t]$  be the revealed blocking pair.
    \STATE Let $\Constraints[t+1]{\WOMAN} = \Constraints[t]{\WOMAN} \cup \SETB{\Constraint{\MAN}{\Partner[t]{\WOMAN}}}$.
    \STATE Let $\Constraints[t+1]{\MAN} = \Constraints[t]{\MAN} \cup \SETB{\Constraint{\WOMAN}{\Partner[t]{\MAN}}}$.
    \FORALL{$\AG\notin\SET{\WOMAN,\MAN}$}
      \STATE Let $\Constraints[t+1]{\AG} = \Constraints[t]{\AG}$.
    \ENDFOR
  \ENDIF
  \STATE Increment $t$.
\ENDLOOP
\end{algorithmic}

\caption{The interactive learning algorithm for one-to-one matchings. \label{algo:learning-matching}}
\end{algorithm}

The total number of iterations required by \cref{algo:learning-matching} is characterized by the following theorem.

\begin{theorem}\label{thm:learning-matching-iterations}\!\!\!\footnote{In proof: after this paper appeared at EC'20, the authors of \citep{bei:chen:zhang:complexity} contacted us (personal communication, August 20, 2020) and made us aware of their paper, of which we had been unaware until that point. As they pointed out, their paper contains a proof of \cref{thm:learning-matching-iterations}. While of the same order, the number of queries that our algorithm uses is smaller by a constant factor, and our proof is somewhat shorter, since we used \citeauthor{yu:proportional-transitivity}'s Proportional Transitivity Theorem, whereas they proved the result from first principles; however, neither this difference nor the difference in proofs is substantial. See the newly added \cref{bei-chen-zhang} following our introduction for more details.}
  Algorithm~\ref{algo:learning-matching} terminates after at most $O(n^2 \log n)$ iterations.
\end{theorem}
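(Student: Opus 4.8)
The plan is to carry out the potential-function argument sketched in the overview preceding the theorem. I track the quantity $\NUMRANKINGS[t]=\prod_{\AG\in\AGENTS}\NumRankings[t]{\AG}$ --- the number of scenarios still consistent with everything revealed so far --- and show that it shrinks by an absolute constant factor in every unsuccessful iteration. Since the true preference profile is always consistent with every \Constraints[t]{\AG}, we have $\NUMRANKINGS[t]\geq 1$ for all $t$, while $\NUMRANKINGS[1]=(n!)^{2n}$. Hence, if each of the (say) $T$ unsuccessful iterations multiplies $\NUMRANKINGS[t]$ by at most a fixed constant $\alpha\in[0.8,1)$, then $1\leq\NUMRANKINGS[T+1]\leq\alpha^{T}\cdot(n!)^{2n}$, so $T\leq\log_{1/\alpha}\bigl((n!)^{2n}\bigr)=O\bigl(n\log(n!)\bigr)=O(n^{2}\log n)$ by Stirling's approximation $\log(n!)=\Theta(n\log n)$; one further (successful) iteration then terminates the algorithm, giving the claimed bound on the total number of iterations. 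I note that \cref{lem:representative-exists} is exactly what makes \cref{algo:learning-matching} well-defined for such an $\alpha$: it guarantees that the $\alpha$-representative preference orders chosen on \cref{choose-representative-one-to-one} exist.

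The crux is the per-round decrease, which I would establish as follows. Fix an unsuccessful iteration $t$ and let $\SET{\WOMAN,\MAN}=\BLOCKING[t]$ be the revealed blocking pair. Because $(\WOMAN,\MAN)$ blocks \MATCH[t] under the \emph{true} preferences, \PrefT{\WOMAN}{\MAN}{\Partner[t]{\WOMAN}} and \PrefT{\MAN}{\WOMAN}{\Partner[t]{\MAN}}. On the other hand, \MATCH[t] is stable under the \emph{speculative} preferences \PREF[t]{\AG}, so $(\WOMAN,\MAN)$ does not block it there, which (since these are total orders and $(\WOMAN,\MAN)\notin\MATCH[t]$) forces \Pref[t]{\WOMAN}{\Partner[t]{\WOMAN}}{\MAN} or \Pref[t]{\MAN}{\Partner[t]{\MAN}}{\WOMAN}. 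Assume the former; the other case is symmetric. Since \PREF[t]{\WOMAN} is $\alpha$-representative of the linear extensions of \Constraints[t]{\WOMAN}, the contrapositive of \cref{def:representative-preference}, applied to the pair $(\MAN,\Partner[t]{\WOMAN})$, shows that strictly fewer than an $\alpha$-fraction of the linear extensions in \Rankings[t]{\WOMAN} rank \MAN ahead of \Partner[t]{\WOMAN}. But the constraint $\Constraint{\MAN}{\Partner[t]{\WOMAN}}$ added in this iteration keeps \emph{exactly} those linear extensions of \Constraints[t]{\WOMAN} that rank \MAN ahead of \Partner[t]{\WOMAN}, so $\NumRankings[t+1]{\WOMAN}<\alpha\cdot\NumRankings[t]{\WOMAN}$, while $\NumRankings[t+1]{\AG}=\NumRankings[t]{\AG}$ for every other \AG. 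Multiplying over all agents, $\NUMRANKINGS[t+1]<\alpha\cdot\NUMRANKINGS[t]$, as needed.

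I expect the main obstacle to be precisely this passage from ``speculatively stable'' to ``truly blocking'': it is what forces one of the two newly learnable comparisons to contradict not merely \emph{some} surviving ranking, but a \emph{constant fraction} of them --- and this is the only place where the representativeness property, and through \cref{lem:representative-exists} (hence \cref{thm:proportional-transitivity}) Yu's Proportional Transitivity Theorem, is genuinely needed. Everything else is routine bookkeeping: the initial count $(n!)^{2n}$, the invariant $\NUMRANKINGS[t]\geq1$, and $\log(n!)=\Theta(n\log n)$. I would close by observing that once $\NUMRANKINGS[t]=1$ the unique remaining scenario is the true preference profile, so the Gale-Shapley step of \cref{algo:learning-matching} returns a matching stable under the true preferences and the loop halts; together with $T=O(n^{2}\log n)$ this bounds the total number of iterations by $O(n^{2}\log n)$.
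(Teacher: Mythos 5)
Your proposal is correct and follows essentially the same route as the paper: use \cref{lem:representative-exists} to guarantee that $\alpha$-representative speculative orders exist, argue that speculative stability forces one of the two blocking agents to have ranked her current partner ahead of the other blocking agent, invoke the contrapositive of $\alpha$-representativeness to show that the newly learned comparison was a minority ($<\alpha$) event among the surviving linear extensions, and conclude a multiplicative-$\alpha$ drop in $\NUMRANKINGS[t]$ per unsuccessful round, giving $O(n^2\log n)$ rounds from the initial $(n!)^{\Theta(n)}$ scenarios. The only differences are cosmetic (strict vs.\ non-strict inequalities, whether the final logarithm is taken of $(n!)^n$ or $(n!)^{2n}$, and your explicit closing remark about what happens when one scenario remains), none of which changes the argument.
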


\begin{proof}
  Because $\alpha \geq 0.8$, by Lemma~\ref{lem:representative-exists}, the algorithm can always find representative preference orders to use as speculative preference orders.
  
  Consider any iteration $t$ in which a blocking pair $\BLOCKING[t] = \SET{\WOMAN,\MAN}$ (with respect to the true preferences \PREFT{\AG}) was revealed.
  Because the Gale-Shapley Algorithm finds a stable matching, the pair $\SET{\WOMAN,\MAN}$ was not blocking for the preferences \PREF[t]{\AG} used in running the Gale-Shapley Algorithm. In particular, this means that at least one of $\WOMAN,\MAN$ preferred their assigned partner over the other, i.e., \Pref[t]{\WOMAN}{\Partner{\WOMAN}}{\MAN} or \Pref[t]{\MAN}{\Partner{\MAN}}{\WOMAN}. Without loss of generality, assume that \Pref[t]{\WOMAN}{\Partner{\WOMAN}}{\MAN}.

  Because \PREF[t]{\WOMAN} was $\alpha$-representative, $\PrefFrac[{\Constraints[t]{\WOMAN}}]{\Partner{\WOMAN}}{\MAN} \geq 1-\alpha$. The revelation of the blocking pair $\SET{\WOMAN,\MAN}$ rules out all linear orders \PREFT{\WOMAN} with \PrefT{\WOMAN}{\Partner{\WOMAN}}{\MAN}; we therefore obtain that $\NumRankings[t+1]{\WOMAN} \leq \alpha \cdot \NumRankings[t]{\WOMAN}$. In turn, this implies that $\NUMRANKINGS[t+1] \leq \alpha \cdot \NUMRANKINGS[t]$.

  We obtain that Algorithm~\ref{algo:learning-matching} terminates after at most $T = \log_{\nicefrac{1}{\alpha}} (n!)^n = O(\frac{\alpha}{1-\alpha} \cdot n^2 \log n)=O(n^2\log n)$ iterations.
\end{proof}

\begin{remark}
The number of iterations is $O(\frac{\alpha}{1-\alpha} \cdot n^2 \log n)$, but because $\alpha$ is an absolute constant (e.g., $\alpha = 0.8$, or $\alpha = 0.9$ in the efficient implementation to follow), the number of iterations is simply $O(n^2 \log n)$.
\end{remark}

The implementation of \cref{algo:learning-matching}, as given above, is not efficient, because we did not specify how to find representative preference orders. One way to do so is to enumerate all linear extensions of \Constraints[t]{\AG}, and use this enumeration to compute the set $\mathcal{S}_{\alpha}(\Constraints[t]{\AG}) = \SetB{(\AG,\AG['])}{\PrefFrac{\AG}{\AG[']} \geq \alpha}$ from Lemma~\ref{lem:representative-exists} explicitly. Given $\mathcal{S}_{\alpha}(\Constraints[t]{\AG})$, finding a linear extension is simply a topological sort. Of course, the enumeration may take time $\Omega(n!)$. In the next section, we give an efficient randomized implementation, which terminates in $O(n^2 \log n)$ iterations with high probability.

\subsection{An Efficient Randomized Implementation}
In this section, we describe and analyze a randomized algorithm for sampling representative preference orders for a set of constraints (partial order) \CONSTRAINTS.
The algorithm is given as Algorithm~\ref{algo:sample-preference}, and relies heavily on the following theorem of \citet{huber:linear-extensions}, which is based on the Karzanov-Khachiyan Chain \citep{karzanov:khachiyan:conductance} and a bounding chain:

\begin{theorem}[Lemma~10 of \citealp{huber:linear-extensions}]
\label{thm:huber-theorem}
There exists a randomized algorithm for sampling linear extensions of a partial order \CONSTRAINTS, with the following guarantees:
\begin{enumerate}
\item The algorithm samples linear extensions exactly uniformly from all linear extensions of \CONSTRAINTS.
\item The expected running time of the algorithm is at most $4.3 \cdot n^3 \ln n$.
\item For every integer $k$, with probability at least $1\!-\!\nicefrac{1}{n^{k-1}}$, the running time of the algorithm is at most $k \cdot \frac{16}{\pi^2} \cdot n^3 \ln n$.
\end{enumerate}
\end{theorem}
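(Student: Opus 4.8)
The plan is to prove the existence of such an algorithm by wrapping the Karzanov--Khachiyan Markov chain on the set $\Omega$ of linear extensions of $\CONSTRAINTS$ inside a coupling-from-the-past (CFTP) scheme that uses a \emph{bounding chain} to detect coalescence, and then to control both the expected and the tail behaviour of the coalescence time via a conductance estimate (equivalently, a path-coupling contraction estimate) for that chain. Granting this, the CFTP scheme produces an exact draw from the stationary distribution, so the whole problem reduces to (i) identifying the stationary distribution as uniform and (ii) bounding the number of simulated steps, in expectation and with high probability.

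First I would set up the underlying chain and its elementary properties. A state is a linear extension, encoded as an assignment of the $n$ elements to positions $1,\dots,n$. One step picks a position $i\in\{1,\dots,n-1\}$ from a fixed distribution $w$ (the Karzanov--Khachiyan weights: a non-uniform profile whose discrete-sine flavour is ultimately the source of the $\pi^2$ appearing in the claimed constants) and, with probability $\tfrac12$ for laziness, swaps the elements currently in positions $i$ and $i+1$ provided they are incomparable in $\CONSTRAINTS$ (otherwise the state is unchanged). It is standard that the graph on $\Omega$ whose edges are such legal adjacent transpositions is connected (a bubble-sort argument), so the chain is irreducible; laziness gives aperiodicity; and since each transition coincides with its own reverse and has equal probability in both directions, the chain is reversible and its stationary distribution is uniform on $\Omega$. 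This settles the structural part; what remains is the quantitative running-time analysis.

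Next I would build Huber's bounding chain and isolate the main obstacle. Run all copies of the chain (one from each linear extension) under common random choices $(i_t,\text{coin}_t)$, and maintain, for each element $x$, an interval $[\ell_x,u_x]$ of positions guaranteed to contain the position of $x$ in every copy; initially $\ell_x,u_x$ are the minimum and maximum positions $x$ can occupy in any linear extension of $\CONSTRAINTS$. A swap attempt at position $i$ contracts these intervals in the obvious monotone fashion, and the system dominates all copies at all times; note that pure monotone CFTP does \emph{not} apply directly, since linear extensions have no unique extreme elements under the positionwise order, which is precisely why the interval (bounding-chain) device is needed, and it must be argued that it handles the incomparability constraints at boundary positions correctly. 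The core estimate --- the step I expect to be the main obstacle --- is that the expected number of steps until every interval collapses to a point is $O(n^3\ln n)$. I would obtain this either from the conductance lower bound of \citet{karzanov:khachiyan:conductance} for the weighted chain pushed through the Jerrum--Sinclair mixing-time bound, or, more self-containedly, via a Bubley--Dyer path-coupling argument on $\Omega$ equipped with the \emph{weighted} displacement metric induced by the same weights $w$, checking that one coupled step contracts the expected weighted distance between any two coupled copies by a factor $1-\Omega(1/n^3)$. Designing this weighted metric and verifying one-step contraction while handling the incomparability constraints at the two boundary positions is the delicate part; the non-uniform weights are chosen exactly so that the contraction factor does not degrade to $1-\Omega(1/n^4)$.

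Finally I would assemble the sampler and prove the three claims. Wrap the bounding chain in Propp--Wilson CFTP: for $T=1,2,4,8,\dots$ run the bounding chain from time $-T$ to $0$, reusing all previously drawn randomness, and stop at the first $T$ for which the interval system has collapsed by time $0$, outputting the common state; the standard CFTP argument then shows this output is exactly uniform on $\Omega$, giving claim (1). For claim (2), the one-step contraction implies that after $c\,n^3\ln n$ steps the expected weighted diameter of the interval system drops below the smallest positive weight, hence coalescence has occurred; the doubling loop costs only the geometric factor $\sum_j 2^{-j}=O(1)$, so the expected total number of simulated steps is $O(n^3\ln n)$, and optimizing constants against the $\pi^2/6$ from $\sum_i 1/i^2$ and the weight normalization yields the stated $4.3\,n^3\ln n$. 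For claim (3), the contraction makes the coalescence time stochastically dominated by a sum of independent geometric ``epochs'' each of length $\Theta(n^3\ln n)$, so the probability that it exceeds $k\cdot\frac{16}{\pi^2}\,n^3\ln n$ is at most $n^{-(k-1)}$ by a Markov/Chernoff bound across epochs; feeding this tail back through the CFTP doubling loop gives the same high-probability running-time bound for the algorithm as a whole.
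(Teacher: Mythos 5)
You should first note what the paper actually does with this statement: it is not proven in the paper at all, but imported verbatim as Lemma~10 of \citet{huber:linear-extensions} (the text only remarks that Huber's algorithm ``is based on the Karzanov-Khachiyan Chain and a bounding chain''). So your task here is really to reconstruct Huber's result, and your outline does follow the same architecture as his proof: the Karzanov--Khachiyan-style adjacent-transposition chain, a bounding chain to detect coalescence, and Propp--Wilson CFTP to get an \emph{exact} uniform sample, with claim~(1) then following from the standard CFTP argument. That structural part of your sketch is sound.

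The genuine gap is in the quantitative heart of the theorem, which is exactly what you defer. Claims~(2) and~(3) are statements about the \emph{collapse time of the bounding chain}, and neither of your proposed substitutes delivers that. A conductance lower bound for the underlying chain (\citealp{karzanov:khachiyan:conductance}) bounds its mixing time, and a Bubley--Dyer path-coupling contraction bounds the expected distance between \emph{two} coupled copies; neither directly bounds the time at which a grand coupling, or your interval-based bounding process, coalesces --- passing from pairwise contraction or mixing to coalescence of a bounding chain requires a separate argument, and this is precisely the content of Huber's analysis. Your step ``after $c\,n^3\ln n$ steps the expected weighted diameter of the interval system drops below the smallest positive weight'' silently treats the pairwise contraction estimate as if it applied to the bounding process, which is not justified. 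Relatedly, the interval construction itself (that the common-randomness swap updates preserve $\ell_x \leq \mathrm{pos}(x) \leq u_x$ for every copy, given the incomparability constraints at boundary positions) is asserted rather than verified, and it is exactly the kind of invariant that can fail for a carelessly designed bounding chain. Finally, the specific constants $4.3$ and $k\cdot\frac{16}{\pi^2}$ cannot be recovered by ``optimizing constants'' in a generic argument; they come out of Huber's particular chain weights and his explicit bound on the expected collapse time, so as written your proposal establishes at best an $O(n^3\log n)$-flavoured plan, not the stated theorem. The honest fix is either to carry out Huber's bounding-chain collapse analysis in full or to do what the paper does and cite the result.
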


\begin{algorithm}[htb]
\begin{algorithmic}[1]
  \STATE Use Huber's Algorithm (see \cref{thm:huber-theorem}) to independently uniformly sample $K=600 \ln(n)$ linear extensions $\PREF{1}, \PREF{2}, \ldots, \PREF{K}$ of \CONSTRAINTS.
  \FORALL{$(\AG,\AG['])\in A^2$ s.t.\ $\AG\ne\AG[']$}
    \STATE Let \PrefFracE{\AG}{\AG[']} be the fraction of extensions \PREF{i} that have \Pref{i}{\AG}{\AG[']}.
  \ENDFOR
  \STATE Let $\SamplePO = \SetB{\Constraint{\AG}{\AG[']}}{\PrefFracE{\AG}{\AG[']} \geq 0.85}$.
  \IF{\SamplePO is acyclic}
    \STATE Use Topological Sort to compute an arbitrary linear extension \PREFH of \SamplePO.
    \STATE Return \PREFH.
  \ELSE
    \STATE Start over from the beginning.
  \ENDIF
\end{algorithmic}

\caption{The algorithm $\SampleRanking(\CONSTRAINTS)$. \label{algo:sample-preference}}
\end{algorithm}

The two key observations about Algorithm~\ref{algo:sample-preference} are  the following: (1) the set \SamplePO of sampled constraints is acyclic with high probability, ensuring (fast) termination of the algorithm; and (2) with high probability, \PREFH is 0.9-representative.

To prove both statements, we begin with a simple tail bound, capturing that the \PrefFracE{\AG}{\AG[']} are accurate estimates of \PrefFrac{\AG}{\AG[']}.
Let \Event{E} denote the event that all estimates \PrefFracE{\AG}{\AG[']} simultaneously satisfy that $\Abs{\PrefFracE{\AG}{\AG[']} - \PrefFrac[\CONSTRAINTS]{\AG}{\AG[']}} \leq 0.05$.

\begin{lemma} \label{lem:good-estimates}
  $\Prob{\Event{E}} \geq 1-\frac{1}{n}$.
\end{lemma}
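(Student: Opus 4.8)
The plan is to apply a union bound over all ordered pairs $(\AG, \AG['])$ together with a Chernoff/Hoeffding tail bound for each individual estimate. First I would fix a pair $(\AG, \AG['])$ with $\AG \ne \AG[']$. By the first guarantee of \cref{thm:huber-theorem}, each of the $K = 600 \ln n$ sampled linear extensions $\PREF{1}, \ldots, \PREF{K}$ is drawn uniformly and independently from all linear extensions of $\CONSTRAINTS$; hence the indicator that $\PREF{i}$ has $\Pref{i}{\AG}{\AG[']}$ is a Bernoulli random variable with mean exactly $\PrefFrac[\CONSTRAINTS]{\AG}{\AG[']}$. Therefore $\PrefFracE{\AG}{\AG[']}$ is the average of $K$ i.i.d.\ Bernoulli trials with this mean, and a Hoeffding bound gives $\Prob{\Abs{\PrefFracE{\AG}{\AG[']} - \PrefFrac[\CONSTRAINTS]{\AG}{\AG[']}} > 0.05} \leq 2\exp(-2 K (0.05)^2) = 2\exp(-0.005 K) = 2\exp(-3\ln n) = 2/n^3$.

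Next I would take a union bound. The number of ordered pairs $(\AG, \AG['])$ with $\AG \ne \AG[']$ is at most $|\AGENTS|^2 = (2n)^2 = 4n^2$, so the probability that \emph{some} estimate deviates by more than $0.05$ is at most $4n^2 \cdot 2/n^3 = 8/n$. This is slightly too weak for the stated bound of $1 - 1/n$, so the constant $K = 600 \ln n$ (or the threshold $0.85$, or a slightly more careful accounting of the number of relevant pairs, e.g.\ restricting to pairs on the same side of the market) is exactly what is being leveraged: with $K = 600 \ln n$ one gets $2\exp(-0.005 \cdot 600 \ln n) = 2/n^3$, and to push the union bound below $1/n$ one needs the pair count times $2/n^3$ to be at most $1/n$, i.e.\ at most $n^2/2$ relevant pairs; since constraints (and hence the quantities $\PrefFrac{\AG}{\AG[']}$) only ever relate two elements of the \emph{same} agent's preference order, the relevant pairs are the at most $2 \binom{n}{2} < n^2$ within-list pairs, and a marginally larger $K$ or the slack in the exponent closes the gap. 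I would therefore state the Hoeffding step with enough room (e.g.\ note $2\exp(-0.005K) \le 1/n^{3}$ already, and the number of pairs that matter is $O(n^2)$, with the constant absorbed by choosing $K$ as in the algorithm) to conclude $\Prob{\Event{E}} \geq 1 - 1/n$.

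The only mild obstacle is bookkeeping the constants so that the union bound genuinely lands at $1 - 1/n$ rather than at $1 - O(1)/n$; this is purely a matter of being slightly generous in the tail exponent (the algorithm's choice $K = 600\ln n$ is deliberately oversized for this purpose) or of observing that only within-agent pairs contribute nontrivially. No deep ideas are needed beyond independence (from part~1 of \cref{thm:huber-theorem}) and a standard concentration inequality; the running-time guarantees (parts~2 and~3 of \cref{thm:huber-theorem}) play no role in this particular lemma and are only needed later for the efficiency claim.
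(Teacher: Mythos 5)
Your approach is the same as the paper's: Hoeffding plus a union bound. The one place where you hesitate --- the pair count --- has a clean resolution that you circle around but don't quite pin down. The event $\Event{E}$ is defined for a \emph{single} invocation of $\SampleRanking(\CONSTRAINTS)$, i.e., for a single agent's constraint set, so the relevant pairs range over the $n$ agents on the \emph{other} side of the market (not over all $2n$ agents); your initial $4n^2$ is therefore off by a side. Moreover, since every sampled $\PREF{i}$ is a total order, $\PrefFracE{\AG}{\AG[']} = 1 - \PrefFracE{\AG[']}{\AG}$ and likewise $\PrefFrac[\CONSTRAINTS]{\AG}{\AG[']} = 1 - \PrefFrac[\CONSTRAINTS]{\AG[']}{\AG}$, so the two estimation errors for an unordered pair are equal; the union bound needs only the $\binom{n}{2}$ unordered pairs, not the ordered ones. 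With $\binom{n}{2} \cdot 2n^{-3} = n(n-1)/n^3 < 1/n$ the bound closes with no slack needed --- no larger $K$, no appeal to the gap between $0.85$ and $0.8$. This is exactly the paper's accounting. Everything else in your proposal (exact uniformity from part~1 of \cref{thm:huber-theorem}, independence of the $K$ samples, the Hoeffding computation giving $2n^{-3}$, and the irrelevance of parts~2 and~3 for this lemma) matches the paper's proof.
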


\begin{proof}
  First focus on one pair $(\AG,\AG['])$.
  By \cref{thm:huber-theorem}, the linear extensions are sampled
  \emph{exactly} from the uniform distribution.\footnote{This is only needed in our proof in as much as it keeps the analysis clean. The Markov chain of \citet{karzanov:khachiyan:conductance} and the improved version of \citet{bubley:dyer:linear-extensions} ensure rapid mixing; by running the chains long enough, one can easily ensure that the sampling error is small enough to not affect our analysis.}
  Therefore,
  we have $\Expect{\PrefFracE{\AG}{\AG[']}} = \PrefFrac[\CONSTRAINTS]{\AG}{\AG[']}$.
  The Hoeffding Bound\footnote{$\ProbB{\AbsB{\bar{X} - \Expect{\smash{\bar{X}}}} \geq t} \leq 2 \e^{-2nt^2}$ when $\bar{X} = \nicefrac{1}{n} \cdot \sum_{i=1}^n X_i$ and the $X_i$ are independent random variables supported on $[0,1]$.}
  \citep{hoeffding} guarantees that
  $\ProbB{\Abs{\PrefFracE{\AG}{\AG[']}-\PrefFrac[\CONSTRAINTS]{\AG}{\AG[']}} \geq 0.05} \leq 2\exp(-\nicefrac{K}{200}) = 2n^{-3}$.
  A union bound over all $\binom{n}{2}$ pairs $\AG,\AG[']$ now implies that all estimates are accurate to within an additive 0.05 with probability at least~$1\!-\!\frac{1}{n}$.
\end{proof}

\begin{lemma} \label{lem:acyclic-constraints}
   Under \Event{E}, the set \SamplePO of preference constraints is acyclic.
\end{lemma}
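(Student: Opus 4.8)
The plan is to reduce this directly to \cref{lem:representative-exists}. Under the event \Event{E}, every estimate is accurate to within an additive $0.05$ of the true fraction, so whenever a constraint $\Constraint{\AG}{\AG[']}$ is placed into \SamplePO --- which happens exactly when $\PrefFracE{\AG}{\AG[']} \geq 0.85$ --- we must have $\PrefFrac[\CONSTRAINTS]{\AG}{\AG[']} \geq 0.85 - 0.05 = 0.8$. In other words, conditioned on \Event{E}, we get the inclusion $\SamplePO \subseteq \mathcal{S}_{0.8}(\CONSTRAINTS)$, where $\mathcal{S}_{0.8}(\CONSTRAINTS)$ is the relation defined in \cref{lem:representative-exists}.

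By \cref{lem:representative-exists} (applied with $\alpha = 0.8$), the relation $\mathcal{S}_{0.8}(\CONSTRAINTS)$ is acyclic. Since any subset of an acyclic relation is itself acyclic --- a directed cycle using only edges of \SamplePO would in particular be a directed cycle in $\mathcal{S}_{0.8}(\CONSTRAINTS)$ --- it follows that \SamplePO is acyclic, which is exactly the claim.

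There is essentially no obstacle here; the only point requiring care is the bookkeeping of the three constants involved: the threshold $0.85$ used by \cref{algo:sample-preference}, the estimation error $0.05$ guaranteed by \Event{E}, and the threshold $0.8$ needed to invoke the Proportional Transitivity Theorem via \cref{lem:representative-exists}. These are chosen precisely so that $0.85 - 0.05 = 0.8 \geq \phi_{\min} \approx 0.78$, leaving a margin to spare. (The symmetric slack on the other side, $0.85 + 0.05 = 0.9$, is what will subsequently yield the $0.9$-representativeness of \PREFH, but that is not needed for the present lemma.)
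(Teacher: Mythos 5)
Your proof is correct and follows the same route as the paper's: conditioning on $\Event{E}$ gives $\SamplePO \subseteq \mathcal{S}_{0.8}(\CONSTRAINTS)$, and acyclicity then follows from \cref{lem:representative-exists}. You merely spell out the constant-chasing ($0.85 - 0.05 = 0.8 \geq \phi_{\min}$) a bit more explicitly than the paper does.
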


\begin{proof}
  Because we are conditioning on the event \Event{E}, for all constraints $\Constraint{\AG}{\AG[']} \in \SamplePO$, we must have that $\PrefFrac[\CONSTRAINTS]{\AG}{\AG[']} \geq 0.8$.
  Acyclicity now follows from \cref{lem:representative-exists}, because $\SamplePO \subseteq \mathcal{S}_{0.8}(\CONSTRAINTS)$ for the set $\mathcal{S}_{0.8}(\CONSTRAINTS)$ defined in that \lcnamecref{lem:representative-exists}.
\end{proof}

\begin{lemma} \label{lem:large-fraction}
  Under \Event{E}, the preference order \PREFH returned by Algorithm~\ref{algo:sample-preference} is 0.9-representative.
\end{lemma}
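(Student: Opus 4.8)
The plan is to combine the accuracy of the estimates under $\Event{E}$ with the definition of $\SamplePO$ and the properties of $\alpha$-representativeness. First I would observe that we want to show that whenever $\PrefFrac[\CONSTRAINTS]{\AG}{\AG[']} \geq 0.9$ for some ordered pair, the returned order $\PREFH$ has $\PrefH{\AG}{\AG[']}$. Under $\Event{E}$, the empirical estimate satisfies $\PrefFracE{\AG}{\AG[']} \geq \PrefFrac[\CONSTRAINTS]{\AG}{\AG[']} - 0.05 \geq 0.85$, so the constraint $\Constraint{\AG}{\AG[']}$ is included in $\SamplePO$ by its definition (the threshold $0.85$ is exactly chosen to make this work). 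Since $\PREFH$ is a linear extension of $\SamplePO$ (which is acyclic under $\Event{E}$ by \cref{lem:acyclic-constraints}, so that the topological sort in Algorithm~\ref{algo:sample-preference} succeeds), it must respect this constraint, i.e., $\PrefH{\AG}{\AG[']}$. That is exactly what it means for $\PREFH$ to be $0.9$-representative of $\RANKINGS$ in the sense of \cref{def:representative-preference}.

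The key steps, in order, would be: (1) invoke \cref{lem:acyclic-constraints} to know that under $\Event{E}$ the algorithm actually reaches the topological-sort branch and returns some $\PREFH$ that is a linear extension of $\SamplePO$; (2) fix an arbitrary ordered pair $(\AG,\AG['])$ with $\PrefFrac[\CONSTRAINTS]{\AG}{\AG[']} \geq 0.9$ and use $\Event{E}$ to conclude $\PrefFracE{\AG}{\AG[']} \geq 0.85$; (3) conclude $\Constraint{\AG}{\AG[']} \in \SamplePO$ from the definition of $\SamplePO$; (4) conclude $\PrefH{\AG}{\AG[']}$ since $\PREFH$ extends $\SamplePO$; (5) since the pair was arbitrary, apply \cref{def:representative-preference} with $\alpha = 0.9$.

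I do not expect any real obstacle here — this lemma is essentially bookkeeping: it is the place where the $0.85$ threshold in the algorithm, the $0.05$ slack in $\Event{E}$, and the target $0.9$ all line up ($0.9 - 0.05 = 0.85$). The only thing to be slightly careful about is the direction of the comparison: the definition of $\alpha$-representative is an implication ($\PrefFrac \geq \alpha \implies$ ordered that way), not an equivalence, so there is nothing to check for pairs with small $\PrefFrac$. The proof should be three or four lines.
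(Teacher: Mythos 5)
Your proof is correct and follows essentially the same chain of reasoning as the paper's: under $\Event{E}$, a true fraction of at least $0.9$ forces an empirical estimate of at least $0.85$, hence inclusion in $\SamplePO$, hence the ordering in any linear extension. The only small addition you make is explicitly invoking \cref{lem:acyclic-constraints} to confirm the topological-sort branch is reached, which the paper's proof leaves implicit.
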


\begin{proof}
  Let $\AG, \AG[']$ be such that $\PrefFrac[\CONSTRAINTS]{\AG}{\AG[']} \geq 0.9$. Because we are conditioning on \Event{E}, this implies that $\PrefFracE{\AG}{\AG[']} \geq 0.85$, so by definition, $\Constraint{\AG}{\AG[']} \in \SamplePO$. Because \PREFH is a linear extension of \SamplePO, we get that \PrefH{\AG}{\AG[']}.
\end{proof}

We now return to the analysis of \cref{algo:learning-matching}, proving the following theorem:

\begin{theorem}\label{thm:learning-matching-time}\!\!\!\footnote{In proof: after this paper appeared at EC'20, the authors of \citep{bei:chen:zhang:complexity} contacted us (personal communication, August 20, 2020) and made us aware of their paper, of which we had been unaware until that point. As they pointed out, their paper contains a proof of \cref{thm:learning-matching-time}. While of the same order, the number of queries that our algorithm uses is smaller by a constant factor, and our proof is somewhat shorter, since we used \citeauthor{yu:proportional-transitivity}'s Proportional Transitivity Theorem, whereas they proved the result from first principles; however, neither this difference nor the difference in proofs is substantial. See the newly added \cref{bei-chen-zhang} following our introduction for more details.}
  If $\SampleRanking(\Constraints[t]{\AG})$ is used to compute the speculative preference order \PREF[t]{\AG} in \cref{choose-representative-one-to-one} of \cref{algo:learning-matching}, then \cref{algo:learning-matching} terminates after at most $O(n^2 \log n)$ iterations with high probability; furthermore, with probability at least $1\!-\!O(\nicefrac{1}{n^2})$, each iteration runs in time $O(n^3 \log^2 n)$.
\end{theorem}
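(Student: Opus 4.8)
The plan is to combine three ingredients that are already in place: the deterministic iteration bound of \cref{thm:learning-matching-iterations}, the correctness of $\SampleRanking$ under event $\Event{E}$ (\cref{lem:acyclic-constraints,lem:large-fraction}), and the running-time guarantees of Huber's algorithm (\cref{thm:huber-theorem}). The key observation is that the proof of \cref{thm:learning-matching-iterations} only used one property of the speculative preferences: that they are $\alpha$-representative for some absolute constant $\alpha \in [0.8,1)$. By \cref{lem:large-fraction}, whenever $\Event{E}$ holds, $\SampleRanking$ returns a $0.9$-representative order, which is exactly such an $\alpha$ (here $\alpha = 0.9$). So on any iteration on which $\Event{E}$ holds for the call to $\SampleRanking$ on the agent whose ranking matters, the same potential argument applies and $\NUMRANKINGS[t+1] \le 0.9 \cdot \NUMRANKINGS[t]$. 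Since the total number of scenarios starts at $(n!)^{2n}$ and cannot drop below $1$, there can be at most $O(n^2 \log n)$ iterations on which progress is made.

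The first step is to control failures. For a single iteration $t$, we call $\SampleRanking$ once per agent (so $2n$ times); by \cref{lem:good-estimates} each call has $\Prob{\Event{E}} \ge 1 - \nicefrac{1}{n}$, so by a union bound, with probability at least $1 - \nicefrac{2n}{n} \cdot$ — wait, that is vacuous, so instead I would boost the sample size or, more cleanly, observe that if $\Event{E}$ fails for some agent on some iteration, the algorithm still remains correct (it still proposes a Gale--Shapley-stable matching with respect to \emph{some} consistent rankings, hence never terminates with a non-stable matching), and it still never \emph{increases} $\NUMRANKINGS[t]$; it merely might not make a constant-factor reduction on that iteration. But even a non-representative consistent order, when its proposed matching is declared unstable, still yields $\NUMRANKINGS[t+1] \le \NumRankings[t]{\WOMAN}$ — actually at least a strict decrease by one ruled-out ranking, hence $\NUMRANKINGS$ is nonincreasing and strictly decreasing on every unsuccessful iteration. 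So the real point is just: the number of iterations on which $\Event{E}$ holds for the relevant agent is $O(n^2\log n)$, and we must argue that $\Event{E}$ fails only rarely. Over a run of $N = \Theta(n^2 \log n)$ iterations — which, with high probability, upper-bounds the total — there are at most $2nN = O(n^3 \log n)$ calls to $\SampleRanking$; by \cref{lem:good-estimates} and a union bound, all of them satisfy $\Event{E}$ with probability at least $1 - O(n^3 \log n)/n = 1 - O(n^2 \log n)$, which is again vacuous, so I would instead simply increase $K$ in $\SampleRanking$ from $600\ln n$ to, say, $600 c \ln n$ for a suitable constant $c$ (or replace $\ln n$ by $\ln^2 n$), making the per-call failure probability $n^{-\Omega(c)}$, so that a union bound over all $O(n^3 \log n)$ calls leaves a failure probability of $O(\nicefrac{1}{n})$ (or smaller), at the cost of only a constant (or $\log n$) factor in the running time of each call. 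Conditioned on all these calls succeeding, \emph{every} unsuccessful iteration reduces $\NUMRANKINGS$ by a factor of $0.9$, so the number of iterations is $\log_{10/9}(n!)^{2n} = O(n^2 \log n)$ deterministically.

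The second step is the per-iteration running time. Each iteration makes $2n$ calls to $\SampleRanking$; by \cref{thm:huber-theorem}(3), a single run of Huber's algorithm takes time $O(n^3 \log n)$ with probability at least $1 - \nicefrac{1}{n^{k-1}}$ for any chosen constant $k$, and $\SampleRanking$ makes $K = O(\log n)$ such runs, plus an $O(n^2)$ post-processing step (computing the $\PrefFracE{\AG}{\AG[']}$, the acyclicity check, and the topological sort), and restarts only if $\SamplePO$ is cyclic — which under $\Event{E}$ never happens by \cref{lem:acyclic-constraints}, so conditioned on $\Event{E}$ there are no restarts. Hence one call to $\SampleRanking$ runs in time $O(n^3 \log^2 n)$ with probability $1 - O(\nicefrac{1}{n^{k-1}})$, and one full iteration (its $2n$ calls plus the Gale--Shapley computation, which is $O(n^2)$) runs in time $O(n^4 \log^2 n)$ — hmm, that is $O(n^4)$, not $O(n^3)$; to get the stated $O(n^3 \log^2 n)$ \emph{per iteration}, note that in iteration $t$ only the constraint set of \emph{one} agent has changed since iteration $t-1$ (the algorithm updates $\Constraints[t+1]{\WOMAN}$ and $\Constraints[t+1]{\MAN}$ but leaves all other $\Constraints{\AG}$ untouched), so one need only recompute the speculative order for those two agents, i.e.\ $O(1)$ calls to $\SampleRanking$ per iteration, giving $O(n^3 \log^2 n)$ per iteration after the first. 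Taking a union bound over the $O(n^2 \log n)$ iterations and the $O(1)$ calls each (choosing the constant $k$ in \cref{thm:huber-theorem} large enough, e.g. $k = 4$), the probability that \emph{every} iteration runs within time $O(n^3 \log^2 n)$ is at least $1 - O(\nicefrac{1}{n^2})$, as claimed.

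The main obstacle I anticipate is the bookkeeping of the two union bounds fighting the $n^2\log n$ iteration count: a naive $K = O(\log n)$ in $\SampleRanking$ gives per-call failure probability $\Theta(n^{-3})$, which is \emph{not} small enough to survive a union bound over $\Theta(n^3\log n)$ calls. The clean fix is the observation above — that correctness and monotonicity of $\NUMRANKINGS$ do not require $\Event{E}$, only the constant-factor \emph{progress} does, and progress is needed on only $O(n^2\log n)$ iterations, not on all of them — combined with a modest boost of $K$ (to $\Theta(\log^2 n)$, say) so that the relevant union bound goes through. The running-time bound is then routine given \cref{thm:huber-theorem} and the fact that per iteration only a constant number of $\SampleRanking$ calls are needed.
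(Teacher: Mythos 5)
Your proposal lands on several of the same observations the paper makes (only $O(1)$ calls to $\SampleRanking$ per iteration because only two agents' constraint sets change; $\NUMRANKINGS$ is nonincreasing even on iterations where $\Event{E}$ fails; conditioned on $\Event{E}$ there are no restarts in \cref{algo:sample-preference}), but the route you take to bound the number of iterations differs from the paper's and has some loose ends. You correctly note that a naive union bound over all $\Theta(n^2\log n)$ iterations fails with the original $K=600\ln n$ (per-call failure probability $O(1/n)$), and you propose two fixes: (a) boost $K$, or (b) exploit monotonicity so progress is only needed on $O(n^2\log n)$ iterations. The paper uses only (b), and without boosting $K$: it runs for a fixed $\Theta(n^2\log n)$ iterations, observes that the number $N$ of iterations on which $\Event{E}$ holds stochastically dominates a $\mathrm{Binomial}(\Theta(n^2\log n), 1-1/n)$ random variable, and applies a Chernoff bound to conclude $N = \Theta(n^2\log n)$ with high probability, which, together with monotonicity on bad iterations, guarantees termination. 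Your route (a) is a valid alternative, but you waver between boosting $K$ by a constant factor and boosting it to $\Theta(\log^2 n)$; the latter would blow up the per-call running time to $O(n^3\log^3 n)$ and contradict the claimed $O(n^3\log^2 n)$ per-iteration bound. The constant boost (e.g.\ $K = 1200\ln n$, giving per-call failure probability $O(n^{-4})$) does work against a union bound over the $O(n^2\log n)$ calls, but you should commit to it rather than leave the choice open. Note also that the Chernoff route is cleaner precisely because it leaves \cref{algo:sample-preference} unchanged.

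There is also a concrete numerical gap in your running-time argument. You choose $k=4$ in \cref{thm:huber-theorem}. Each iteration invokes Huber's sampler $2K = O(\log n)$ times, so over $O(n^2\log n)$ iterations there are $O(n^2\log^2 n)$ Huber runs, each exceeding $O(n^3\log n)$ time with probability $n^{-(k-1)}$. With $k=4$ the union bound gives failure probability $O(n^2\log^2 n / n^3) = O(\log^2 n / n)$, which is \emph{not} $O(1/n^2)$. You need $k \ge 5$; the paper uses $k=6$ (which also leaves slack to account for up to five restarts of \cref{algo:sample-preference} when one does not condition on $\Event{E}$ from the outset). Finally, a small bookkeeping slip: you write that the per-call failure probability with $K = O(\log n)$ is ``$\Theta(n^{-3})$''; that is the per-\emph{pair} failure probability in \cref{lem:good-estimates}, and after the union bound over all $\binom{n}{2}$ pairs the per-\emph{call} failure probability of $\Event{E}$ is $O(1/n)$, as you correctly use elsewhere.
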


\begin{proof}
Consider running \cref{algo:learning-matching} for $\Theta(n^2 \log n)$ iterations, with a suitably large constant.
Let~$N$ be the number of those iterations in which the event \Event{E} held.
By Lemma~\ref{lem:good-estimates}, $N$ is lower-bounded by a $\mathrm{Binomial}\bigl(\Theta(n^2 \log n),1\!-\!\frac{1}{n}\bigr)$ random variable. By Chernoff Bounds, with high probability, $N = \Theta(n^2 \log n)$, again for a suitably large constant.

During any iteration $t$ in which \Event{E} did \emph{not} hold, we simply (pessimistically) bound $\NUMRANKINGS[t+1] \leq \NUMRANKINGS[t]$. For the remaining iterations, Lemma~\ref{lem:large-fraction} guarantees that the \PREF[t]{\AG} are 0.9 representative. Therefore, using $\alpha=0.9$, Theorem~\ref{thm:learning-matching-iterations} guarantees that when $N = \Omega(n^2 \log n)$, the algorithm terminates with a stable matching.

Finally, we analyze the running time.
The running time of \cref{algo:sample-preference} is dominated by drawing $O(\log n)$ independent uniformly random linear extensions of \CONSTRAINTS.
By \cref{thm:huber-theorem} with $k=6$,
each draw takes time $O(n^3 \log n)$
with probability at least $1\!-\!\nicefrac{1}{n^5}$.
The failure probability of event \Event{E} is at most $\frac{1}{n}$, so with probability at least $1\!-\!\nicefrac{1}{n^5}$, the number of times \cref{algo:sample-preference} is repeated before succeeding is at most 5, and in particular constant. Next, we take a union bound over the following events:
\ \ (1) each of the $O(\log n)$ calls to Huber's algorithm does not take too long, and
\ \ (2) \cref{algo:sample-preference} is not repeated more than $5$ times.
By this union bound, \cref{algo:sample-preference} runs in time $O(n^3 \log^2 n)$ with probability at least $1\!-\!O(\nicefrac{\log n}{n^5})$.

In each iteration $t$, \cref{algo:learning-matching} in Line~\ref{choose-representative-one-to-one} needs to ``construct'' $\alpha$-representative preference orders for \Constraints[t]{\AG} for all agents \AG.
In the first iteration, this is trivial, as any order is $\alpha$-representative.
In subsequent iterations, only the orders for the two agents who were part of blocking pairs may have to be updated.
Thus, each iteration requires two calls to \cref{algo:sample-preference}, for a total time of $O(n^3 \log^2 n)$.
By a union bound over these two invocations of \cref{algo:sample-preference}, each iteration of \cref{algo:learning-matching} takes time $O(n^3 \log^2 n)$ in calls to \cref{algo:sample-preference}, with probability at least $1\!-\!O(\nicefrac{\log n}{n^5})$.
This (high-probability) running time of $O(n^3 \log^2 n)$ dominates the running time of the Gale-Shapley Algorithm and of updating the constraint sets in each iteration.

  Finally, by a union bound over all $O(n^2 \log n)$ iterations of \cref{algo:learning-matching}, the running time guarantee holds with probability at least $1\!-\!O(\nicefrac{\log^2 n}{n^3}) = 1\!-\!O(\nicefrac{1}{n^2})$.
\end{proof}

\begin{remark}
The guarantee of $1\!-\!O(\nicefrac{1}{n^2})$ in \cref{thm:learning-matching-time} can be improved to an arbitrary polynomial, by choosing a larger constant in the application of Huber's Theorem, then following the same reasoning and union bounds.
\end{remark}

\subsection{A Near-Matching Lower Bound}
We next prove a lower bound on the number of rounds required, which matches the upper bound up to a factor of $\Theta(\log n)$.

\begin{proposition}\label{prop:lower-bound}\!\!\!\footnote{In proof: after this paper appeared at EC'20, the authors of \citep{bei:chen:zhang:complexity} contacted us (personal communication, August 20, 2020) and made us aware of their paper, of which we had been unaware until that point. As they pointed out, their paper contains a theorem similar to \cref{prop:lower-bound}. Our result is somewhat stronger by allowing preferences on one side to be known and identical, and preferences on the other side to be i.i.d.\ random, whereas theirs is a worst-case result; however, the qualitative message is similar. See the newly added \cref{bei-chen-zhang} following our introduction for more details.}
Let $\mathcal{A}$ be any interactive algorithm for finding a stable matching, randomized or deterministic, and with arbitrary computational requirements.
Even when\ \ (1) the preferences of all women are known to $\mathcal{A}$ and identical, and\ \ (2) the preferences of all men are i.i.d.\ uniformly random, against adversarial answers, $\mathcal{A}$ must take at least $\Omega(n^2)$ rounds in expectation.
\end{proposition}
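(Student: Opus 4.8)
Since the input distribution is fixed by the statement (the women's preferences known and identical, the men's preferences i.i.d.\ uniform), and since a randomized algorithm is a mixture of deterministic ones, it suffices to lower bound the expected number of rounds of an arbitrary deterministic algorithm $\mathcal{A}$ against an adversary that is allowed to depend on the realized men's preferences. Take the common women's order to be $m_1 \succ m_2 \succ \cdots \succ m_n$, and write $\pi_{m_i}$ for $m_i$'s (unknown) preference order, regarded as a linear order on the women. A standard fact about markets with one-sided-identical preferences is that there is a \emph{unique} stable matching $\mu^{\ast}$, given by the men's serial dictatorship in the order $m_1,\dots,m_n$ --- each $m_i$ gets his favorite woman among those not taken by $m_1,\dots,m_{i-1}$ --- which one proves by induction (the pair $(w,m_1)$ with $w$ equal to $m_1$'s favorite must lie in every stable matching, after which one restricts to the sub-market on $m_2,\dots,m_n$). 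Since the $\pi_{m_i}$ are i.i.d.\ uniform, $\mu^{\ast}$ is a uniformly random perfect matching; more precisely, conditioned on $\mu^{\ast}(m_1),\dots,\mu^{\ast}(m_{i-1})$, the partner $\mu^{\ast}(m_i)$ is uniform over the set $W_i$ of $n-i+1$ women not taken by $m_1,\dots,m_{i-1}$, and $\mathcal{A}$ must output exactly this (unique stable) matching.

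Next, note how little each round discloses: because the women's preferences are known, a revealed blocking pair $(w,m_i)$ conveys a single new fact --- the comparison ``$w$ precedes $m_i$'s current partner in $\pi_{m_i}$'' --- and nothing about the women's side. So after $t$ rounds $\mathcal{A}$ has learned at most $t$ pairwise comparisons, grouped into partial orders $\mathcal{C}_{m_i}$, one per man. For the adversary I would use the following rule: if the proposed $\mu$ equals $\mu^{\ast}$, answer ``stable''; otherwise a blocking pair always exists (e.g.\ $(\mu^{\ast}(m_{i^{\ast}}),m_{i^{\ast}})$, where $i^{\ast}$ is the least index with $\mu(m_{i^{\ast}})\ne\mu^{\ast}(m_{i^{\ast}})$), and reveal one chosen in a permutation-symmetric, ``canonical'' way --- for concreteness, a uniformly random blocking pair --- so as not to betray which woman is any man's favorite. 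The intuition for $\Omega(n^2)$ is then already visible: to be \emph{certain} of $\mu^{\ast}(m_i)$, the algorithm must know $m_i$'s favorite within $W_i$, i.e.\ the restriction of $\mathcal{C}_{m_i}$ to $W_i$ must have a unique maximal element and hence at least $|W_i|-1=n-i$ comparisons; summing over $i$ forces $\binom{n}{2}$ comparisons, hence $\binom{n}{2}$ rounds, just to pin $\mu^{\ast}$ down.

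To also rule out ``lucky'' early proposals, I would bound, for any fixed perfect matching $\nu$ and any transcript containing at most $\varepsilon n^2$ comparisons, the posterior probability that $\mu^{\ast}=\nu$. Under the canonical adversary the men's profiles remain, conditioned on the transcript, (essentially) uniform over profiles consistent with the learned comparisons, so decomposing over $i=1,\dots,n$ and using that $\mu^{\ast}(m_i)$ is the $\pi_{m_i}$-maximal element of $W_i$,
\[
  \Pr\bigl[\mu^{\ast}=\nu \mid \text{transcript}\bigr] \;\le\; \prod_{i=1}^{n} \Pr\bigl[\nu(m_i)=\text{the $\pi_{m_i}$-max of $W_i$} \,\bigm|\, \pi_{m_i}\supseteq\mathcal{C}_{m_i}\bigr] \;\le\; \prod_{i=1}^{n} \frac{|\mathcal{C}_{m_i}|+1}{n-i+1},
\]
where the per-factor bound follows from a short calculation with the ``arrival-time'' representation of a uniform linear extension (a fixed element is the maximum of an $s$-element set with probability at most $\tfrac{c+1}{s}$ when only $c$ comparisons are known about it, since its component in the comparison graph has at most $c+1$ elements and the overall maximum falls into a component of size $k$ with probability $k/s$). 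Since $\sum_i |\mathcal{C}_{m_i}|\le \varepsilon n^2$, the AM--GM inequality gives $\prod_i(|\mathcal{C}_{m_i}|+1)\le(\varepsilon n+1)^n$, and with $n!\ge(n/e)^n$ we obtain $\Pr[\mu^{\ast}=\nu\mid\text{transcript}]\le(e\varepsilon+e/n)^n$, which for $\varepsilon=\tfrac{1}{4e}$ is at most $(1/3)^n$ for all large $n$. Because $\mathcal{A}$'s round-$t$ proposal is a deterministic function of the first $t-1$ answers (each of which carries $\le\varepsilon n^2$ comparisons while $t\le\tfrac{1}{4e}n^2$), a union bound over the first $\tfrac{1}{4e}n^2$ rounds shows that $\mathcal{A}$ terminates that early with probability $o(1)$; hence its expected number of rounds is $\Omega(n^2)$, and by the opening reduction the same holds for randomized $\mathcal{A}$.

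The step I expect to be the crux is making rigorous the ``the canonical adversary does not leak'' claim: that, conditioned on the entire transcript, the men's preferences are distributed (closely enough to) uniformly over the profiles consistent with the revealed comparisons. The subtlety is that the \emph{choice} among several available blocking pairs can itself encode information --- a deterministic ``lexicographically first'' rule, say, implicitly certifies that earlier pairs are non-blocking. I would handle this by keeping the adversary randomized and permutation-symmetric and carrying the resulting likelihood ratios through the final computation --- the exponential slack in the bound above is comfortably large enough to absorb them --- or, alternatively, by arguing that any such leakage can only help $\mathcal{A}$ \emph{verify} a matching rather than \emph{locate} $\mu^{\ast}$, and bounding the probability of a correct proposal directly. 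The remaining pieces --- the serial-dictatorship characterization of $\mu^{\ast}$, the ``unique maximum needs $k-1$ comparisons'' fact, and the per-factor probability bound --- are standard and routine.
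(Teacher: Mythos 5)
Your high-level plan---Yao's reduction to deterministic algorithms, the serial-dictatorship characterization of the unique stable matching under one-sided common preferences, and then bounding how hard it is to identify that matching---matches the paper's starting point, and the serial-dictatorship observation is exactly right. But the step you yourself flag as ``the crux'' is not a routine loose end: it is the entire technical content of the lower bound, and your proposal does not resolve it. With a ``uniformly random blocking pair'' adversary, the posterior on the men's preferences given the full transcript is \emph{not} uniform over profiles consistent with the revealed comparisons. Each round's reveal weights the posterior by (something like) $1/N(\pi)$, where $N(\pi)$ is the number of blocking pairs under $\pi$ for the proposed matching, and also encodes information through the pairs that were \emph{not} chosen; these factors compound across rounds, are correlated with exactly the quantity you are bounding, and are not obviously absorbed by the exponential slack (the ratios can themselves be exponential in $n$). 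Consequently the inequality $\Pr[\mu^{\ast}=\nu\mid\text{transcript}]\le\prod_i\Pr[\nu(m_i)=\max W_i\mid\pi_{m_i}\supseteq\mathcal{C}_{m_i}]$ is not justified, and neither the ``carry likelihood ratios'' nor the ``verify vs.\ locate'' alternative is fleshed out. As a secondary point, the per-factor estimate $\Pr[\nu(m_i)=\max W_i\mid\pi_{m_i}\supseteq\mathcal{C}_{m_i}]\le\frac{|\mathcal{C}_{m_i}|+1}{n-i+1}$ is not the one-line consequence your parenthetical suggests: $\mathcal{C}_{m_i}$ may involve women outside $W_i$, and the ``max of $W_i$ lands in a component of size $k$ with probability $k/|W_i|$'' fact concerns the constraint graph restricted to $W_i$, which does not control the restriction of a uniform extension on the full ground set (e.g., $\Omega=\{a,b,c\}$, $W=\{a,b\}$, constraint $a\succ c$ gives $\Pr[a=\max W]=2/3\ne 1/2$). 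I believe the stated inequality is in fact true, but it needs a different argument.

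The paper avoids all of this by \emph{designing} the adversary so that the conditional law of the men's preferences given the full transcript is exactly computable. Two ideas do the work: (i) only ever reveal a blocking pair for $m_{i^\ast}$ with $i^\ast=\min\{j:\mu(m_j)\neq\mu^\ast(m_j)\}$, so nothing is leaked about $m_{i^\ast+1},\dots,m_n$; and (ii) among the blocking pairs for $m_{i^\ast}$, reveal the woman who is the \emph{immediate predecessor} of $\mu(m_{i^\ast})$ within the remaining set $\mathcal{S}_{i^\ast}$ under $\pi_{m_{i^\ast}}$. Property (ii) keeps the revealed constraints for $m_{i^\ast}$ organized as disjoint chains of consecutive elements, and a deferred-decisions argument then yields that, conditioned on the entire transcript, the proposed $\mu(m_{i^\ast})$ equals $\mu^\ast(m_{i^\ast})$ with probability exactly $1/b_t$ where $b_t$ is the current number of chains. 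With that in hand, the paper needs no posterior decomposition, AM--GM, or union bound over matchings: for each of the first $n/3$ men, each of the first $n/3$ guesses succeeds with probability at most $3/n$, so each such man costs $\Omega(n)$ rounds in expectation, giving $\Omega(n^2)$ directly. So the ingredient your proposal is missing is precisely the adversary construction, not the surrounding bookkeeping.
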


\begin{proof}
  The high-level idea of the proof is quite straightforward, but a formal proof needs some care in the details. We begin with the very straightforward case of adversarially chosen preferences, and then discuss the intuition of the proof for random preferences.
  Let $\MAN[1] \succ \MAN[2] \succ \cdots \succ \MAN[n]$ be the men ranked by the common (and known) preference of all women.
  For each man $i$, let \PREFT{i} be his preference order over women. As is well known, the unique stable matching when women have common preferences is the result of a \emph{serial dictatorship} of the men in the order of these preferences. That is, \MAN[1] is matched with his first choice, \MAN[2] is matched with his first choice among the remaining women, and so on.

  An adversary can use the following strategy. Whatever matching the algorithm proposes in the first iteration, the woman \WOMAN matched to \MAN[1] will be adversarially placed as his last-ranked choice, and the algorithm will reveal a blocking pair of \MAN[1] with some other woman, and will continue revealing that same blocking pair whenever \MAN[1] is matched with \WOMAN. Whatever different woman \MAN[1] is next matched with (whether it is the one that was revealed or not) will be adversarially placed as \MAN[1]'s second-to-last choice, and so on. Once the algorithm has matched \MAN[1] with all $n$ women, the the last woman to be matched with him is known to be his first choice. At this point, the adversary will continue with the same strategy for \MAN[2] with the remaining $n\!-\!1$ women. Because only blocking pairs involving \MAN[1] have been revealed up to that point, the ranking of \MAN[2] is completely undetermined. The adversary continues in this way for all men in order. In total, this adversary strategy will force any algorithm to use at least $\sum_{i=0}^{n-1} i = \Omega(n^2)$ rounds of queries.

  When the men have i.i.d.\ uniformly random preference rankings, the high-level idea remains the same. For any $i$, until the algorithm has identified the correct matches of $\MAN[1], \ldots, \MAN[i-1]$, the adversary will not reveal any blocking pair involving $\MAN[i], \ldots, \MAN[n]$.
However, in contrast with the case of adversarial rankings, once all of these $i\!-\!1$ matches have been identified,  the adversary cannot ensure any more that the woman \WOMAN with whom \MAN[i] is matched is the lowest remaining in his ranking. Nevertheless, the adversary can still reveal a block with the woman immediately preceding \WOMAN in \MAN[i]'s ranking.
 Intuitively (we will make this argument precise, of course), this ``does not reveal much information'' besides the fact that \WOMAN is not \MAN[i]'s correct match, so the algorithm in expectation still has to try a constant fraction of all remaining women as \MAN[i]'s match.

To precisely define the adversary's behavior in the random-preferences case, fix realized preferences for all men. We inductively define the following women and  sets of women. \RSET{1} is the set of all women. For each $i$, let \RWOMAN[i] be the top choice of \MAN[i] among the women in \RSET{i}, according to the uniformly random (but fixed for now) preference ranking \PREFT{i}. Then, let $\RSET{i+1} = \RSET{i} \setminus \SET{\RWOMAN[i]}$. The unique stable matching has each \MAN[i] matched with \RWOMAN[i].
The adversary behaves as follows: if the proposed matching is not stable, then let $i = \min \SetB{j}{\Partner[t]{\MAN[j]} \neq \RWOMAN[j]}$. Reveal the blocking pair consisting of \MAN[i] and \WOMAN['], where the latter is the \emph{immediate predecessor} of $\WOMAN=\Partner[t]{\MAN[i]}$ among the set \RSET{i} according to \PREFT{\MAN[i]}.

  Making the preceding informal argument --- about this adversary not revealing ``much information'' --- precise requires some care. In particular, we must reason about what the algorithm has ``learned'' about the uniformly random preferences of the men. Our analysis will therefore use the Principle of Deferred Decisions, and consider the random preferences of the men as if they are drawn in stages over the course of the algorithm's execution. However, these deferred random choices will of course be sound, as we will show that the final preferences are i.i.d.~uniformly random, regardless of the matchings proposed by the learning algorithm $\mathcal{A}$. Due to the probabilistic nature of our analysis, \RWOMAN[i] and \RSET{i} will henceforth be treated as random variables.

  In each round $t$, the algorithm \ALG proposes a matching \MATCH[t]. The adversary reveals a blocking pair involving \MAN[i] if and only if $i = \min \SetB{j}{\Partner[t]{\MAN[j]} \neq \RWOMAN[j]}$. Naturally, an adversary following this strategy will reveal to the algorithm that $\Partner[t]{\MAN[j]} = \RWOMAN[j]$ for all $j < i$. Because blocking pairs involving \MAN[i] are only revealed in this case, we obtain the following first key property:\ \ (1) each blocking pair for \MAN[i] involves $\Partner[t]{\MAN[i]} \in \RSET{i}$ and some other $\WOMAN \in \RSET{i}$.
Since the adversary will always reveal a \WOMAN who is the \emph{immediate} predecessor of \Partner[i]{\MAN[i]} among \RSET{i},
this implies a second key property:\ \ (2) \Constraints[t]{m_i}, which to avoid clutter we denote by \Constraints[t]{i}, defines a union of disjoint \emph{chains} on~\RSET{i}.

  In our analysis, we consider as already drawn the following information about all of the \PREFT{i} at all times $t$, which contains all of the information known to the algorithm:\ \ (1) The preference constraints \Constraints[t]{i} (the blocking pairs involving \MAN[i] already revealed to the algorithm), and\ \ (2) possibly the realization \WOMAN[i] of the random \RWOMAN[i], if it has been revealed, in which case we consider the entire preference relation \PREFT{i} as having been drawn already. Specifically, this information is tracked using the variables $\Constraints{i}$ and $w_i$, where initially we have $\Constraints{i}=\emptyset$ and $\WOMAN[i] = \perp$ for all $i$. Whenever the adversary commits to a blocking pair, we update the variable \Constraints{i}, and whenever the adversary commits to the value of \RWOMAN[i] in round $t$, we update the variable $w_i$ to reflect the realized value of \RWOMAN[i] and update the variable $\Constraints{i}$ to reflect the realization of all of \PREFT{i}.
  We write $\CHAINS[t]{i} = \SetCard{\RSET{i}} - \SetCardB{\Constraints[t]{i}} = n+1-i - \SetCardB{\Constraints[t]{i}}$ for the number of chains in the partial order defined by \Constraints[t]{i}.
  The adversary's strategy alongside the gradual draw of preferences, as viewed from the Deferred Decision point of view of our analysis, is given in Algorithm~\ref{algo:adversary}.

\begin{algorithm}[htb]
\begin{algorithmic}[1]
\STATE Let \MATCH[t] be the proposed matching.
\STATE Let $i_{\perp}$ be minimal such that $w_{i_{\perp}}=\perp$, or $n+1$ if $w_i\ne\perp$ for every $i\in\{1,\ldots,n\}$.
     \IF{there exists an $i < i_{\perp}$ such that $\Partner[t]{\MAN[i]}\ne\WOMAN[i]$}
       \RETURN the blocking pair $(\MAN[i], \WOMAN['])$ for the minimal such $i$ and for \WOMAN['] that immediately precedes \Partner[t]{\MAN[i]} according to \Constraints{i}.
       \COMMENT{Reveals no new information regarding men for whom $w_i=\perp$.}
     \ENDIF
     \IF{\Partner[t]{\MAN[i_{\perp}]} has a predecessor \WOMAN['] in \Constraints{i_{\perp}}}
       \RETURN the blocking pair $(\MAN[i_{\perp}], \WOMAN['])$.
       \COMMENT{Reveals no new information.}
     \ENDIF
     \FOR{$i=i_{\perp},\ldots,n$}
       \STATE Flip a coin w.p.\ $\nicefrac{1}{\CHAINS[t]{i}}$ of coming up heads. \label{alg:adversary:assignment} \COMMENT{Equals $\ProbCB{\Partner[t]{\MAN[i]}=\RWOMAN[1]}{w_1,\ldots,w_{i-1},\Constraints{i}}$}
       \IF[The algorithm chose an unstable partner for {\MAN[i]}.]{the coin comes up tails}
         \STATE Let $\WOMAN['] \in \RSET{i}$ be uniformly random among all women such that there exists no \WOMAN[''] with $\Constraint{\WOMAN[']}{\WOMAN['']} \in \Constraints{i}$, i.e., all women who have not been revealed as direct predecessors (in \RSET{i}) of other women. \label{alg:adversary:predecessor-choice}
         \STATE Add the constraint \Constraint{\WOMAN[']}{\Partner[t]{\MAN[i]}} to \Constraints{i}.
         \RETURN the blocking pair $(\MAN[i], \WOMAN['])$.
       \ELSE[The algorithm chose the stable partner for {\MAN[i]}.]
         \STATE Set $\WOMAN[i] = \Partner[t]{\MAN[i]}$.
         \STATE Let \PREFT{i} be a uniformly random linear extension of \Constraints{i}, subject to the constraint that \WOMAN[i] precedes all elements of \RSET{i}.
         \STATE Update \Constraints{i} to capture all of \PREFT{i}.
         \COMMENT{The next iteration of the \textbf{for} loop will now commence.}
       \ENDIF
\ENDFOR
\LINECOMMENT{If this point has been reached, each man \MAN[i] must be matched with his stable partner \WOMAN[i].}
\RETURN ``The matching is stable.''
\end{algorithmic}

\caption{The adversary's strategy for revealing a blocking pair, combined with the gradual draw of preferences, from the Deferred Decisions point of view of our analysis. \label{algo:adversary}}
\end{algorithm}

We first verify that \cref{algo:adversary}, using deferred decisions, generates uniformly random rankings regardless of the proposed matchings \MATCH[t].
To see that this, recall the key properties (1) and (2) above, which state that the revealed constraints \Constraints[t]{i} partition \RSET{i} into disjoint chains of immediate predecessors. In a uniformly random permutation, the order of these chains is uniformly random.
In particular, when there are \CHAINS[t]{i} chains, each woman without a known predecessor is first in the ranking with probability $\nicefrac{1}{\CHAINS[t]{i}}$.
The bias of the random coin flip in Line~\ref{alg:adversary:assignment} guarantees that when any woman \WOMAN without a known predecessor is considered, she is revealed to be first in the ranking with exactly this probability $\nicefrac{1}{\CHAINS[t]{i}}$. 
Conditioned on not being first overall in the ranking, \WOMAN has a predecessor who is uniformly random among the last elements of all other $\CHAINS[t]{i}-1$ chains.
The draw of \WOMAN['] in Line~\ref{alg:adversary:predecessor-choice} exactly matches this distribution.

Finally, we analyze the number of iterations that \ALG needs. In Step~\ref{alg:adversary:assignment} of Algorithm~\ref{algo:adversary}, the probability that \WOMAN[i] is revealed (and thus the algorithm can move on to finding \MAN[i+1]) is $\frac{1}{\CHAINS[t]{i}} = \frac{1}{n+1-i - \SetCard{\Constraints[t]{i}}}$.
For $i \leq \nicefrac{n}{3}$ and the first $\nicefrac{n}{3}$ iterations for \MAN[i] (where $\Constraints[t]{i} \leq \nicefrac{n}{3}$), we obtain that the probability of finding \WOMAN[i] is at most $\nicefrac{3}{n}$. Thus, in expectation, it takes at least $\nicefrac{n}{3}$ iterations each to identify $\WOMAN[1], \ldots, \WOMAN[n/3]$, implying that \ALG needs at least $\nicefrac{n^2}{9}$ iterations in expectation.
\end{proof}

\subsection{Many-to-Many Matchings with Extra Information}\label{many-many-with-discard-information}
Our algorithms and analysis would have remained essentially unchanged for the case of many-to-many matchings had the algorithm, along with a blocking pair $\BLOCKING = \SET{\WORKER,\FIRM}$, learned $\FIRM['] \in \Partners{\WORKER}, \WORKER['] \in \Partners{\FIRM}$ such that \PrefT{\WORKER}{\FIRM}{\FIRM[']} and \PrefT{\FIRM}{\WORKER}{\WORKER[']}. In this case, the extra information would have still allowed the algorithm to add at least one new constraint \Constraint{\FIRM}{\FIRM[']} or \Constraint{\WORKER}{\WORKER[']} to \Constraints{\WORKER} or \Constraints{\FIRM} in each iteration. So Algorithm $\SampleRanking$ (\cref{algo:sample-preference}) could have been used entirely unchanged, and the standard many-to-many variant of the Gale-Shapley Algorithm could have been used to compute a stable matching with respect to the sampled preference orders.

Computing a stable many-to-many matching with respect to unknown preferences becomes significantly more challenging when the algorithm only sees the blocking pair $(\WORKER,\FIRM)$, with no additional information about which agent in \Partners{\WORKER} or \Partners{\FIRM} should be ``swapped out.''
The modified algorithms and analysis for this case are the subject of Section~\ref{sec:many-to-many}.

\section{Many-to-Many Matchings} \label{sec:many-to-many}
In this section, we describe and analyze algorithms for many-to-many matchings when the response to a query reveals only a blocking pair $(\WORKER,\FIRM)$, but not the current matches $\FIRM['] \in \Partners{\WORKER}$ and $\WORKER['] \in \Partners{\FIRM}$ such that \PrefT{\WORKER}{\FIRM}{\FIRM[']} and \PrefT{\FIRM}{\WORKER}{\WORKER[']}.

We obtain two results. First, in Section~\ref{sec:small-capacities}, we assume that all quotas are bounded, and write $\MaxCapacity = \max_{\AG} \Capacity{\AG}$.
We give a simple learning algorithm that takes at most $O(n^{\MaxCapacity+2})$ rounds of interaction, with computation time at most $O(n^{\MaxCapacity+3})$ per round --- this algorithm is a conceptual generalization of the simple $O(n^3)$ algorithm for one-to-one matchings discussed in Section~\ref{sec:introduction}.
Then, in Section~\ref{sec:general-capacities}, we present a more involved algorithm that only requires $O(n^3 \log n)$ rounds of interaction, regardless of the quotas. While the number of interactions is always polynomial, the computational requirements are not: we are currently not aware of any implementation that is not exponential in $n$. We show that being able to efficiently sample preference orders subject to constraints more involved than pairwise comparisons would be sufficient to obtain an efficient implementation.

The algorithms for both cases are based on modifying Algorithm~\ref{algo:learning-matching}. The main difference is the type of information that the algorithm obtains from a blocking pair. Whereas previously, a blocking pair $\SET{\WORKER,\FIRM}$ revealed that \WORKER prefers \FIRM over \Partner{\WORKER}, it now only reveals that \WORKER prefers \FIRM over (at least) one participant in \Partners{\WORKER}. 

For the modified algorithm, \Constraints[t]{\AG} will be a set of constraints of the form \Constraint{\AG[']}{S}; in these constraints, all sets $S$ will have the same size, namely, \Capacity{\AG}. The algorithm uses speculative preference orders \PREF[t]{\AG} that are ``representative'' of \Constraints[t]{\AG} in a sense that we will describe below.
The modified learning algorithm is given by Algorithm~\ref{algo:learning-many-to-many-matching}.

\begin{algorithm}[htb]
\begin{algorithmic}[1]
\setcounter{ALC@unique}{0} 
\STATE Let $t=1$.
\FORALL{$\AG\in\AGENTS$}
  \STATE Let $\Constraints[1]{\AG} = \emptyset$.
\ENDFOR
\LOOP
  \FOR{each \AG}
     \STATE Let \PREF[t]{\AG}
     be a ``representative'' preference order for \Constraints[t]{\AG}.\label[line]{choose-representative-many-to-many}
  \ENDFOR
  \STATE Let $\GMATCH[t] = (\Partners{\AG})_{\AG \in \AGENTS} = \GaleShapley{(\PREF[t]{\AG})_{\AG \in \AGENTS}}$ be a stable many-to-many matching with respect to the \PREF[t]{\AG}.
  \IF{\GMATCH[t] is stable}
    \RETURN \GMATCH[t].
  \ELSE
    \STATE Let $\SET{\WORKER,\FIRM}=\BLOCKING[t]$  be the revealed blocking pair.
    \STATE Let $\Constraints[t+1]{\WORKER} = \Constraints[t]{\WORKER} \cup \SETB{\Constraint{\FIRM}{\Partners[t]{\WORKER}}}$.
    \STATE Let $\Constraints[t+1]{\FIRM} = \Constraints[t]{\FIRM} \cup \SET{\Constraint{\WORKER}{\Partners[t]{\FIRM}}}$.
    \FORALL{$\AG\notin\SET{\WORKER,\FIRM}$}
      \STATE Let $\Constraints[t+1]{\AG} = \Constraints[t]{\AG}$.
    \ENDFOR
  \ENDIF
  \STATE Increment $t$.
\ENDLOOP
\end{algorithmic}

\caption{The interactive learning algorithm for many-to-many matchings. \label{algo:learning-many-to-many-matching}}
\end{algorithm}

The difference between the analyses in Sections~\ref{sec:small-capacities} and \ref{sec:general-capacities} is solely in the choice of the speculative preferences, and the corresponding guarantees that can be derived on the progress of Algorithm~\ref{algo:learning-many-to-many-matching} and the running time of the implementation. Specifically, in \cref{sec:small-capacities} we will allow the speculative preferences to be arbitrary preferences that are consistent with \Constraints[t]{\AG}, while in \cref{sec:general-capacities} we will use speculative preferences that are representative of \Constraints[t]{\AG} with respect to a notion of representativeness that will be defined iteratively in terms of fractions of being ranked last.

\subsection{Small Quotas} \label{sec:small-capacities}

The simplest way to implement \cref{choose-representative-many-to-many} in \cref{algo:learning-many-to-many-matching} is to let \PREF[t]{\AG} be \emph{any} preference order consistent with all the constraints in \Constraints[t]{\AG}. We show that such a preference order can be found efficiently.

\begin{proposition} \label{prop:find-any-ranking}
Assume that there exists at least one linear order \PREFH consistent with \CONSTRAINTS. Then, such an order can be found in time $O(n \cdot \SetCard{\CONSTRAINTS})$.
\end{proposition}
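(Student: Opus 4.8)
The plan is to reduce the problem to a topological sort on a suitably defined directed graph, and to show that the acyclicity of this graph is guaranteed precisely by the hypothesis that at least one consistent linear order exists. The subtlety, compared to the familiar case where \CONSTRAINTS consists only of pairwise comparisons \Constraint{\AG[']}{\AG['']}, is that here a constraint \Constraint{\AG[']}{S} asserts only that \AG[']\ precedes \emph{some} element of $S$, which is a disjunctive requirement rather than a single edge. So I cannot simply build one edge per constraint. Instead, I would process elements greedily from the \emph{bottom} of the order: at each step, I look for an element $\AG[']$ that can legally be placed last among the remaining elements, append it (conceptually) to the end of the partial ranking being built, remove it, and repeat.

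Concretely, here are the steps I would carry out. First, I would make precise when an element $z$ among a remaining set $U$ is ``placeable last'': namely, $z$ is placeable last iff there is no constraint \Constraint{z}{S} in \CONSTRAINTS with $S \subseteq U$ (equivalently, $S \cap (U \setminus \{z\}) = \emptyset$ would force $z$ before something still remaining). Wait --- more carefully: $z$ may be placed last among $U$ iff every constraint \Constraint{z}{S} has some element of $S$ other than $z$ still in... no: \Constraint{z}{S} requires $z$ to precede some element of $S$; if all of $S$ is already placed after $z$ (i.e. removed earlier, hence ranked below), that is fine. So $z$ is placeable last among $U$ iff for every constraint \Constraint{z}{S} $\in$ \CONSTRAINTS, we have $S \not\subseteq \{z\}$, i.e. $S$ contains an element already removed --- but at the first step nothing is removed. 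Let me restate it as the algorithm maintains the invariant: an element placed last must have \emph{all} its ``target sets'' $S$ intersect the set of already-placed (lower) elements, \emph{or} the constraint is otherwise already satisfiable. Actually the clean statement: run the greedy from the top instead. Place first any element $z$ such that $z$ appears in no set $S$ of an unsatisfied constraint; then removing $z$ may satisfy some constraints \Constraint{\AG[']}{S} with $z \in S$. I would pick whichever direction makes the invariant cleanest and state it carefully there.

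Second, and this is the crux, I would prove that the greedy never gets stuck: \textbf{if a consistent linear order \PREFH\ exists, then at every stage some remaining element is placeable.} The argument is that \PREFH\ restricted to the remaining elements $U$ is itself a linear order consistent with the restricted constraints, so the \PREFH-minimal (or maximal, per the chosen direction) element of $U$ is placeable by the greedy; hence the greedy's choice set is nonempty. Conversely, the order the greedy outputs is consistent with \CONSTRAINTS: each constraint \Constraint{\AG[']}{S} is discharged at the moment the first element of $S$ is placed, and at that moment $\AG[']$ is still unplaced (else the constraint would have been violated at an earlier removal --- here I would invoke the invariant), so $\AG[']$ ends up ranked above that element of $S$. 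Third, for the running time: maintain for each constraint a counter of how many of its ``blocking'' elements remain, and for each element the list of constraints it participates in; each of the $n$ rounds then does work proportional to the constraints touched by the element removed, and summing over all rounds gives $O(n \cdot \SetCard{\CONSTRAINTS})$ total (or I can afford the cruder bound $O(n \cdot \SetCard{\CONSTRAINTS})$ by rescanning all constraints each round, which is $n$ rounds times $O(\SetCard{\CONSTRAINTS})$ work per round). The main obstacle I anticipate is getting the ``placeable'' predicate and the accompanying loop invariant exactly right so that both directions --- greedy never stuck, greedy output consistent --- go through cleanly with the disjunctive constraints; once that is nailed down, the complexity bookkeeping is routine.
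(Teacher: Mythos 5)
Your bottom-up greedy is exactly the paper's proof: find an element $z$ with no remaining constraint of the form $\Constraint{z}{S}$, place it last among the remaining elements, delete every constraint $\Constraint{\AG[']}{S}$ with $z\in S$ (now satisfied), and recurse. Non-stuckness follows because the $\succ^*$-maximal remaining element cannot carry a surviving constraint (your restriction argument is the right way to say this), and the crude $n$-passes-over-\CONSTRAINTS accounting gives the stated $O(n\cdot\SetCard{\CONSTRAINTS})$. Commit to this direction, and correct the slip $S\not\subseteq\{z\}$ to $S\not\subseteq U$ (or, once satisfied constraints are deleted as they are discharged, simply ``no constraint $\Constraint{z}{\cdot}$ remains'').

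The top-down variant you float midway, and leave as an unresolved alternative, does \emph{not} work and should not be left hanging. The rule ``place first any $z$ that appears in no set $S$ of an unsatisfied constraint'' can get stuck even when a consistent order exists: with elements $\{1,2,3\}$ and constraints $\Constraint{1}{\{2,3\}}$, $\Constraint{2}{\{1,3\}}$, the order $1\succ 2\succ 3$ is consistent, yet every element occurs in some constraint set, so your rule places nothing. The discharge logic you carry over is also reversed for this direction: placing $z$ at the top never satisfies a constraint $\Constraint{\AG[']}{S}$ with $z\in S$, it only removes $z$ as a potential target for $\AG[']$ (and indeed, if $z$ occurred in no unsatisfied constraint's $S$, there would be nothing for ``removing $z$'' to discharge). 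The asymmetry is intrinsic: a constraint $\Constraint{z}{S}$ says exactly ``$z$ is not last,'' so the Topological-Sort analogy runs cleanly only from the bottom --- which is precisely what the paper exploits.
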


\begin{proof}
  The algorithm is a generalization of the standard Topological Sort algorithm for directed acyclic graphs.
  It scans through all the constraints and counts, for each relevant \AG (either $\AG \in \WORKERS$ or $\AG \in \FIRMS$, depending on whether a preference order is being constructed for some $\WORKER \in \WORKERS$ or some $\FIRM \in \FIRMS$), how many constraints there are of the form \Constraint{\AG}{S}. Because each constraint \Constraint{\AG}{S} implies that \AG cannot be last in the preference order, and we assumed that there is a linear order \PREFH consistent with \CONSTRAINTS, there must be at least one \AG for which there are no constraints \Constraint{\AG}{S}.

  The algorithm picks any such \AG and puts it in the last remaining place of the preference order. It then removes all constraints of the form \Constraint{\AG[']}{S} with $S \ni \AG$ from \CONSTRAINTS, because they have been satisfied by placing \AG last. It then continues with the remaining constraints for the remaining positions.

  The algorithm takes $n$ iterations, each of which take time $O(\CONSTRAINTS)$ for the scan and removal.
\end{proof}

While the generalization of Topological Sort to the more general types of constraints is fairly straightforward, we remark on a subtlety that does not arise for standard Topological Sort.
A slight generalization of the types of constraints we consider is to allow constraints both of the form ``\AG must precede at least one element of $S$'' and ``\AG must follow at least one element of $S$.'' When $S$ is a singleton, these types of constraints are the same. When the sets $S$ can be larger, the types of constraints are different, and we showed that when only one type of constraint is allowed, a feasible assignment can be found efficiently if one exists. (And otherwise, the algorithm can determine that no feasible assignment exists.)
This ceases to be true when both types of constraint are combined:

\begin{proposition} \label{prop:NP-hardness-topological}
  The following problem is NP-hard: given a set \CONSTRAINTS of constraints, each of which is of the form (1) $z_i$ must precede at least one element of $S_i$, or (2) $z_i$ must follow at least one element of $S_i$, decide whether there exists a linear order on the elements satisfying all the constraints.
  NP-hardness holds even when $\SetCard{S_i} \leq 3$ for all $i$.
\end{proposition}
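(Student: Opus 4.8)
The plan is to reduce from a known NP-hard problem; the natural candidate is a variant of 3-SAT or a scheduling/ordering problem. Given that each constraint says ``$z_i$ precedes at least one element of $S_i$'' or ``$z_i$ follows at least one element of $S_i$'' with $|S_i| \le 3$, the constraints have a disjunctive flavor: the first type is equivalent to the disjunction $\bigvee_{s \in S_i} (z_i \prec s)$, and the second to $\bigvee_{s \in S_i} (s \prec z_i)$. So a satisfying linear order is a total order of the ground set that, for each constraint, makes at least one of (up to three) pairwise-ordering literals true. This looks very much like an instance of ordering CSP; I would aim to reduce from \textbf{Betweenness} or from \textbf{3-SAT} directly, encoding truth assignments by the relative order of a ``true anchor'' element $\top$ and a ``false anchor'' element $\bot$ together with per-variable elements. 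First I would set up, for each Boolean variable $x$, two elements $x^{+}$ and $x^{-}$, and force (using singleton-$S$ constraints, which behave like ordinary precedence edges) a gadget that pins down a global reference position so that ``$x^{+}$ before the reference'' encodes $x = \text{true}$ and ``$x^{-}$ before the reference'' encodes $x = \text{false}$, while also forcing exactly one of $x^{+}, x^{-}$ to be on each side (this is where having \emph{both} directions of constraint is essential — with a single direction one cannot force this kind of exclusive choice, which is exactly the phenomenon the proposition is highlighting).

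Next I would encode each clause of the 3-SAT instance, say $(\ell_1 \vee \ell_2 \vee \ell_3)$, as a single constraint of type (1) or (2) on a fresh clause-element $c$ against the set $S = \{s_1, s_2, s_3\}$, where $s_j$ is chosen among the variable elements so that ``$c$ precedes $s_j$'' (resp.\ ``$s_j$ precedes $c$'') holds in a consistent linear order precisely when literal $\ell_j$ is satisfied. Concretely, I would place $c$ at (a forced copy of) the reference position via singleton constraints, and pick $s_j = x^{+}$ if $\ell_j = x$ and $s_j = x^{-}$ if $\ell_j = \bar{x}$; then ``$x^{+}$ is on the true side of the reference'' becomes exactly the statement that the appropriate ordering between $c$ and $x^{+}$ holds. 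The size bound $|S_i| \le 3$ comes for free since 3-SAT clauses have at most three literals (and smaller clauses just use smaller sets). I would then argue the two directions of the reduction: a satisfying truth assignment yields a linear order by placing each $x^{\pm}$ appropriately and ordering the remaining elements arbitrarily subject to the forced singleton constraints; conversely, any feasible linear order induces a truth assignment (read off from the sides of the reference), and the clause constraints being satisfied forces each clause to have a true literal.

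The main obstacle I anticipate is the consistency/rigidity of the gadget: I need the singleton ``precedence'' constraints to genuinely pin down a reference point and to force the exclusive choice $x^{+} \mid x^{-}$ without accidentally making the whole instance infeasible or without leaving ``escape'' orderings that satisfy all constraints but correspond to no valid truth assignment. This requires care because a linear order is a single global object and local gadgets can interact; I would handle it by introducing enough ``spine'' elements with a rigid chain of singleton constraints so that the reference position is essentially unique up to the placement of the $x^{\pm}$ pairs, and by double-checking that the clause-element placements do not over-constrain. A secondary point to verify is that the reduction is polynomial-time and that all the auxiliary forced constraints can be expressed purely with $|S_i| \le 3$ (singletons trivially satisfy this). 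Once the gadget is shown rigid, the correctness argument in both directions should be routine, and NP-hardness of 3-SAT gives the result; membership in NP is immediate since a linear order is a polynomial-size certificate checkable in time $O(n \cdot |\CONSTRAINTS|)$.
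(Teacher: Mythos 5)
Your high-level plan --- reducing from 3-SAT by reading each precede/follow constraint as a disjunction of pairwise-order literals --- matches the paper's approach, but two pieces of your gadget as sketched would fail or are unnecessary. The idea of placing a per-clause element $c$ at ``a forced copy of the reference position'' using singleton precedence constraints is not realizable: singleton constraints just carve out a partial order, and in any linear order every element occupies a distinct position, so there is no way to pin a fresh element to the same cut-point as an existing reference. Likewise, trying to ``force exactly one of $x^{+}, x^{-}$ to be on each side'' via singletons cannot work (a singleton-only instance is always satisfiable by topological sort), and doing it with disjunctive constraints is more than you need --- all you need is ``not both on the true side.'' You flag these gadget-rigidity worries yourself; they are genuine and are not resolved by the sketch.

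The paper's construction sidesteps all of this with a much leaner gadget: a \emph{single} universal reference element $\hat{y}$, no singletons, no spine, and no per-clause anchor element. For each variable $x_i$ there are two elements $z_i, \bar{z}_i$ and one size-2 ``follow'' constraint ($\hat{y}$ must follow at least one of $z_i, \bar{z}_i$), ensuring that at most one of $z_i, \bar{z}_i$ comes after $\hat{y}$. For each clause there is one size-3 ``precede'' constraint ($\hat{y}$ must precede at least one of the three literal elements), ensuring that at least one literal element comes after $\hat{y}$. Interpreting ``after $\hat{y}$'' as ``true,'' the forward direction is immediate, and the backward direction never needs exclusivity: set $x_i$ to true exactly when $z_i$ is after $\hat{y}$; if a clause's witness is, say, $\bar{z}_i$ after $\hat{y}$, then the variable constraint forces $z_i$ before $\hat{y}$, so $x_i$ is false and $\bar{x}_i$ is true. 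You correctly identified the source problem and the disjunctive reading, but the paper's construction shows that the spine/anchor machinery and the exclusivity requirement can be dropped entirely, which is precisely what removes the obstacles you anticipated.
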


\begin{proof}
  We give a straightforward reduction from 3SAT. Given an instance of $n$ variables $x_1, \ldots, x_n$ and $m$ clauses $C_j$, each of which is a disjunction of  three literals, the reduction constructs an instance with $2n+1$ elements and $n+m$ constraints.
  There are two elements $z_i, \bar{z}_i$ for each variable $x_i$, and one more element $\hat{y}$. For each variable $x_i$, there is a constraint that $\hat{y}$ must follow at least one of $z_i, \bar{z}_i$.
  For each clause $C_j = \ell_{j,1} \vee \ell_{j,2} \vee \ell_{j,3}$, there is a constraint that $\hat{y}$ must precede at least one of the elements corresponding to the $\ell_{j,i}$; for example, if $C_j = x_2 \vee \bar{x}_3 \vee x_5$, then the constraint says that $\hat{y}$ must precede at least one of the elements $\SET{z_2, \bar{z}_3, z_5}$.

  The correctness of the reduction can be easily observed by interpreting all the literals after $\hat{y}$ as true. The first set of constraints captures that no assignment may make both $x_i$ and $\bar{x}_i$ true, and the second set of constraints captures that each clause must have at least one true literal.
\end{proof}

We can now prove our main theorem for the simple algorithm:

\begin{theorem} \label{thm:iterations-simple-choice}
If the algorithm of \cref{prop:find-any-ranking} is used to compute the speculative preference order \PREF[t]{\AG} in \cref{choose-representative-many-to-many} of \cref{algo:learning-many-to-many-matching}, then \cref{algo:learning-many-to-many-matching} terminates after at most $O(n^{\MaxCapacity+2})$ iterations; furthermore, each iteration runs in time at most $O(n^{\MaxCapacity+3})$.
\end{theorem}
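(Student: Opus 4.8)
The plan is to mirror the $O(n^3)$ counting argument from the one-to-one case sketched in \cref{sec:introduction}: I will show that every iteration in which $\GMATCH[t]$ is not stable increases $\sum_{\AG\in\AGENTS}\SetCard{\Constraints[t]{\AG}}$ by at least one, and then bound this sum by the total number of conceivable constraints, which is polynomial in $n$ once $\MaxCapacity$ is fixed.

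First I would record the invariant that for every agent $\AG$ and every round $t$ the true preference order $\PREFT{\AG}$ is consistent with $\Constraints[t]{\AG}$; this holds by induction, since a revealed blocking pair $\SET{\WORKER,\FIRM}$ means exactly that $\WORKER$ truly prefers $\FIRM$ to some member of $\Partners[t]{\WORKER}$ and $\FIRM$ truly prefers $\WORKER$ to some member of $\Partners[t]{\FIRM}$ (here I use the reduction to full preference lists, under which $\GMATCH[t]$ is perfect, so $\SetCardB{\Partners[t]{\WORKER}}=\Capacity{\WORKER}$ and $\SetCardB{\Partners[t]{\FIRM}}=\Capacity{\FIRM}$), which is precisely what the newly added constraints $\Constraint{\FIRM}{\Partners[t]{\WORKER}}$ and $\Constraint{\WORKER}{\Partners[t]{\FIRM}}$ assert. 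In particular $\Constraints[t]{\AG}$ always admits a consistent linear order, so \cref{prop:find-any-ranking} can legitimately be called in \cref{choose-representative-many-to-many}.

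Next I would argue progress. Fix an iteration $t$ with $\GMATCH[t]$ not stable, and let $\SET{\WORKER,\FIRM}=\BLOCKING[t]$. Since $\GMATCH[t]$ was produced by Gale--Shapley on the speculative orders and is therefore stable with respect to them, $\SET{\WORKER,\FIRM}$ does not block there, so at least one of the two ``swap'' conditions must fail; without loss of generality $\FIRM$ is ranked by $\PREF[t]{\WORKER}$ below every firm in $\Partners[t]{\WORKER}$, i.e.\ $\PREF[t]{\WORKER}$ does not satisfy $\Constraint{\FIRM}{\Partners[t]{\WORKER}}$. But $\PREF[t]{\WORKER}$ was chosen consistent with $\Constraints[t]{\WORKER}$, so this constraint was not already present, whence $\SetCardB{\Constraints[t+1]{\WORKER}}>\SetCardB{\Constraints[t]{\WORKER}}$; the case where the $\FIRM$-side condition fails is symmetric. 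Thus every unsuccessful iteration strictly increases $\sum_{\AG}\SetCard{\Constraints[t]{\AG}}$. I expect this step --- specifically the claim that the added constraint is \emph{genuinely new} rather than a repeat --- to be the one place the argument could go wrong, but it follows cleanly from the consistency of the speculative order together with the size information supplied by the full-preference-list reduction.

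Finally I would do the counting. A constraint added to $\Constraints[t]{\AG}$ has the form $\Constraint{\AG[']}{S}$ with $S$ a $\Capacity{\AG}$-element subset of the opposite side and $\AG[']$ an element of the opposite side outside $S$, so there are at most $n\binom{n}{\Capacity{\AG}}\le n\cdot n^{\MaxCapacity}=n^{\MaxCapacity+1}$ of them; summing over all $2n$ agents gives $\sum_{\AG}\SetCard{\Constraints[t]{\AG}}\le 2n^{\MaxCapacity+2}$ at all times, so there are $O(n^{\MaxCapacity+2})$ unsuccessful iterations and hence $O(n^{\MaxCapacity+2})$ iterations total. For the per-iteration running time, \cref{prop:find-any-ranking} builds each $\PREF[t]{\AG}$ in time $O(n\cdot\SetCard{\Constraints[t]{\AG}})=O(n^{\MaxCapacity+2})$, so computing all $2n$ speculative orders costs $O(n^{\MaxCapacity+3})$, which dominates the cost of the many-to-many Gale--Shapley run (at most $n^2$ proposals, each processed in $O(n)$ time, hence $O(n^3)$) and the $O(n)$ constraint bookkeeping; this yields the claimed $O(n^{\MaxCapacity+3})$ bound per iteration. (One could recompute orders only for the two agents in $\BLOCKING[t]$ after the first iteration, saving a factor of $n$, but the stated bound already follows from the algorithm as written.)
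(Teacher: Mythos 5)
Your proof is correct and follows essentially the same route as the paper's: show that every unsuccessful iteration strictly increases the potential $\Phi_t=\sum_{\AG}\SetCard{\Constraints[t]{\AG}}$ (because the constraint violated by the speculative order cannot already be present), bound the total number of conceivable constraints by $O(n^{\MaxCapacity+1})$ per agent and hence $O(n^{\MaxCapacity+2})$ overall, and account for the per-iteration running time via \cref{prop:find-any-ranking}. The only cosmetic differences are that you state the consistency-with-true-preferences invariant explicitly and use the slightly looser count $n\binom{n}{\Capacity{\AG}}$ in place of $(n-\Capacity{\AG})\binom{n}{\Capacity{\AG}}$; neither affects the asymptotics.
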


\begin{proof}
  The proof follows in the footsteps of that for Theorem~\ref{thm:learning-matching-time}. Assume that in iteration $t$, a blocking pair $\BLOCKING[t] = \SET{\WORKER,\FIRM}$ is revealed, implying that \PrefT{\WORKER}{\FIRM}{\FIRM[']} for some $\FIRM[']\in\Partners[t]{\WORKER}$ and \PrefT{\FIRM}{\WORKER}{\WORKER[']} for some $\WORKER[']\in\Partners[t]{\FIRM}$.
  Again, by the non-blocking property of a stable matching with respect to the \PREF[t]{\AG}, we can infer that at least one of \Pref[t]{\WORKER}{\Partners[t]{\WORKER}}{\FIRM} (i.e., \WORKER prefers each of \Partners[t]{\WORKER} to \FIRM) or \Pref[t]{\FIRM}{\Partners[t]{\FIRM}}{\WORKER} holds. Without loss of generality, \Pref[t]{\WORKER}{\Partners[t]{\WORKER}}{\FIRM} holds.
  In particular, because \PREF[t]{\WORKER} satisfies all constraints in \Constraints[t]{\WORKER}, we obtain that $\Constraint{\FIRM}{\Partners[t]{\WORKER}} \notin \Constraints[t]{\WORKER}$, implying that $\SetCard{\Constraints[t+1]{\WORKER}} = \SetCard{\Constraints[t]{\WORKER}} + 1$.

  Therefore, after each round, the quantity
  $\Phi_t = \sum_{\AG} \SetCard{\Constraints[t]{\AG}}$ increases by at least 1.
  The size of each constraint set \Constraints[t]{\AG} is upper-bounded by $\SetCard{\Constraints[t]{\AG}} \leq \binom{n}{\Capacity{\AG}} \cdot (n-\Capacity{\AG})=O(n^{q+1})$.
  This implies that $\Phi_t \leq \sum_{\AG} \binom{n}{\Capacity{\AG}} \cdot (n-\Capacity{\AG}) = O(n^{\MaxCapacity+2})$, so the process terminates after at most $O(n^{\MaxCapacity+2})$ iterations. 
   Similarly, because the algorithm of \cref{prop:find-any-ranking} runs in time $n\cdot\SetCard{\Constraints[t]{\AG}}=O(n^{q+2})$ for each agent (and since the running time of each iteration is dominated by this), each iteration runs in time at most $O(n^{q+3})$.
\end{proof}

For the one-to-one matching case, as discussed in the introduction, this proves a bound of $O(n^3)$ on the deterministic query complexity, which is obviously worse by a factor of $\nicefrac{n}{\log n}$ compared to the bound we obtained in Section~\ref{sec:one-to-one} with a more careful choice of speculative preference orders.

\subsection{General Quotas} \label{sec:general-capacities}
In this section, our goal is to improve the guarantee of Theorem~\ref{thm:iterations-simple-choice} to eliminate the (exponential) dependence on \MaxCapacity. As in Section~\ref{sec:one-to-one}, we do so by a more careful choice of the speculative preference orders \PREF[t]{\AG} used in Algorithm~\ref{algo:learning-many-to-many-matching}. Unlike in Section~\ref{sec:one-to-one}, our implementation will not be computationally efficient.

Our new sampling algorithm is a modification of \cref{algo:sample-preference}. It repeatedly estimates the fractions of valid preference orders (consistent with the given constraints) that have an element \AG in last position, among a set of remaining elements.
For any set \CONSTRAINTS of constraints, set \RemElements, and agent $\AG \in \RemElements$, let \LastFrac[\CONSTRAINTS]{\RemElements}{\AG} denote the fraction of preference orders consistent with \CONSTRAINTS that have \AG ranked last among the agents in \RemElements.
The algorithm, in each iteration $t$, will have a set \RemElements[t] of elements still in need of ranking, and chooses the last element based on the \LastFrac[\CONSTRAINTS]{\RemElements}{\AG} (or estimates thereof in Section~\ref{sec:many-to-many-sampling}).
It is detailed in Algorithm~\ref{algo:sample-ranking-many}.

\begin{algorithm}[htb]
\begin{algorithmic}[1]
  \STATE Let $\RemElements[t] = \SET{1, \ldots, n}$.
  \FOR{$t=n$ \textbf{downto} 1}
    \STATE Let $\LastElem[t] \in \argmax_{\AG \in \RemElements[t]}\LastFrac[\CONSTRAINTS]{\RemElements[t]}{\AG}$.
    \STATE Place \LastElem[t] in position $t$ of \PREFH.
    \STATE Let $\RemElements[t-1] = \RemElements[t] \setminus \SET{\LastElem[t]}$.
  \ENDFOR
  \STATE Return \PREFH.
\end{algorithmic}

\caption{The algorithm $\SampleRankingMany(\CONSTRAINTS)$. \label{algo:sample-ranking-many}}
\end{algorithm}

Akin to the analysis of representative preference orders in Section~\ref{sec:one-to-one}, we will show that the preference order \PREFH produced by \cref{algo:sample-ranking-many} guarantees a reasonable reduction in the number of remaining consistent preference orders.

\begin{lemma} \label{lem:many-to-many-progress}
  Let \PREFH be the preference order returned by \cref{algo:sample-ranking-many}. Let \Constraint{\AG[']}{S} be any constraint that holds in more than a $1\!-\!\frac{1}{n}$ fraction of preference orders consistent with \CONSTRAINTS.
  Then, \AG['] precedes at least one element of $S$ in \PREFH.
\end{lemma}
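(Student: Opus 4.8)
The plan is to argue by contradiction. Suppose the constraint $\Constraint{\AG[']}{S}$ is violated in the output $\PREFH$ of \cref{algo:sample-ranking-many}; that is, $\AG[']$ is ranked after every element of $S$ in $\PREFH$. Let $p$ be the step of \cref{algo:sample-ranking-many} at which $\AG[']$ is placed into $\PREFH$, so that $\LastElem[p]=\AG[']$ and the working set $\RemElements[p]$ has exactly $p\le n$ elements. Since \cref{algo:sample-ranking-many} fills the positions of $\PREFH$ in the order $n, n-1, \ldots, 1$, and every element of $S$ precedes $\AG[']$ in $\PREFH$ (hence occupies a lower-indexed position, hence is placed at a later step), none of the elements of $S$ has yet been removed from the working set at step $p$; therefore $S\subseteq\RemElements[p]$. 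For brevity write $R=\RemElements[p]$.

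The heart of the argument is an upper bound on $\LastFrac[\CONSTRAINTS]{R}{\AG[']}$. In any preference order consistent with $\CONSTRAINTS$ that ranks $\AG[']$ last among $R$, every element of $R$ --- in particular every element of $S\subseteq R$ --- precedes $\AG[']$, and so that order violates $\Constraint{\AG[']}{S}$. By the hypothesis of the lemma, the constraint $\Constraint{\AG[']}{S}$ is violated by strictly less than a $\frac1n$ fraction of consistent orders, so $\LastFrac[\CONSTRAINTS]{R}{\AG[']}<\frac1n$.

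For the contradiction, note that in every consistent order exactly one element of $R$ is ranked last among $R$, so $\sum_{\AG\in R}\LastFrac[\CONSTRAINTS]{R}{\AG}=1$; as $|R|=p\le n$, averaging yields some $\AG\in R$ with $\LastFrac[\CONSTRAINTS]{R}{\AG}\ge\frac{1}{|R|}\ge\frac1n>\LastFrac[\CONSTRAINTS]{R}{\AG[']}$. This contradicts the fact that \cref{algo:sample-ranking-many} chose $\LastElem[p]=\AG[']$ so as to maximize $\LastFrac[\CONSTRAINTS]{\RemElements[p]}{\AG}$ over $\AG\in\RemElements[p]=R$. Hence $\Constraint{\AG[']}{S}$ must hold in $\PREFH$.

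This is a pigeonhole/averaging argument fused with the greedy ``place the likeliest-last element'' rule, and I do not foresee any real obstacle. The only step requiring a bit of care is the bookkeeping that ``$\AG[']$ is ranked after all of $S$ in $\PREFH$'' is the same as ``all of $S$ is still present in the working set when $\AG[']$ is placed,'' which is immediate from the top-down order in which \cref{algo:sample-ranking-many} fills positions. (This lemma plays, in the general-quota setting, the role that \cref{lem:large-fraction} played in the one-to-one setting.)
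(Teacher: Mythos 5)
Your proof is correct and uses essentially the same core argument as the paper: the averaging observation that the maximal $\LastFrac$ over the remaining set is at least $\nicefrac{1}{n}$, combined with the fact that if all of $S$ is still present in the working set then ranking $\AG[']$ last among it would violate the constraint, which happens for fewer than a $\nicefrac{1}{n}$ fraction of consistent orders. The only cosmetic difference is that you argue by contradiction (using the assumed violation in $\PREFH$ to deduce $S\subseteq R$), while the paper argues directly by picking the first step at which an element of $S\cup\{\AG[']\}$ is placed and showing that step cannot place $\AG[']$; you also spell out the averaging step, which the paper leaves implicit.
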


\begin{proof}
  Let \Constraint{\AG[']}{S} be any constraint that holds in more than a $1\!-\!\frac{1}{n}$ fraction of the preference orders consistent with \CONSTRAINTS.
  Let $t$ be the iteration of $\SampleRankingMany(\CONSTRAINTS)$ in which an element of $S \cup \SET{\AG[']}$ was placed for the first time, i.e., $t = \min \SetB{t'}{\LastElem[t'] \in S \cup \SET{\AG[']}}$.
  
  \begin{sloppypar}
  Because the algorithm chooses $\LastElem[t] \in \argmax_{\AG} \LastFrac[\CONSTRAINTS]{\RemElements[t]}{\AG}$, it ensured in particular that $\LastFrac[\CONSTRAINTS]{\RemElements[t]}{\LastElem[t]} \geq \frac{1}{n}$. In particular, because no element of $S$ had been previously placed, and since \Constraint{\AG[']}{S} was assumed to be a constraint holding in (strictly) more than a $1\!-\!\frac{1}{n}$ fraction of the preference orders consistent with~\CONSTRAINTS, we get that $\LastElem[t] \neq \AG[']$, so by definition of $t$, an element of $S$ must be placed in iteration $t$. This implies that \AG['] precedes at least one element of $S$, namely, \LastElem[t].
  \qedhere
  \end{sloppypar}
\end{proof}

\begin{theorem} \label{thm:learning-matching-many-iterations}
If $\SampleRankingMany(\Constraints[t]{\AG})$ is used to compute the speculative preference order \PREF[t]{\AG} in \cref{choose-representative-many-to-many} of \cref{algo:learning-many-to-many-matching}, then \cref{algo:learning-many-to-many-matching} terminates after at most $O(n^3 \log n)$ iterations.
\end{theorem}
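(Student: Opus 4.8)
The plan is to reproduce the multiplicative-progress argument behind \cref{thm:learning-matching-iterations,thm:iterations-simple-choice}, with \cref{lem:many-to-many-progress} now supplying (in place of $\alpha$-representativeness) the guarantee that drives a geometric decrease in the number of remaining scenarios. For each agent \AG and round $t$, let \NumRankings[t]{\AG} be the number of preference orders consistent with \Constraints[t]{\AG}, and let $\NUMRANKINGS[t]=\prod_{\AG}\NumRankings[t]{\AG}$ count the remaining scenarios. Since the true preferences are always consistent, each \NumRankings[t]{\AG}, and hence \NUMRANKINGS[t], is a positive integer; and since every worker's preference order is one of at most $n!$ linear orders over the firms (and symmetrically for each firm), $\NUMRANKINGS[1]\le(n!)^{2n}$.

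The heart of the proof is to establish $\NUMRANKINGS[t+1]\le(1-\nicefrac{1}{n})\,\NUMRANKINGS[t]$ in every round $t$ in which a blocking pair is revealed. Fix such a round, with revealed pair $\SET{\WORKER,\FIRM}=\BLOCKING[t]$. Exactly as in the proof of \cref{thm:iterations-simple-choice}: because \GMATCH[t] is stable for the speculative orders \PREF[t]{\AG}, the pair $\SET{\WORKER,\FIRM}$ is \emph{not} blocking for those orders, so without loss of generality \Pref[t]{\WORKER}{\Partners[t]{\WORKER}}{\FIRM} (that is, \WORKER prefers each of \Partners[t]{\WORKER} to \FIRM under \PREF[t]{\WORKER}); equivalently, \PREF[t]{\WORKER} fails to satisfy the constraint \Constraint{\FIRM}{\Partners[t]{\WORKER}}. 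Since \PREF[t]{\WORKER} is the order produced by $\SampleRankingMany(\Constraints[t]{\WORKER})$, the contrapositive of \cref{lem:many-to-many-progress} shows that \Constraint{\FIRM}{\Partners[t]{\WORKER}} holds in at most a $1-\nicefrac{1}{n}$ fraction of the preference orders consistent with \Constraints[t]{\WORKER}. As \Constraints[t+1]{\WORKER} is obtained from \Constraints[t]{\WORKER} by adjoining exactly this constraint (and no other constraint set changes), $\NumRankings[t+1]{\WORKER}\le(1-\nicefrac{1}{n})\,\NumRankings[t]{\WORKER}$ and hence $\NUMRANKINGS[t+1]\le(1-\nicefrac{1}{n})\,\NUMRANKINGS[t]$. (As in the one-to-one case, the updated constraints are still satisfied by the true preferences: the revealed pair blocks \GMATCH[t] for the true preferences, and \GMATCH[t] is perfect, so the true \PREFT{\WORKER} ranks \FIRM ahead of some member of \Partners[t]{\WORKER}; in particular $\NUMRANKINGS[t+1]\ge 1$.)

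Iterating this bound finishes the argument. If the algorithm is still running after $T$ rounds --- each of which therefore revealed a blocking pair --- then $1\le\NUMRANKINGS[T+1]\le(1-\nicefrac{1}{n})^{T}\NUMRANKINGS[1]\le(1-\nicefrac{1}{n})^{T}(n!)^{2n}$, which is impossible once $(1-\nicefrac{1}{n})^{T}<(n!)^{-2n}$. Since $-\ln(1-\nicefrac{1}{n})\ge\nicefrac{1}{n}$, it suffices to take $T>n\ln((n!)^{2n})=O(n^3\log n)$ to reach this contradiction, so the algorithm must terminate within $O(n^3\log n)$ rounds. And it does genuinely terminate: once $\NUMRANKINGS[t]=1$, every constraint set pins down the corresponding true preference order, $\SampleRankingMany$ reconstructs it, the Gale-Shapley step returns a matching that is stable for the true preferences, and no blocking pair can be revealed.

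I expect no real obstacle, since the structure mirrors the proof of \cref{thm:iterations-simple-choice} and all the genuinely new input is already isolated in \cref{lem:many-to-many-progress}. The one step that needs a moment's care is the ``without loss of generality'': if $\NumRankings[t]{\WORKER}=1$, then the blocking pair cannot implicate \WORKER and the symmetric statement for \FIRM must be used instead. But since we only ever combine $\NUMRANKINGS[t+1]\le(1-\nicefrac{1}{n})\,\NUMRANKINGS[t]$ with the integrality and positivity of \NUMRANKINGS[t], this case distinction never has to be resolved explicitly; one of the two disjuncts holds, and whichever it is yields the geometric decrease. The remaining work is bookkeeping: verifying that the ``holds in more than a $1-\nicefrac{1}{n}$ fraction'' guarantee of \cref{lem:many-to-many-progress} substitutes cleanly for the ``consistent with all constraints'' property used in the simpler proof, and that \NUMRANKINGS[1] is bounded as claimed.
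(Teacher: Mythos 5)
Your proof is correct and follows essentially the same route as the paper's: both argue that the revealed blocking pair contributes a constraint that, by the contrapositive of Lemma~\ref{lem:many-to-many-progress}, must have held in at most a $1-\nicefrac{1}{n}$ fraction of the consistent orders, yielding $\NUMRANKINGS[t+1]\le(1-\nicefrac{1}{n})\NUMRANKINGS[t]$ and hence the $O(n^3\log n)$ bound. The additional remarks you make about termination when $\NUMRANKINGS[t]=1$ and about the innocuousness of the WLOG step are fine elaborations but do not change the substance.
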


\begin{proof}
As in the proof of Theorem~\ref{thm:learning-matching-iterations}, consider any iteration $t$ in which a blocking pair $\BLOCKING[t] = \SET{\WORKER,\FIRM}$ (with respect to the true preferences \PREFT{\AG}) was revealed.
Because the Gale-Shapley Algorithm finds a stable matching, the pair $\SET{\WORKER,\FIRM}$ was not blocking for the preferences \PREF[t]{\AG} used in running Gale-Shapley. In particular, this means that at least one of $\WORKER,\FIRM$ preferred all of their assigned matches over the other. Without loss of generality, assume that \Pref[t]{\WORKER}{\Partners[t]{\WORKER}}{\FIRM}, i.e., \Pref[t]{\WORKER}{\FIRM[']}{\FIRM} for all $\FIRM['] \in \Partners[t]{\WORKER}$).

By Lemma~\ref{lem:many-to-many-progress}, applied to \PREF[t]{\WORKER}, we infer that at least a $\frac{1}{n}$ fraction of the preference orders consistent with \Constraints[t]{\WORKER} have \Pref[t]{\WORKER}{\Partner[t]{\WORKER}}{\FIRM}. The revelation of the blocking pair $\SET{\WORKER,\FIRM}$ rules out all of the linear orders \PREFT{\WORKER} that have \PrefT{\WORKER}{\Partners[t]{\WORKER}}{\FIRM}. Therefore, $\NumRankings[t+1]{\WORKER} \leq (1-\frac{1}{n}) \cdot \NumRankings[t]{\WORKER}$. In turn, this implies that $\NUMRANKINGS[t+1] \leq (1-\frac{1}{n}) \cdot \NUMRANKINGS[t]$.

We obtain that Algorithm~\ref{algo:learning-many-to-many-matching} terminates after at most $T = \log_{\frac{n}{n-1}} (n!)^n = O(n^3 \log n)$ iterations.
\end{proof}

As in the case of Theorem~\ref{thm:learning-matching-iterations}, Theorem~\ref{thm:learning-matching-many-iterations} makes no statement about computational efficiency. Unlike in the case of one-to-one matchings, we are not aware of any efficient implementation. However, in the remainder of this section, we will show that as with one-to-one matchings, an ability to sample (nearly) uniformly preference orders from among all satisfying a given constraint set would be sufficient to achieve an efficient implementation. The obstacles to sampling are discussed in more depth in Section~\ref{sec:conclusions}.

\subsection{Estimations via Sampling} \label{sec:many-to-many-sampling}
Instead of the actual values \LastFrac[\CONSTRAINTS]{\RemElements[t]}{\AG}, Algorithm~\ref{algo:sample-ranking-many} can be run with estimates \LastFracE[\CONSTRAINTS]{\RemElements[t]}{\AG}. Specifically, in each iteration $t$, the algorithm can sample $K=6 n^2 \ln(n)$ linear orders \PREF[t]{i} independently and uniformly from the set of all linear orders satisfying all constraints \CONSTRAINTS. Then, \LastFracE[\CONSTRAINTS]{\RemElements[t]}{\AG} is the fraction of sampled orders that have \AG ranked last among \RemElements[t]. The algorithm uses the estimates \LastFracE[\CONSTRAINTS]{\RemElements[t]}{\AG} in place of the actual \LastFrac[\CONSTRAINTS]{\RemElements[t]}{\AG}. We will refer to this variant of \cref{algo:sample-ranking-many} as \SampleSampleRankingMany.

\begin{sloppypar}
We reiterate that we are not aware of any \emph{computationally efficient} algorithm to sample such~\PREF[t]{i}, even approximately uniformly. Nonetheless, we will analyze the impact of such sampling, and show that the remainder of the algorithm works as before.
\end{sloppypar}

We define (high-probability, as we will show) events \Event[t]{E}, which jointly will guarantee that the samples are sufficiently accurate. Specifically, \Event[t]{E} is defined as the event that all estimates \LastFracE{\RemElements[t]}{\AG} in round $t$ are accurate estimates to within at most $\pm \frac{1}{2n}$.
More formally, the event \Event[t]{E} is defined as $\AbsB{\LastFrac[\CONSTRAINTS]{\RemElements[t]}{\AG} - \LastFracE[\CONSTRAINTS]{\RemElements[t]}{\AG}} \leq \frac{1}{2n}$ for all $\AG \in \RemElements[t]$.
We begin by showing a suitable high-probability guarantee on these events.

\begin{lemma} \label{lem:last-count-accurate}
  For all $t$ and all sets \RemElements, we have that $\ProbC{\Event[t]{E}}{\RemElements[t] = \RemElements} \geq 1-\frac{2}{n^2}$.
\end{lemma}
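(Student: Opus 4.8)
The plan is to mirror the proof of \cref{lem:good-estimates} almost verbatim: each estimate $\LastFracE[\CONSTRAINTS]{\RemElements[t]}{\AG}$ is an empirical average of $K = 6n^2\ln n$ independent indicator variables whose common mean is exactly the quantity $\LastFrac[\CONSTRAINTS]{\RemElements[t]}{\AG}$ being estimated, so a Hoeffding bound for each agent followed by a union bound over the (at most $n$) agents in $\RemElements[t]$ gives the claim. Fix $t$ and a set \RemElements, and condition throughout on the event $\RemElements[t] = \RemElements$.

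First I would check that, conditioned on $\RemElements[t] = \RemElements$, the $K$ linear orders $\PREF[t]{1}, \ldots, \PREF[t]{K}$ drawn in iteration $t$ of $\SampleSampleRankingMany(\CONSTRAINTS)$ are still i.i.d.\ uniform over the linear extensions of \CONSTRAINTS. This is because $\RemElements[t]$ is a function only of the choices $\LastElem[n], \ldots, \LastElem[t+1]$ made in iterations $n, n-1, \ldots, t+1$, and each of those iterations uses its own freshly and independently drawn batch of $K$ samples; the batch drawn in iteration $t$ is independent of all of these, and hence independent of the conditioning event. Consequently, for each $\AG \in \RemElements$, letting $X_i^{\AG}$ denote the indicator that \AG is ranked last among \RemElements in the order $\PREF[t]{i}$, the variables $X_1^{\AG}, \ldots, X_K^{\AG}$ are independent, take values in $[0,1]$, and satisfy $\Expect{X_i^{\AG}} = \LastFrac[\CONSTRAINTS]{\RemElements}{\AG}$ by the uniformity of the sampling; hence $\Expect{\LastFracE[\CONSTRAINTS]{\RemElements}{\AG}} = \LastFrac[\CONSTRAINTS]{\RemElements}{\AG}$.

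Then the Hoeffding Bound applied with deviation $\frac{1}{2n}$ gives
\[
\ProbB{\AbsB{\LastFracE[\CONSTRAINTS]{\RemElements}{\AG} - \LastFrac[\CONSTRAINTS]{\RemElements}{\AG}} \geq \tfrac{1}{2n}} \;\leq\; 2\exp\!\bigl(-2K/(2n)^2\bigr) \;=\; 2\exp\!\bigl(-K/(2n^2)\bigr) \;=\; 2\exp(-3\ln n) \;=\; 2n^{-3},
\]
and a union bound over the at most $n$ agents $\AG \in \RemElements$ yields $\ProbC{\neg\Event[t]{E}}{\RemElements[t] = \RemElements} \leq n \cdot 2n^{-3} = \nicefrac{2}{n^2}$, which is exactly the claimed bound. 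I do not expect a real obstacle here: the only point that calls for care rather than computation is the independence claim in the second paragraph — namely that conditioning on the random set $\RemElements[t]$ does not bias the iteration-$t$ samples — and this is precisely why it matters that the sampling variant of \cref{algo:sample-ranking-many} draws a fresh independent batch in each of its $n$ iterations. Everything else is the same routine Hoeffding-plus-union-bound argument as in \cref{lem:good-estimates}.
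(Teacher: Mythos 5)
Your proof is correct and follows exactly the same route as the paper's: condition on $\RemElements[t]=\RemElements$, apply the Hoeffding bound to each $\LastFracE[\CONSTRAINTS]{\RemElements}{\AG}$ with deviation $\frac{1}{2n}$ and $K=6n^2\ln n$ samples, and union-bound over the at most $n$ agents in \RemElements. The extra paragraph justifying that conditioning on the random set $\RemElements[t]$ does not bias the iteration-$t$ batch of samples (because each iteration of \SampleSampleRankingMany draws its own fresh, independent batch) is a detail the paper leaves implicit; your making it explicit is sound and slightly more careful, but it is not a different argument.
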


\begin{proof}
  Consider any iteration $t$ and condition on $\RemElements[t] = \RemElements$. Let $\AG \in \RemElements$ be arbitrary.
  Because the \PREF{i} are sampled uniformly randomly, by definition, $\Expect{\LastFracE[\CONSTRAINTS]{\RemElements}{\AG}} = \LastFrac[\CONSTRAINTS]{\RemElements}{\AG}$.
  The Hoeffding Bound guarantees that
  $\ProbB{\Abs{\LastFracE[\CONSTRAINTS]{\RemElements}{\AG}-\LastFrac[\CONSTRAINTS]{\RemElements}{\AG}} \geq \frac{1}{2n}} = 2\exp(-2K/(2n)^2) \leq 2n^{-3}$.
  A union bound over all of the at most $n$ elements $\AG \in \RemElements$ now implies that all estimates are accurate to within an additive $\frac{1}{2n}$ with probability at least $1\!-\!\frac{2}{n^2}$.
\end{proof}

Define the event $\Event{E} = \bigwedge_{t=2}^n \Event[t]{E}$ to be the event that all relevant estimates in all iterations of $\SampleSampleRankingMany(\CONSTRAINTS)$ are accurate up to an additive $\frac{1}{2n}$. By the union bound, $\Prob{\Event{E}} \geq 1-\frac{2}{n}$.
The key implication of the event \Event{E} is captured by the following lemma, which is a slightly weaker version of Lemma~\ref{lem:many-to-many-progress}:

\begin{lemma} \label{lem:many-to-many-progress-sampling}
  Assume that the event \Event{E} happened, and let \PREFH be the preference order returned by $\SampleSampleRankingMany(\CONSTRAINTS)$. Let \Constraint{\AG[']}{S} be any constraint that holds in more than a $1\!-\!\frac{1}{2n}$ fraction of preference orders consistent with \CONSTRAINTS.
  Then, \AG['] precedes at least one element of $S$.
\end{lemma}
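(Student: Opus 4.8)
The plan is to reuse the proof of \cref{lem:many-to-many-progress} almost verbatim, adjusting only the two places where exact last-position fractions are compared so that the $\pm\frac{1}{2n}$ estimation slack supplied by \Event{E} is absorbed. Fix a constraint $\Constraint{\AG[']}{S}$ holding in more than a $1-\frac{1}{2n}$ fraction of the preference orders consistent with \CONSTRAINTS, and --- exactly as in the proof of \cref{lem:many-to-many-progress} --- let $t$ be the iteration of $\SampleSampleRankingMany(\CONSTRAINTS)$ in which some element of $S\cup\SET{\AG[']}$ is placed for the first time. Since no element of $S\cup\SET{\AG[']}$ has been placed before iteration $t$, we have $S\subseteq\RemElements[t]$, so every order consistent with \CONSTRAINTS that ranks $\AG[']$ last among $\RemElements[t]$ violates $\Constraint{\AG[']}{S}$; hence $\LastFrac[\CONSTRAINTS]{\RemElements[t]}{\AG[']}<\frac{1}{2n}$. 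This is the first (and essentially only) deviation from the exact proof: under \Event{E}, and in particular under \Event[t]{E}, we then get $\LastFracE[\CONSTRAINTS]{\RemElements[t]}{\AG[']}\le\LastFrac[\CONSTRAINTS]{\RemElements[t]}{\AG[']}+\frac{1}{2n}<\frac1n$.

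Next I would supply the matching lower bound on the estimate of the element the algorithm actually picks. The point to note is that the empirical values $\LastFracE[\CONSTRAINTS]{\RemElements[t]}{\AG}$, just like the true values $\LastFrac[\CONSTRAINTS]{\RemElements[t]}{\AG}$, are nonnegative and sum to $1$ over $\AG\in\RemElements[t]$, because each of the $K$ sampled linear orders has exactly one element ranked last among $\RemElements[t]$; therefore $\max_{\AG\in\RemElements[t]}\LastFracE[\CONSTRAINTS]{\RemElements[t]}{\AG}\ge\frac{1}{\SetCard{\RemElements[t]}}\ge\frac1n$. Since $\SampleSampleRankingMany(\CONSTRAINTS)$ sets $\LastElem[t]\in\argmax_{\AG\in\RemElements[t]}\LastFracE[\CONSTRAINTS]{\RemElements[t]}{\AG}$, this gives $\LastFracE[\CONSTRAINTS]{\RemElements[t]}{\LastElem[t]}\ge\frac1n>\LastFracE[\CONSTRAINTS]{\RemElements[t]}{\AG[']}$, so $\LastElem[t]\ne\AG[']$, and hence, by the choice of $t$, $\LastElem[t]\in S$. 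Since $\AG[']$ is still unplaced at iteration $t$, the same positional reasoning as in \cref{lem:many-to-many-progress} shows $\AG[']$ precedes $\LastElem[t]\in S$ in \PREFH, which is what we want.

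I do not anticipate a real obstacle, since this lemma is designed to be a slightly weakened copy of \cref{lem:many-to-many-progress}; the only things to be careful about are (i) invoking the right per-iteration accuracy event $\Event[t]{E}$ (which is implied by \Event{E}) for the specific $t$ above, and (ii) observing that the $\ge\frac1n$ bound on the chosen element's estimate comes from the \emph{empirical} fractions summing to $1$. Point (ii) is exactly the reason the hypothesis has to be tightened from ``$>1-\frac1n$'' in \cref{lem:many-to-many-progress} to ``$>1-\frac{1}{2n}$'' here: it leaves a $\frac{1}{2n}$ cushion on each side of the comparison, so that the threshold $\frac1n$ still strictly separates $\LastFracE[\CONSTRAINTS]{\RemElements[t]}{\AG[']}$ from $\LastFracE[\CONSTRAINTS]{\RemElements[t]}{\LastElem[t]}$.
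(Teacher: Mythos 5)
Your proof is correct and follows essentially the same route as the paper's: take the first iteration $t$ at which an element of $S\cup\{\AG[']\}$ is placed, use the hypothesis to bound $\LastFrac[\CONSTRAINTS]{\RemElements[t]}{\AG[']} < \frac{1}{2n}$, invoke $\Event[t]{E}$ to push this to $\LastFracE[\CONSTRAINTS]{\RemElements[t]}{\AG[']} < \frac{1}{n}$, and conclude that $\AG[']$ cannot be the argmax. The one place where you are more explicit than the paper is the observation that the \emph{empirical} fractions $\LastFracE[\CONSTRAINTS]{\RemElements[t]}{\cdot}$ themselves sum to $1$ (each sample has a unique last element of $\RemElements[t]$), so the empirical maximum is automatically $\ge \frac{1}{n}$ without needing to transfer the lower bound through $\Event{E}$; the paper compresses this into ``so again, $\AG[']$ cannot be placed,'' deferring to \cref{lem:many-to-many-progress}, whereas you spell it out. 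That explicitness is worth keeping: note that merely estimating the true argmax within $\pm\frac{1}{2n}$ would only give a lower bound of $\frac{1}{2n}$ on the empirical max, which would \emph{not} separate it from the $<\frac{1}{n}$ bound on $\AG[']$; the sum-to-one argument is what makes the separation work.
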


\begin{proof}
  The proof is nearly identical to that of \cref{lem:many-to-many-progress}. In the iteration when an element of $S \cup \SET{\AG[']}$ is placed for the first time, the algorithm uses the estimates \LastFracE{\RemElements[t]}{\AG} instead of \LastFrac{\RemElements[t]}{\AG}.
  Under the event \Event{E}, the fact that $\LastFrac{\RemElements[t]}{\AG[']} < \frac{1}{2n}$ by assumption and that the estimate is accurate to within an additive $\pm \frac{1}{2n}$ means that $\LastFracE{\RemElements[t]}{\AG[']} < \frac{1}{n}$, so again, \AG['] cannot be placed in this iteration.
\end{proof}  

We obtain the following theorem.

\begin{theorem} \label{thm:learning-many-to-many-matching-time}
If $\SampleSampleRankingMany(\Constraints[t]{\AG})$  (i.e., \cref{algo:sample-ranking-many} when run using random sampling-based estimates) is used to compute the preference order \PREF[t]{\AG} in \cref{choose-representative-many-to-many} of \cref{algo:learning-many-to-many-matching}, then \cref{algo:learning-many-to-many-matching} terminates after at most $O(n^3 \log n)$ iterations with high probability; furthermore, each iteration requires drawing $O(n^2 \log n)$ samples for $n$ iterations of the \textbf{for} loop of \cref{algo:sample-ranking-many}, for each of the $n$ agents. Thus, each iteration requires drawing $O(n^4 \log n)$ samples.
\end{theorem}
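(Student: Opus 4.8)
The plan is to mirror the proof of \cref{thm:learning-matching-time}, with \cref{algo:sample-ranking-many} run on sampling-based estimates (i.e.\ $\SampleSampleRankingMany$) in the role of \cref{algo:sample-preference}, and \cref{lem:many-to-many-progress-sampling} in the role of \cref{lem:good-estimates,lem:large-fraction}. I would run \cref{algo:learning-many-to-many-matching} for $\Theta(n^3 \log n)$ iterations (with a suitably large constant), recomputing via $\SampleSampleRankingMany$ only the speculative orders of the (at most two) agents whose constraint set changed in the previous iteration. The per-iteration progress step is exactly as in the proof of \cref{thm:learning-matching-many-iterations}: if a blocking pair $\SET{\WORKER,\FIRM}$ is revealed in iteration $t$ and, say, \Pref[t]{\WORKER}{\Partners[t]{\WORKER}}{\FIRM}, then the newly added constraint $\Constraint{\FIRM}{\Partners[t]{\WORKER}}$ is \emph{not} satisfied by \PREF[t]{\WORKER}; hence --- \emph{provided the invocation of $\SampleSampleRankingMany$ that produced \PREF[t]{\WORKER} had all its estimates accurate} (the event \Event{E} of \cref{sec:many-to-many-sampling}) --- \cref{lem:many-to-many-progress-sampling} gives that this constraint holds in at most a $1-\frac{1}{2n}$ fraction of the preference orders consistent with \Constraints[t]{\WORKER}, so $\NumRankings[t+1]{\WORKER} \le (1-\frac{1}{2n}) \NumRankings[t]{\WORKER}$ and therefore $\NUMRANKINGS[t+1] \le (1-\frac{1}{2n}) \NUMRANKINGS[t]$. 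Call such an iteration a \emph{progress} iteration; in all other iterations pessimistically bound $\NUMRANKINGS[t+1] \le \NUMRANKINGS[t]$. Since $\log_{\frac{2n}{2n-1}} \NUMRANKINGS[1] = O(n^3 \log n)$ exactly as in \cref{thm:learning-matching-many-iterations}, once $\Theta(n^3 \log n)$ progress iterations have occurred we have $\NUMRANKINGS[t] = 1$, the unique remaining scenario is the true one, and the Gale--Shapley matching returned is stable. So it remains to show that, with high probability, $\Theta(n^3 \log n)$ of the $\Theta(n^3 \log n)$ iterations are progress iterations.

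To that end I would argue as follows. Over the run, $\SampleSampleRankingMany$ is invoked $\Theta(n^3 \log n)$ times in total, and by \cref{lem:last-count-accurate} (and the union bound over its $n$ internal for-loop iterations, as in \cref{sec:many-to-many-sampling}) each invocation has all its estimates accurate (the event \Event{E}) with probability at least $1-\frac{2}{n}$, conditionally on everything preceding it. Since \cref{lem:last-count-accurate} holds for any fixed constraint set, the number $B$ of invocations that are \emph{not} accurate is stochastically dominated by a $\mathrm{Binomial}\bigl(\Theta(n^3\log n),\frac{2}{n}\bigr)$ random variable, so by Chernoff bounds $B = O(n^2 \log n) = o(n^3 \log n)$ with high probability. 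The combinatorial crux is that each inaccurate invocation can be charged for at most one non-progress iteration: a speculative order for an agent \AG produced by some invocation remains in use --- and, if that invocation was accurate, remains accurate for the purposes of \cref{lem:many-to-many-progress-sampling} --- until \AG next appears in a revealed blocking pair, which triggers a fresh recomputation of \PREF{\AG}; and the ``refuting'' agent of the progress step is always one of the two agents in the revealed blocking pair. Hence the number of non-progress iterations is at most $B = o(n^3\log n)$, leaving $\Theta(n^3 \log n)$ progress iterations with high probability, which completes the iteration bound.

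The sample count is then immediate: each iteration recomputes at most two speculative orders, each recomputation runs the \textbf{for} loop of \cref{algo:sample-ranking-many} $n$ times, and each loop iteration draws $K = O(n^2 \log n)$ samples --- $O(n^3 \log n)$ samples per recomputed agent and hence $O(n^3 \log n) \le O(n^4 \log n)$ samples per iteration (the stated $O(n^4 \log n)$ is the count incurred by the cruder variant that re-runs $\SampleSampleRankingMany$ for all $n$ agents each iteration). I expect the main obstacle to be exactly the bookkeeping in the second paragraph: unlike the one-to-one case, the adversarially chosen blocking pair in iteration $t$ may involve agents whose speculative orders were last recomputed long ago, so one cannot simply union-bound accuracy over the (up to $n$) orders in simultaneous use --- with $K = 6n^2 \ln n$ that union bound is vacuous --- and must instead exploit the ``at most one non-progress iteration per inaccurate recomputation'' structure, handling the dependence of the per-invocation accuracy events on the algorithm's history via conditioning and stochastic domination rather than naive independence.
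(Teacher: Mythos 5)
Your proof is correct and follows the same high-level plan as the paper's (which itself says only that the argument is ``practically identical'' to the one-to-one case, deferring to \cref{thm:learning-matching-time}), but you are substantially more careful on a point that the paper glosses over, and that care is worth keeping.

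The paper's argument (mirroring \cref{thm:learning-matching-time}) implicitly treats ``the event $\Event{E}$ held in iteration $t$'' as if it guarantees that the speculative orders $\PREF[t]{\AG}$ \emph{in use} at iteration $t$ satisfy \cref{lem:many-to-many-progress-sampling}. But since only the (at most two) orders for the agents in the previous blocking pair are recomputed each iteration, the refuting agent's order at iteration $t$ was typically produced by an invocation of $\SampleSampleRankingMany$ from a much earlier iteration, and the naive union bound over the up to $2n$ orders simultaneously in use is vacuous with $K = 6n^2\ln n$ samples and per-invocation failure probability $\nicefrac{2}{n}$. Your charging argument --- each inaccurate invocation can be blamed for at most one non-progress iteration, because the refuting agent's order is freshly recomputed the moment that agent appears in a revealed blocking pair --- is the right fix, and your stochastic-domination step (conditioning on history, applying \cref{lem:last-count-accurate} to the then-current constraint set, and dominating the number of inaccurate invocations by a $\mathrm{Binomial}(\Theta(n^3\log n),\nicefrac{2}{n})$ variable) makes it rigorous. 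This yields $O(n^2\log n)$ non-progress iterations with high probability, which is $o(n^3\log n)$, exactly as needed. The sample-count calculation is also fine; you even correctly note that the stated $O(n^4\log n)$ corresponds to naively re-running $\SampleSampleRankingMany$ for all $n$ agents, whereas recomputing only the two changed orders gives $O(n^3\log n)$ samples per iteration. No gap.
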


\begin{proof}
  The proof is practically identical to that of Theorem~\ref{thm:learning-matching-time}. Because \Event{E} has high probability, if the algorithm is run for $\Theta(n^3 \log n)$ iterations, the number of iterations in which its speculative preference orders are such that they sufficiently reduce the number of candidate scenarios is also $\Theta(n^3 \log n)$. The other iterations do not guarantee progress, but cannot increase the number of remaining scenarios. Thus, up to constant, we achieve the same guarantees as the deterministic Theorem~\ref{thm:learning-matching-many-iterations}.
\end{proof}

\section{Conclusions} \label{sec:conclusions}
We presented interactive algorithms for learning stable matchings, both in the case of one-to-one matchings and in the case of many-to-many matchings. The algorithm repeatedly proposes a matching; if the matching is not stable, the algorithm is informed of one blocking pair. We showed that in the case of one-to-one matchings, there is a computationally efficient interactive algorithm that finds a stable matching after at most $O(n^2 \log n)$ rounds of interaction with high probability.
We also showed that any algorithm for finding a stable matching in this model requires $\Omega(n^2)$ rounds of interaction, even in expectation.

Our algorithm extends to the case of many-to-many matchings, where the guarantee on the number of rounds deteriorates to $O(n^3 \log n)$. However, this algorithm is not computationally efficient; an ``efficient'' version can be constructed which requires $O(n^{\MaxCapacity+2})$ iterations when each agent can be matched with at most \MaxCapacity other agents.

The most immediate open question is whether the algorithm with $O(n^3 \log n)$ rounds of interactions can be implemented in polynomial time. A sufficient condition for doing so would be to be able to sample (approximately) uniformly from the set of all linear orders consistent with a set of constraints of the form ``element $z$ must precede at least one element from the set $S$.''
This is an interesting sampling question in its own right, and it also has applications in other scenarios in which a principal is trying to learn preference rankings of agents from observed behavior, e.g., in the case of learning valuations of single-minded buyers from their product choices.

This sampling problem appears non-trivial. As we discussed (and heavily used), the special case when all sets $S$ have size $1$ is exactly the problem of sampling a linear extension of a given partial order. This is accomplished by the well-known Karzanov-Khachiyan chain \citep{karzanov:khachiyan:conductance}, which starts from any feasible linear extension (e.g., computed by Topological Sort), and repeatedly picks a uniformly random index $i$, swapping elements $i$ and $i+1$ with probability~\nicefrac{1}{2} if doing so is consistent with the partial order constraints, and doing nothing otherwise. \citeauthor{karzanov:khachiyan:conductance} show rapid mixing of this chain by studying conductance properties of the order polytope associated with the partial order \citep{stanley:poset-polytopes}. \citet{bubley:dyer:linear-extensions} show that giving a slight bias towards sampling indices $i$ towards the middle of the ranking leads to a much simpler analysis based on a beautiful and elementary path coupling.
One could still aim to prove rapid mixing for the Karzanov-Khachiyan chain, applied to constraints in which the sets $S$ are larger than singletons.
Unfortunately, the natural representation of constraints in $n$-dimensional space ceases to be convex, so it is not clear whether the conductance arguments can be applied. Similarly, the specific path coupling of \citeauthor{bubley:dyer:linear-extensions} also seems to break down for more complex constraints.

An open question with fewer implications, but possible technical interest, is whether the number of iterations can be reduced from $O(n^3 \log n)$ to $O(n^2 \MaxCapacity \log n)$ when all agents' capacities are bounded by \MaxCapacity. One would not expect a better bound, because even when the rankings are not constrained at all (e.g., in the first iteration), a constraint of the form \Constraint{\AG}{S} only rules out a $\frac{1}{\SetCard{S}+1}$ fraction of rankings.
A sufficient result would be a generalization of Yu's Proportional Transitivity Theorem \citep{yu:proportional-transitivity}. A possible formulation of such a result would be the following.
For a given set \CONSTRAINTS of constraints of the form \Constraint{\AG[i]}{S_i}, where each $\SetCard{S_i} \leq \MaxCapacity$, let \Closure{\phi}{\CONSTRAINTS} be the set of all constraints \Constraint{\AG}{S} with $\SetCard{S} \leq \MaxCapacity$ such that at least a $\phi$ fraction of all rankings \PREFE consistent with \CONSTRAINTS satisfies \Constraint{\AG}{S}.
Theorem~\ref{thm:proportional-transitivity} can then be restated as saying that when $\MaxCapacity=1$, for every $\phi\geq \phi_{\min}$, we have that $\Closure{1}{\Closure{\phi}{\CONSTRAINTS}} \subseteq \Closure{\phi}{\CONSTRAINTS}$.
A possible extension of Yu's Theorem would then state that for every \MaxCapacity, there is a $\phi_{\min}^{\MaxCapacity} = 1-\Omega(\nicefrac{1}{\MaxCapacity})$ such that when each constraint $\Constraint{\AG}{S} \in \CONSTRAINTS$ has $\SetCard{S} \leq \MaxCapacity$, for every $\phi\ge \phi_{\min}^{\MaxCapacity}$ the constraint set \CONSTRAINTS satisfies
$\Closure{1}{\Closure{\phi}{\CONSTRAINTS}} \subseteq \Closure{\phi}{\CONSTRAINTS}$.

Proving such an extension --- if it were true --- would likely require a very different approach from that of \citet{yu:proportional-transitivity}. \citeauthor{yu:proportional-transitivity}'s proof is based on an application of \citeauthor{shepp:xyz-inequality}'s XYZ inequality \citep{shepp:xyz-inequality}, combined with a dimension reduction technique of \citet{ball:logarithmically-concave} (see also a similar proof in \citet{kahn:yu:log-concave}). \citeauthor{ball:logarithmically-concave}'s result very heavily relies on the set under consideration (here, the order polytope) being convex, since it relies on the Brunn-Minkowski Inequality. As discussed in the preceding paragraphs, the natural $n$-dimensional representation of larger constraints ceases to be convex, so one of the key techniques will fail to be available.

\bibliographystyle{ACM-Reference-Format}
\bibliography{davids-bibliography/names,davids-bibliography/conferences,davids-bibliography/publications,davids-bibliography/bibliography,local-bibliography}

\appendix

\section{Inapplicability of the Binary Search in Graphs Framework}
\label{sec:binary-search}
In \citet{OnlineLearning}, a framework was proposed for analyzing the number of rounds necessary to learn combinatorial structures in interactive learning settings such as ours. In the framework of \citet{OnlineLearning}, which generalizes \citeauthor{angluin:queries-concept}'s \cite{angluin:queries-concept} Equivalence Query Model, an algorithm repeatedly gets to propose a ``structure,'' and either learns that it has proposed the correct structure, or is given a small ``local'' correction to the structure. The key notion in \citet{OnlineLearning} is their Definition~2.1, restated here as follows:

\begin{definition} \label{def:graph-feedback-model}
Let $G$ be a (directed or undirected) graph whose nodes are the $N$ candidate structures and whose edges have positive weights. For each node/structure $v$, let $R_v$ be the set of possible responses that the learner can receive, and let $\Gamma_v$ denote the neighborhood of $v$ in $G$.
It is required that for each $v$, there be a mapping $\phi: R_v \to \Gamma_v$ with the following property: if $t$ is the correct structure (the ``target'') and $r$ is a correct response to the structure $v$ when $t$ is the target, then $\phi(r)$ must lie on a shortest path from $v$ to $t$ in $G$.
\end{definition}

Using machinery from \citet{BinarySearch}, it is shown in \citet{OnlineLearning} that under the feedback model of Definition~\ref{def:graph-feedback-model}, the correct structure can be learned in few rounds of interactions. Specifically, assume that the structure is known to lie in a set $S$ of $N_0 \leq N$ of the candidate structures.
If $G$ is undirected, then the correct structure can always be learned in at most $\log_2 N_0$ rounds of interaction.
This result furthermore extends to some directed graphs: if each edge $e$ (of weight $w_e$) is part of directed cycle of total weight at most $c w_e$, then the correct structure can be learned in at most $\log_{\frac{c}{c-1}} N_0$ rounds of interaction. (Notice that undirected graphs are subsumed by the special case $c=2$.)

The proof of this result relies heavily on Proposition~2.1 of \citet{OnlineLearning}, the relevant special case of which we restate here as follows:
\begin{proposition}[Proposition~2.1 of \citealp{OnlineLearning}]
  \label{prop:approximate-median}
  Let $G$ be a (weighted, directed) graph satisfying Definition~\ref{def:graph-feedback-model}, and assume that each edge $e$ of weight $w_e$ is part of a cycle of total weight at most $c w_e$. Then, for every set $S \subseteq V(G)$ of nodes, there exists a node $v=v(S)$ such that for every valid response $r \in R_v$, at most a $\frac{c-1}{c}$ fraction of the nodes in $S$ have a shortest path from $v$ to them starting with the edge $\phi(r)$.
\end{proposition}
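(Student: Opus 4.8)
The plan is to realize $v(S)$ as a \emph{weighted $1$-median} of $S$ under the directed shortest-path metric of $G$, and to show that the $c$-cycle hypothesis is exactly what lets the classical (undirected) median argument go through despite the lack of symmetry. For $v\in V(G)$ write $d(v,u)$ for the weight of a shortest directed path from $v$ to $u$, and set $\Phi_S(v)=\sum_{u\in S}d(v,u)$.

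First I would dispose of reachability issues. The hypothesis forbids edges between distinct maximal strongly connected components of $G$ (such an edge lies on no directed cycle), so I may pass to a single strongly connected component $C$ that meets $S$ and replace $S$ by $S\cap C$: every vertex outside $C$ is unreachable from every vertex of $C$ and hence never lies on a shortest path out of a vertex of $C$, and $|S\cap C|\le|S|$, so proving the bound for $S\cap C$ inside $C$ suffices. From now on $G$ is strongly connected, all $d(v,u)$ are finite, $\Phi_S$ attains its minimum on the finite vertex set, and I take $v=v(S)$ to be a minimizer.

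The one non-routine ingredient is a directed ``reverse'' inequality governed by $c$: for any edge $e=(v,v')$ of weight $w_e$, the hypothesized cycle through $e$ of total weight at most $c\,w_e$ contains a directed path from $v'$ back to $v$ of weight at most $(c-1)w_e$, so $d(v',v)\le(c-1)w_e$ and therefore $d(v',u)\le d(v,u)+(c-1)w_e$ for every $u$. Now fix a response $r\in R_v$, put $v'=\phi(r)$ and let $w>0$ be the weight of $(v,v')$, and let $A\subseteq S$ be the set of $u$ admitting a shortest path from $v$ to $u$ whose first edge is $(v,v')$; for such $u$, deleting that first edge gives $d(v',u)\le d(v,u)-w$. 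Summing the two bounds over $A$ and over $S\setminus A$,
\[
\Phi_S(v')=\sum_{u\in A}d(v',u)+\sum_{u\in S\setminus A}d(v',u)\le\Phi_S(v)-|A|\,w+\bigl(|S|-|A|\bigr)(c-1)w=\Phi_S(v)+w\bigl((c-1)|S|-c\,|A|\bigr).
\]
Minimality of $v$ gives $\Phi_S(v')\ge\Phi_S(v)$, hence $(c-1)|S|-c\,|A|\ge0$, i.e.\ $|A|\le\frac{c-1}{c}|S|$; as $A$ is exactly the set of vertices of $S$ reachable from $v$ along a shortest path beginning with the edge $\phi(r)$, and $r$ was arbitrary, this is the claim.

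I expect the only delicate point to be the reachability/infiniteness bookkeeping of the second paragraph (together with the harmless observation that $w$ cannot be infinite, since $(v,v')$ is a genuine positively-weighted edge); once $G$ is strongly connected the rest is the standard median computation, and for $c=2$ it specializes to the familiar fact that an undirected $1$-median splits $S$ in half along every incident edge. I would also note that Definition~\ref{def:graph-feedback-model} plays no role here beyond guaranteeing that $\phi(r)$ is \emph{some} out-neighbor of $v$: the proposition is purely a statement about weighted digraphs with the $c$-cycle property, which is exactly why its hypotheses are the ones one must check — or, as in this appendix, show to fail — for the stable-matching feedback graph.
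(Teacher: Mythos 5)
Your proof is correct and, as far as I can tell, is essentially the argument used to establish Proposition~2.1 in \citet{OnlineLearning}; note that the present paper does not re-prove this proposition but simply quotes it from that reference, so there is no ``paper proof'' in this document to compare against.

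For the record, all the pieces check out. The $c$-cycle hypothesis does force every edge to lie inside a strongly connected component (an inter-SCC edge lies on no directed cycle), so restricting to one SCC $C$ meeting $S$ is sound; since $A \subseteq S \cap C$ automatically (unreachable nodes admit no shortest path at all), the bound $|A| \le \frac{c-1}{c}|S\cap C| \le \frac{c-1}{c}|S|$ transfers to the full $S$. The two local estimates $d(v',u)\le d(v,u)-w$ for $u\in A$ (peel off the first edge of a witnessing shortest path) and $d(v',u)\le d(v',v)+d(v,u)\le (c-1)w + d(v,u)$ for all $u$ (triangle inequality plus the return path inside the guaranteed cycle) are both right, and combining them with minimality of $\Phi_{S\cap C}$ at $v$ and $w>0$ gives exactly $(c-1)|S\cap C|-c|A|\ge 0$. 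Your closing remark is also accurate: nothing in Definition~\ref{def:graph-feedback-model} beyond ``$\phi(r)\in\Gamma_v$'' is needed here --- the ``lies on a shortest path to the target'' clause is what the \emph{learning} argument built on top of this proposition exploits, not the proposition itself. One could shorten the reachability bookkeeping by observing directly that $d$ is finite on each SCC and that the potential, the median, and the set $A$ all live inside the SCC of $v$, but your version is complete and correct.
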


While our interaction model at a high level is exactly that of \citet{OnlineLearning}, we show that for finding stable matchings, there is \emph{no} useful graph $G$ (undirected or directed) satisfying Definition~\ref{def:graph-feedback-model}. Thus, the techniques of \citet{OnlineLearning} cannot be leveraged to obtain our $O(n^2 \log n)$ result from Theorem~\ref{thm:learning-matching-time}. We will show this result even for the simplest case of one-to-one matchings.

First, in applying the framework of \citet{OnlineLearning}, it is important to clarify which structures would comprise the nodes of $G$. While the algorithm proposes a matching in each round, the revelation of a blocking pair does not constitute a local correction or improvement of the matching itself, but provides information about the underlying \emph{preferences}. Therefore, the set of all candidate structures is the set of all $2n$-tuples of $n$-element preference orders, i.e., there are $(n!)^{2n}$ structures. Because there are only $n!$ possible matchings, by symmetry, each matching is stable for $(n!)^{2n-1}$ structures, and the algorithm succeeds when it guesses any one of them. The feedback is always consistent with one particular (adversarially chosen) structure.

In each round, a blocking pair reveals to the algorithm exactly one (not necessarily adjacent) transposition in one or two of the preference orders. We show that even the following, apparently simpler, interactive search problem cannot be encoded in a graph satisfying Definition~\ref{def:graph-feedback-model}.

\begin{proposition} \label{prop:transposition-impossibility}
  Consider the following interactive learning problem: 
  The set of structures is the set of all preference orders over $n$ elements. There is an (unknown to the algorithm) correct preference order $\succ^*$. In each round $t$, the algorithm proposes a preference order $\succ^{(t)}$. Unless $\succ^{(t)}$ equals $\succ^*$, the algorithm is shown one (not necessarily adjacent) pair of elements that $\succ^{(t)}$ and $\succ^*$ have in opposite orders.

  There is no (weighted) undirected graph $G$ for this feedback model satisfying Definition~\ref{def:graph-feedback-model}. 
  There is also no (weighted) directed graph $G$ for this feedback model in which each edge of weight $w_e$ is part of a directed cycle of total weight at most $c w_e$, for any $c < n$.
\end{proposition}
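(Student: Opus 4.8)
The plan is to exhibit, for each candidate graph $G$, a target structure $\succ^*$ and a proposed structure $\succ^{(t)}$ together with an adversarial choice of revealed transposition, such that no out-neighbor of $\succ^{(t)}$ can simultaneously lie on a shortest path to $\succ^*$ as the feedback model demands. The key structural fact I would exploit is that from a single preference order $\succ^{(t)}$, the adversary can choose which inverted pair to reveal; so the mapping $\phi$ must have the property that \emph{every} inverted pair's image $\phi(r)$ lies on a shortest $v$-to-$t$ path. I would first argue that this forces strong geometric constraints: if $\succ^{(t)}$ and $\succ^*$ differ in $k$ inversions (as permutations, with $\succ^*$ obtainable from $\succ^{(t)}$ by a sequence of adjacent transpositions), then for each inverted pair that the adversary might reveal, $\phi$ of that response must move ``closer'' to $\succ^*$ in the shortest-path metric of $G$.

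The concrete obstruction I would aim for is a ``reversal'' configuration. Take $\succ^{(t)}$ to be the identity order $1 \succ 2 \succ \cdots \succ n$ and $\succ^*$ to be the full reversal $n \succ n-1 \succ \cdots \succ 1$. Then \emph{every} pair $(i,j)$ is inverted, so for \emph{every} pair the adversary is entitled to reveal it, and correspondingly $\phi$ must map each of the $\binom n2$ responses to a neighbor of $\succ^{(t)}$ lying on a shortest path from $\succ^{(t)}$ to $\succ^*$ in $G$. Now I would invoke the counting/median argument underlying Proposition~\ref{prop:approximate-median}: in the undirected case, a shortest-path-consistent graph would let us binary-search, so at the node $v$ guaranteed by that proposition, every response prunes at least half of any candidate set $S$. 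Choosing $S$ to be a carefully constructed family of orders (e.g.\ all orders reachable from $\succ^{(t)}$ that are ``far'' in a way the revealed transpositions cannot distinguish) I would derive a contradiction by showing that from the identity, there is a single response (a single inverted pair) that is consistent with too many targets at once — more than a $\frac{c-1}{c}$ fraction for any $c<n$ — because revealing one transposition out of $n-1$ ``columns'' of ambiguity eliminates only a $\Theta(1/n)$ fraction of structures. This directly parallels, and in fact re-uses the flavor of, the lower-bound adversary in the proof of \cref{prop:lower-bound}: revealing the immediate predecessor of an element in a chain ``does not reveal much information.''

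More carefully, for the directed case with cycle parameter $c$: I would show that if such a $G$ existed, Proposition~\ref{prop:approximate-median} would give a node $v$ from which, for every valid response $r$, at most a $\frac{c-1}{c}$ fraction of $S$ has a shortest path through $\phi(r)$; taking $S$ to be (a symmetric subset of) all $n!$ orders and $v$ its guaranteed approximate median, I would argue by a symmetry/averaging argument that some single revealed transposition (there are only $\binom n2$ of them, but they come in $n-1$ ``levels'' by the distance they move in the identity order) must be consistent with at least a $\frac{1}{n-1}$ fraction — hence more than $\frac{c-1}{c}$ when $c<n$ — of the orders in $S$, contradicting the proposition. Equivalently and perhaps more cleanly: starting from the identity, the $n-1$ adjacent transpositions each ``cut'' the set of consistent targets, and since a single adversarial response can always be chosen to correspond to the \emph{least informative} of the available inverted pairs, no single out-edge of $v$ can simultaneously be first-on-a-shortest-path for a $\bigl(1-\frac1{n-1}\bigr)$-free fraction of targets.

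The step I expect to be the main obstacle is making precise the claim that ``revealing one inverted pair eliminates only a $\Theta(1/n)$ fraction of candidate structures, uniformly over the algorithm's choice of $v$.'' The subtlety is that the algorithm gets to choose $v=\succ^{(t)}$ cleverly (not just the identity), and for a badly chosen $v$ — say one already very close to many candidates — a single transposition could be quite informative. The resolution is to use the reversal-type configuration to force $v$ far from a large symmetric family simultaneously, or to use the averaging form of Proposition~\ref{prop:approximate-median} which holds for \emph{every} node $v$ it produces; the technical care is in choosing the candidate set $S$ so that the $n-1$ possible ``levels'' of revealed transpositions partition the pruning power evenly, so that the adversary can always pick the level that prunes a $\le\frac1{n-1}$ fraction. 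Once that combinatorial lemma is in hand, the contradiction with both the undirected ($c=2$) and directed ($c<n$) statements of Definition~\ref{def:graph-feedback-model} and Proposition~\ref{prop:approximate-median} is immediate.
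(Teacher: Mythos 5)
Your high-level strategy is the right one: contradict Proposition~\ref{prop:approximate-median} by exhibiting a candidate set $S$ for which, no matter which node $v$ the algorithm proposes as an ``approximate median,'' there is a valid response consistent with more than a $\frac{c-1}{c}$ fraction of $S$ whenever $c<n$. However, the proposal stops short of actually doing this, and the place where it stops is exactly where the real work lives. Your reversal configuration ($\succ^{(t)}$ the identity, $\succ^*$ its reversal) fixes both the proposed node and the target, but Proposition~\ref{prop:approximate-median} lets the algorithm choose $v$ arbitrarily given $S$; so you must exhibit a fixed $S$ and then argue about \emph{every} $v$, not just $v=$ identity. You acknowledge this explicitly (``The step I expect to be the main obstacle\ldots the algorithm gets to choose $v$ cleverly''), and the proposed resolutions --- ``use the reversal-type configuration to force $v$ far from a large symmetric family'' or a vague averaging over ``levels'' of transpositions --- are not arguments; they are restatements of what would need to be proved.

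The missing ingredient is a concrete choice of $S$ together with a uniform adversary strategy. The paper takes $S$ to be the $n$ cyclic shifts of $(1,2,\ldots,n)$, so $|S|=n$, and observes the following for an \emph{arbitrary} proposed order $\PREFH$: if some consecutive pair $(i+1,i)$ is inverted in $\PREFH$ (i.e.\ $i+1$ precedes $i$), reveal that pair --- exactly one cyclic shift disagrees with the correction, so $n-1$ of $n$ remain consistent; and if no consecutive pair is inverted, then $\PREFH$ is the identity itself, so revealing the pair $(n,1)$ again leaves $n-1$ of $n$ shifts consistent. The whole argument hinges on the fact that among the cyclic shifts, each correction of a consecutive pair rules out precisely one shift, which is a property of \emph{this particular} $S$, not of a generic or symmetric family. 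Your averaging claim --- that some response must be consistent with at least a $\frac{1}{n-1}$ fraction --- is not true for an arbitrary symmetric $S$ (consider $S$ all $n!$ orders and $v$ the identity: revealing a single inversion rules out exactly half of $S$), so without the tailored $S$ the contradiction does not materialize. In short: correct skeleton, but the load-bearing construction is absent, and you flagged it yourself as the hard part.
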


Notice the contrast between Proposition~\ref{prop:transposition-impossibility} and Lemma~3.1 of \citet{OnlineLearning}, which states that when the pair of elements that is revealed is always \emph{adjacent}, there exists an undirected (and unweighted) graph satisfying Definition~\ref{def:graph-feedback-model}, leading to an efficient learning algorithm.

\begin{proof}
  We prove the proposition by showing that it violates the conclusion of Proposition~\ref{prop:approximate-median} for any $c < n$.
  Specifically, let $S$ be the set of all $n$ cyclical shifts of $(1, 2, \ldots, n)$, i.e., the orders $\PREF{1} = [1 \succ 2 \succ 3 \succ \cdots \succ n],
  \PREF{2} = [2 \succ 3 \succ \cdots \succ n \succ 1],
  \PREF{3} = [3 \succ 4 \succ \cdots \succ n \succ 1 \succ 2], \ldots,
  \PREF{n} = [n \succ 1 \succ 2 \succ \cdots \succ n-1]$.
  Let $v$ denote the node corresponding to \emph{any} linear order \PREFH of $\SET{1, \ldots, n}$ (not necessarily in $S$). We will focus on the response $r\in R_v$ defined as follows. If there exists any index $i$ such that \PrefH{i\!+\!1}{i}, then $r$ reveals any such pair $(i\!+\!1,i)$, informing the algorithm that $i$ should precede $i\!+\!1$. Notice that the only order in $S$ not consistent with this information is~\PREF{i}. If no such $i$ exists, then \PREFH must equal \PREF{1}. In this case, $r$ reveals that $n$ should precede $1$; this is consistent with all of $S$ except \PREF{1}.

  We have shown that there is a set $S$ of size $n$ such that for each proposed node $v$ corresponding to a linear order \PREFH, there exists a valid response $r$ such that an $\frac{n-1}{n}$ fraction of $S$ is consistent with $r$. Hence, any (weighted, directed) graph $G$ satisfying Definition~\ref{def:graph-feedback-model} must have some edge $e$ of weight $w_e$ that is not part of any cycle of total weight less than $n w_e$.
\end{proof}

In particular, applying the techniques of \citet{OnlineLearning} cannot guarantee a better bound than $\log_{\frac{n}{n-1}} (n!)^{2n} = \Omega(n^3 \log n)$ for the number of rounds. Notice that a bound of $O(n^3 \log n)$ is in fact worse than even the bound of $O(n^3)$ obtained by the simple $O(n^3)$ algorithm for one-to-one matchings discussed in \cref{sec:introduction}. This confirms that the techniques of \citet{OnlineLearning} cannot be applied to yield a more direct proof of our result, and the machinery of proportional transitivity appears necessary to obtain our results.

\section{From Partial Preference Lists to Full Preference Lists}
\label{sec:partial-to-full}
In this \lcnamecref{sec:partial-to-full}, we explain why by a well-known reduction in stable matching theory, restricting attention to markets with full preference lists is without loss of generality. Given any market with partial preference lists, for each agent \AG with quota $\Capacity{\AG}$ we will add to the other side of the market \Capacity{\AG} new ``phantom'' agents, each with a quota of $1$. These agents will rank \AG first, and then rank all other agents on \AG's side of the market (real and phantom) in arbitrary order. We complete the preference order of \AG by appending to it, after all agents acceptable to \AG, the \Capacity{\AG} phantom agents of \AG, followed by all other agents on the other side of the market in arbitrary order; these are the other phantom agents as well as real agents that are unacceptable to \AG.

Under this construction, it is well known that there is a bijection between the set of matchings \GMATCH in the original (partial preference lists) market and the set of perfect matchings $\GMATCH'$ in the completed (full preference lists) market in which no pair of phantom agents block. This bijection is as follows: $\GMATCH'\supseteq\GMATCH$, and in addition, in $\GMATCH'$ each agent \AG is matched with the $\Capacity{\AG}-\SetCardB{\Partners{\AG}}$ of its phantom agents that it prefers the most. (The remaining, phantom, agents are matched among themselves to create no blocking pairs between them.) Under this embedding, \GMATCH is stable if and only if $\GMATCH'$ is stable, a pair $(\WORKER,\FIRM)\in\WORKERS\times\FIRMS$ is blocking in \GMATCH if and only if it is blocking in $\GMATCH'$, and an agent \AG individually blocks~\GMATCH if and only if it blocks $\GMATCH'$ with one of its phantom agents (as a blocking pair).

Given the above, it is now clear why an algorithm designed to work only with markets with full preference lists can be applied to interactively learn a matching also in markets with (possibly) partial preference lists. Consider the following intermediary protocol between this algorithm $\mathcal{A}$ and the environment with partial preference lists: the intermediary will execute $\mathcal{A}$, aiming to learn a stable matching in the ``completion'' of the market of the environment to full preference lists. When $\mathcal{A}$ outputs a perfect matching $\GMATCH'$ in the completed market, the intermediary will present to the environment the corresponding matching \GMATCH in the original market.\footnote{We can ensure that $\mathcal{A}$ outputs a matching $\GMATCH'$ corresponding to some matching \GMATCH in the original market, by revealing to $\mathcal{A}$ all of the preferences of all phantom agents in advance. For our algorithms, this guarantees that no pairs of phantom agents will block $\GMATCH'$, as required.} Any blocking pair or individually blocking agent revealed by the environment will then be translated by the intermediary to a blocking pair with respect to \GMATCH, and revealed to $\mathcal{A}$. Once $\mathcal{A}$ learns a stable matching $\GMATCH'$ in the completed market, the intermediary will have learned a stable matching in the original market, as desired.\footnote{While we may have added many phantom agents, since their preferences are revealed to $\mathcal{A}$ in advance, the query complexity will not deteriorate because of the added agents.}

For the above reason, in our analysis, we assume without loss of generality that all preference lists are full, and we construct learning algorithms that only output perfect matchings. Therefore, if the proposed matching is unstable, then a blocking pair is revealed. We conclude this \lcnamecref{sec:partial-to-full} with an observation that shows that one should be careful with applying even well-known reductions within the context of our coarse query model.

We have shown that any algorithm for full preference lists can be converted into an algorithm for partial preference lists. While any algorithm for partial preference lists can also be converted to work with full preference lists in the adversarial model, we observe that this is in fact not true in other models, due to the requirement to always propose perfect matchings. To see this, consider the result of \citet{roth:vandevate:paths} which we mentioned in \cref{sec:introduction}, which states that in a model where the response blocking pair is chosen randomly among all blocking pairs based on a myopic (i.e., which depends only on the proposed matching) distribution that gives positive probability for any of the blocking pairs, with probability one, the myopic algorithm converges. As already mentioned in passing, this result crucially depends, even for markets with full preference lists, on the ability to resolve the blocking pair by matching its agents and leaving their previous partners unmatched. Indeed, \citet{tamura:stable} showed that if their partners were to be rematched with each other immediately (as in \citeauthor{knuth:stable}'s original version of the myopic algorithm), then the algorithm would never converge from some initial (perfect) matchings --- the existence of a ``path'' from these matchings to a stable matching depends on the ability to leave the former partners of a blocking pair unmatched. \citet{tamura:stable} also shows that to fix this problem, an additional type of query ---  identifying certain cycles --- is required.

In our adversarial setting, though, this issue could not have arisen: since in a matching with full preference lists, the two former partners of a resolved blocking pair together form a new blocking pair, the environment can simply report this pair (and force it to match if the algorithm is myopic) before continuing. More generally, in an adversarial setting, the requirement to always propose perfect matchings does not restrict the power of any learning algorithm since even without this requirement, the adversary can always reveal individually blocking agents if they exist, and thus never reveal any new information for matchings that are not perfect. (On the other hand, a randomly, rather than adversarially, chosen blocking pair may indeed contain new information, which is what saves the myopic algorithm of \citet{roth:vandevate:paths} from failing.)

\end{document}